\pgfplotsset{compat=newest}
\newlist{steps}{enumerate}{1}
\setlist[steps, 1]{label = Step \arabic*:}
\newtheorem{theorem}{Theorem}[section]
\newtheorem{rem}[theorem]{Remark}
\newtheorem{prop}[theorem]{Proposition}
\theoremstyle{definition} 
\newtheorem{defn}[theorem]{Definition}
\newcounter{algsubstate}
\newenvironment{algsubstates}
  {\setcounter{algsubstate}{0}%
   \renewcommand{\State}{%
     \refstepcounter{algsubstate}%
     \Statex {\normalsize\arabic{ALG@line}.\arabic{algsubstate}.}\kern5pt}
     }
  {}
\newcommand{\ketbra}[2]{\ket{#1}\!\bra{#2}}
\newcommand{\vect}[1]{\mathbf{#1}}
\newcommand\norm[1]{\lVert#1\rVert}
\newcommand{\eps}{\epsilon} 
\newcommand{\freq}{\operatorname{freq}}
\newcommand{\term}[1]{\textup{\textbf{#1}}} 
\newcommand{\Renyi}{R\'{e}nyi}
\newcommand{\dtol}{\delta_\mathrm{tol}}
\title{Bounds on Petz-Rényi Divergences and their Applications for Device-Independent Cryptography}
\author[1]{Thomas A. Hahn}
\author[2]{Ernest Y.-Z. Tan}
\author[3]{Peter Brown}
\affil[1]{The Center for Quantum Science and Technology, Department of Physics of Complex Systems, Weizmann Institute of Science, Rehovot, Israel}
\affil[2]{
University of Waterloo, Waterloo, Ontario N2L 3G1, Canada}
\affil[3]{T\'{e}l\'{e}com Paris-LTCI, Institut Polytechnique de Paris, 19 Place Marguerite Perey, 91120 Palaiseau, France}
\date{}
\begin{document}

\maketitle

\begin{abstract}
    Variational techniques have been recently developed to find tighter bounds on the von Neumann entropy in a completely device-independent (DI) setting. This, in turn, has led to significantly improved key rates of DI protocols, in both the asymptotic limit as well as in the finite-size regime. In this paper, we discuss two approaches towards applying these variational methods for Petz-Rényi divergences instead. We then show how this can be used to further improve the finite-size key rate of DI protocols, utilizing a fully-Rényi entropy accumulation theorem developed in a partner work. Petz-Rényi divergences can also be applied to study DI advantage distillation, in which two-way communication is used to improve the noise tolerance of quantum key distribution (QKD) protocols. We implement these techniques to derive increased noise tolerances for DIQKD protocols, which surpass all previous known bounds.
\end{abstract}

\section{Introduction}
The goal of quantum key distribution (QKD) protocols is to generate a secure key that seems uniformly random to an eavesdropper, but is known to all honest parties that applied the protocol. Quantum entanglement appears to be particularly suitable for this task. Two parties, Alice and Bob, can measure shared, highly entangled states, such as Bell states, to produce correlated bits. At the same time, their outcomes will be largely unknown to an adversary. One potential method for Alice and Bob to verify they have shared entanglement is by checking whether their devices produced non-local output distributions, which violated a Bell inequality~\cite{10.1093/oso/9780198788416.001.0001,brunner2013bell}. 
Basing the security of a QKD protocol on such Bell violations lays the foundation for what is known as device-independent quantum key distribution (DIQKD)~\cite{pironio2009deviceindependent,scarani2013deviceindependent}. 

Bell inequalities depend solely on the measurement statistics, and DIQKD security proofs thus do not require any additional assumptions on the initial quantum state or measurement devices, other than that they act on separate (generally finite-dimensional) Hilbert spaces and do not secretly leak information~\cite{pironio2009deviceindependent}.\footnote{Alternatively to requiring separate Hilbert spaces, one could assume that the measurements simply commute~\cite{brown2023deviceindependent,navascues2008convergent,Pironio_2010}.} As such, DIQKD provides security guarantees under what is considered to be the weakest possible set of assumptions~\cite{Arnon-Friedman:2018aa,pironio2009deviceindependent}.

One of the strongest tools in device-independent cryptography are so-called entropy accumulation theorems (EATs)~\cite{DFR20,MFSR22}. They relate the smooth min-entropy that is amassed over all rounds of a protocol to single-round quantities. As the former quantity characterizes the finite-size secure key length in protocols such as quantum key distribution (QKD), while the latter is comparatively simple to analyze, this provides a convenient framework for DI security proofs~\cite{Arnon-Friedman:2018aa,Metger_2023}. In~\cite{DFR20,MFSR22} this single-round quantity is the von Neumann entropy, and a lot of work has gone into bounding it in a completely device-independent setting~\cite{pironio2009deviceindependent,Woodhead_2021,Sekatski_2021,Tan_2021,brown2021computing,masini2022simple,brown2023deviceindependent}. For general DIQKD set-ups, the tightest existing bounds are usually achieved by \cite{brown2023deviceindependent}, where they approximate the von Neumann entropy by a variational optimization problem. 

However, we advance the state of the art on finite-size security proofs in a partner work \cite{arx_AHT24}, where we prove a new type of EAT that is expected to generally yield tighter bounds on the overall accumulated entropy. In that result, the single-round quantity is related to {\Renyi} divergences, rather than the von Neumann entropy. As such, to analyze this {\Renyi} EAT for DIQKD protocols, one would ideally like to be able to generate variational bounds for {\Renyi} divergences, analogous to \cite{brown2023deviceindependent}. 

\begin{rem}
In this work we focused on discussing our final bounds in terms of the smooth min-entropy of the final state, in order to provide a clearer comparison to previous works. However, EAT-based approaches more fundamentally give a bound on the {\Renyi} entropy of that state as an intermediate step, and for the purposes of practical applications, better finite-size keyrates are obtained by directly combining that bound with a {\Renyi} privacy amplification theorem from~\cite{arx_Dup21}. Under that approach, our methods would yield security proofs entirely based on {\Renyi} entropies. This has the dual advantages of being simpler to implement (as it avoids introducing intermediate quantities that need to be optimized, such as smoothing parameters) and 
also yielding better keyrates. (The concept of fully-{\Renyi} security proofs has been previously considered in a framework termed \term{quantum probability estimation}~\cite{ZFK20}, but we believe that our bounds should be simpler to evaluate.)
\end{rem}

Furthermore, a drawback of DIQKD is that these protocols are less robust against noise~\cite{Tan_2020}. In device-dependent QKD, noise tolerances can be significantly increased by including advantage distillation, i.e. extra post-processing steps which rely on two-way communication~\cite{gottesman2002proof}. Advantage distillation has also been studied in the context of device-independent cryptography~\cite{Tan_2020,Hahn_2022,stasiuk2022quantum}, under the assumption of collective attacks. In \cite{stasiuk2022quantum}, they prove a security condition, that can be used to relate the Quantum Chernoff Coefficient to the achievable noise tolerances of DIQKD protocols which include advantage distillation. Although they show that advantage distillation also improves the noise robustness of DIQKD protocols, their results are most likely not optimal. This is because they could not tightly approximate the Quantum Chernoff Coefficient in a device-independent setting, and had to resort to using ``Fuchs-van de Graaf"-like inequalities, which relate this quantity to the trace distance~\cite{stasiuk2022quantum,Audenaert_2007}.  

In this work, we provide tight bounds on Petz-{\Renyi} divergences, using the techniques from \cite{brown2023deviceindependent}. That is, for all $m \in \mathbb{N}$, we construct polynomials, $P_{m}\ \text{and}\ Q_{m}$, such that 
\begin{eqnarray} \label{Eq: DivergenceBoundGoal}
    D_{\alpha} \left(\rho||\sigma\right) \leq  \sup_{Z} \left(\frac{1}{\alpha-1} \log \left( \operatorname{Tr}\left[\rho P_{m}(Z)\right]+\operatorname{Tr}\left[\sigma Q_{m}(Z)\right] 
\right)\right) \ \ \ \text{for} \ \alpha \in (0,1) \cup (1,2)\; ,
\end{eqnarray}
  whenever $\operatorname{Tr}\left[\rho\right]=1$.\footnote{Whenever this state is not normalized, one gains an additional term that only depends on $\operatorname{Tr}\left[\rho\right]$.} Moreover, this approximation can be made arbitrarily tight, as the right-hand-side converges to $D_{\alpha} \left(\rho||\sigma\right)$ for $m \to \infty$. We discuss two approaches towards constructing these polynomials in Section \ref{Section: TightBoundsPetzRenyiVariational}. It was first mentioned in \cite{brown2023deviceindependent} that their results, combined with L{\"o}wner's theorem, should be able to provide such bounds. Section \ref{SubSection: App2} analyzes this approach in full detail. In Section \ref{SubSection: App1}, we derive a different method towards achieving Eq.\ (\ref{Eq: DivergenceBoundGoal}). This latter approach is slightly more general, and the resulting polynomials possess the additional property of being positive semi-definite whenever $\alpha <1$, which those in Section \ref{SubSection: App2} do not generally have.\footnote{For $\alpha > 1$, $Q_m$ is negative semi-definite in both approaches. Moreover, for the latter approach, it is often possible to construct $P_m$ such that the terms which explicitly depend on the operator $Z$ are also negative semi-definite.}

In Section \ref{Section: GREATBounds}, we show how Eq.\ (\ref{Eq: DivergenceBoundGoal}) can be used to provide device-independent lower bounds on the single-round contribution from the {\Renyi} EAT. For the extended CHSH set-up~\cite{pironio2009deviceindependent}, we explicitly calculate the resulting bounds on the accumulated smooth min-entropy. These surpass the best previously known results achieved in~\cite{Liu_2021}.  In \cite{arx_AHT24}, we discuss why the {\Renyi} EAT is generally expected to provide tighter bounds, and our results show that this improvement also holds in a fully device-independent setting.

Eq.\ (\ref{Eq: DivergenceBoundGoal}) can also be used to derive improved noise tolerances for device-independent advantage distillation. In Section \ref{Section: AdvantageDistill}, we first prove a security condition that is based on the pretty good fidelity, and equivalent to the Quantum Chernoff Coefficient condition from \cite{stasiuk2022quantum}. The pretty good fidelity can be expressed as a function of $D_{\alpha} \left(\rho||\sigma\right)$, for $\alpha = 1/2$~\cite{Iten_2017}. As such, one can directly bound it using Eq.\ (\ref{Eq: DivergenceBoundGoal}), rather than resorting to the approach taken in \cite{stasiuk2022quantum}. We consider 
two explicit DIQKD set-ups and, in 
both cases, our noise tolerance bounds improve upon the results from \cite{stasiuk2022quantum}.

\section{Notation}
\label{sec:notation}

\begin{table}[h]
\caption{List of notation}\label{tab:notation}
\def\arraystretch{1.5} 
\setlength\tabcolsep{.28cm}
\begin{tabular}{c l}
\toprule
\textit{Symbol} & \textit{Definition} \\
\toprule
$\log$ & Base-$2$ logarithm \\
\hline
$\norm{\cdot}_p$ & Schatten $p$-norm \\
\hline
$A \perp B$ & A and B are orthogonal; $AB = BA = 0$ \\
\hline
$X\geq Y$ (resp.~$X>Y$) & $X-Y$ is positive semi-definite (resp.~positive definite)\\
\hline
$X \ll Y$ & $\ker(Y) \subseteq \ker(X)$ \\
\hline
$S_{=}(A)$ (resp.~$S_{\leq}(A)$) & Set of normalized (resp.~subnormalized) states on register $A$ \\
\hline
$\mathds{1}_A$ & Identity operator on register $A$ \\
\toprule
\end{tabular}
\def\arraystretch{1}
\end{table}
In this work, we mainly consider the smooth min-entropy, the Petz-{\Renyi} divergence, and the Petz-{\Renyi} entropy. We use the notation from~\cite{Tomamichel2015QuantumIP}.
\begin{defn}
For any bipartite, subnormalized density matrix, $\rho\in S_{\leq}(AB)$, the \term{min-entropy of $A$ conditioned on $B$} is given by
\begin{align}
H_{\operatorname{min}}(A|B)_\rho &:= 
-\log 
\min_{\substack{\sigma \in S_{\leq}(B) \ \text{s.t.}\\ \rho_B \ll \sigma_B}} 
\norm{\rho_{AB}^\frac{1}{2}
(\mathds{1}_A \otimes \sigma_{B})
^{-\frac{1}{2}}}_\infty^2 \; .
\end{align}
Moreover, given any $\eps\in \left[0,\sqrt{\operatorname{Tr}{\rho_{AB}}}\right)$,
the \term{$\eps$-smoothed min-entropy of $A$ conditioned on $B$} is defined as 
\begin{align}
H_{\operatorname{min}}^\eps(A|B)_\rho :=
\max_
{\substack{\tilde{\rho} \in S_{\leq}(AB) \ \text{s.t.} \\ P(\tilde{\rho},\rho)\leq\eps}}
H_{\operatorname{min}}(A|B)_{\tilde{\rho}} \; ,
\end{align}
where $P$ represents the purified distance from~\cite{Tomamichel2015QuantumIP}. 
\end{defn}

\begin{defn}
Given any two positive semi-definite operators $\rho , \sigma\in \text{Pos}(A)$ with $\operatorname{Tr}\left[\rho\right] > 0$, and ${\alpha\in(0,1)\cup (1,2)}$, the \term{Petz-{\Renyi} divergence} between $\rho$, $\sigma$ is given by:
\begin{align}
    D_\alpha(\rho||\sigma)=\begin{cases}
    \frac{1}{\alpha-1}\log\frac{\operatorname{Tr} \left[\rho^\alpha\sigma^{1-\alpha}\right]}{\operatorname{Tr}\left[\rho\right]} &\left(\alpha < 1\ \wedge\ \rho\not\perp\sigma\right)\vee \rho \ll \sigma \\
    +\infty & \text{otherwise} \; .
    \end{cases}  
\end{align}
\end{defn}

\begin{defn}
For any bipartite, normalized state $\rho\in
S_{=}(AB)
$, and $\alpha\in (0,1)\cup (1,2)$, the \term{Petz-{\Renyi} conditional entropy} is given by
\begin{align}
    \label{eq:cond_renyi}
    &H_\alpha(A|B)_\rho=-D_\alpha(\rho_{AB}||\mathds{1}_A\otimes\rho_B) \; .
\end{align}
\end{defn}
In Section \ref{Section: GREATBounds}, we require the following definitions when discussing the {\Renyi} EAT. Here, we use a similar notation to~\cite{arx_AHT24}.
\begin{defn}\label{def:freq}
(Frequency distributions) Given a string $z_1^n\in\mathcal{Z}^n$ on some alphabet $\mathcal{Z}$, $\freq_{z_1^n}$ denotes the following probability distribution on $\mathcal{Z}$:
\begin{align}
\freq_{z_1^n}(z) \coloneqq \frac{\text{number of occurrences of $z$ in $z_1^n$}}{n} .
\end{align}
\end{defn}


\pagebreak
\begin{defn}
(Conditioning on classical events) For a state $\rho \in S_{\leq}(CQ)$ classical on $C$, written in the form
$\rho_{CQ} = \sum_c \ketbra{c}{c} \otimes \omega_c$ 
for some $\omega_c \in S_{\leq}(Q)$,
and an event $\Omega$ defined on the register $C$, we will define a corresponding \term{partial state} and \term{conditional state} as, respectively,
\begin{align}
\rho_{\land\Omega} \coloneqq \sum_{c\in\Omega} \ketbra{c}{c} \otimes \omega_c, \qquad\qquad \rho_{|\Omega} \coloneqq \frac{\operatorname{Tr}[\rho]}{\operatorname{Tr}[\rho_{\land\Omega}]} \rho_{\land\Omega} = \frac{
\sum_{c} \operatorname{Tr}[\omega_c]
}{\sum_{c\in\Omega} \operatorname{Tr}[\omega_c]} \rho_{\land\Omega} ,
\end{align}
and as a slightly abbreviated version of this notation, we denote partial and conditional states for specific values of $c$ via
\begin{align}
\rho_{Q \land c} \coloneqq \omega_c, \qquad\qquad \rho_{Q | c} \coloneqq \frac{\operatorname{Tr}[\rho]}{\operatorname{Tr}[\omega_c]} \omega_c.
\end{align}
\end{defn}

\section{Bounds on \texorpdfstring{$D_{\alpha} \left(\rho||\sigma\right)$ for $\alpha \in (0,1) \cup (1,2)$}{Petz-{\Renyi} Entropy}} \label{Section: TightBoundsPetzRenyiVariational}

Our aim is to bound $\operatorname{Tr}\left[\rho^{\alpha}\sigma^{1-\alpha}\right]$ via a variational optimization problem. For $\alpha \in (0,1)$, $\rho$ and $\sigma$ may generally be allowed to denote arbitrary positive semi-definite operators. This is not the case when $\alpha \in (1,2)$, where one additionally requires that $\rho <\!\!<\sigma$ holds. We discuss two different approaches towards achieving this goal, one in Section \ref{SubSection: App1} and one in Section \ref{SubSection: App2}. In both cases, we are primarily interested in bounds of the form
 \begin{eqnarray}
    \operatorname{Tr}\left[\rho^{\alpha}\sigma^{1-\alpha}\right] \geq  \inf_{Z_1, Z_2, \dots} \operatorname{Tr}\left[\rho P_{m}(Z_1,Z_2,\dots)\right]+\operatorname{Tr}\left[\sigma Q_{m}(Z_1,Z_2,\dots)\right] 
\end{eqnarray}
for $\alpha \in (0,1)$ and
\begin{eqnarray} \label{Eq: IntroIdealUpperBoundAlphageq1}
    \operatorname{Tr}\left[\rho^{\alpha}\sigma^{1-\alpha}\right] \leq  \sup_{Z_1, Z_2, \dots} \operatorname{Tr}\left[\rho P_{m}(Z_1,Z_2,\dots)\right]+\operatorname{Tr}\left[\sigma Q_{m}(Z_1,Z_2,\dots)\right] 
\end{eqnarray}
for $\alpha \in (1,2)$. For $\alpha \in (0,1) \cup (1,2)$, the resulting bound on the Petz-{\Renyi} divergence is then given by
\begin{eqnarray} 
    D_{\alpha} \left(\rho||\sigma\right) \leq  \sup_{Z_1, Z_2, \dots} \left(\frac{1}{\alpha-1} \log \left( \operatorname{Tr}\left[\rho P_{m}(Z_1,Z_2,\dots)\right]+\operatorname{Tr}\left[\sigma Q_{m}(Z_1,Z_2,\dots)\right] 
\right)\right) - \frac{ \log \left(\operatorname{Tr}\left[\rho\right] \right)}{\alpha -1}\; ,
\end{eqnarray}
which reduces to $Eq.\ (\ref{Eq: DivergenceBoundGoal})$ whenever $\operatorname{Tr}\left[\rho\right]=1$.
All relevant proofs from this section may be found in Section \ref{Section: Appendix}.
\subsection{Approach 1} \label{SubSection: App1}
 Given any $x \geq 0$ and $\alpha \in \left(0,1\right)$, it is known that~\cite[Eq.~(V.4)]{BhatiaMatrixAnalysis}
\begin{align} 
    x^{\alpha} = \frac{\sin (\alpha \pi)}{\pi} \int_{0}^{\infty} \frac{x}{x+t}t^{\alpha-1}dt \; .
\end{align}
For the purposes of our analysis, we convert this improper integral into two definite integrals. The resulting expression is given in Lemma \ref{Lemma: Approach1SmallAlphaIntegralFormulation}.

\begin{restatable}{lem}{ApponeLemone}
\label{Lemma: Approach1SmallAlphaIntegralFormulation}
For any $x \geq 0$ and $\alpha \in \left(0,1\right)$,
    \begin{align} \label{Eq: Approach1SmallAlphaIntegralFormulation}
    x^{\alpha} = \frac{\sin (\alpha \pi)}{\pi} \left[\int_{0}^{1} \frac{x}{x+t}t^{\alpha-1}dt + \int_{0}^{1} \frac{x}{1+xt} t^{-\alpha}dt \right] \; .
\end{align}
\end{restatable}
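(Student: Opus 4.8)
The plan is to start from the known identity $x^{\alpha} = \frac{\sin(\alpha\pi)}{\pi}\int_0^\infty \frac{x}{x+t}t^{\alpha-1}\,dt$ and split the improper integral at $t=1$, then transform the tail $\int_1^\infty$ into an integral over $[0,1]$ by the substitution $t \mapsto 1/t$. Concretely, I would write $\int_0^\infty = \int_0^1 + \int_1^\infty$; the first piece is already in the desired form. For the tail, substituting $t = 1/s$ gives $dt = -s^{-2}\,ds$, and as $t$ runs from $1$ to $\infty$, $s$ runs from $1$ to $0$, so
\begin{align}
\int_1^\infty \frac{x}{x+t}t^{\alpha-1}\,dt = \int_0^1 \frac{x}{x+1/s}\,s^{1-\alpha}\,\frac{ds}{s^2} = \int_0^1 \frac{x}{xs+1}\,s^{-\alpha}\,ds \; ,
\end{align}
where I simplified $\frac{x}{x+1/s} = \frac{xs}{xs+1}$ and combined the powers of $s$. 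Renaming $s$ back to $t$ yields exactly the second integrand $\frac{x}{1+xt}t^{-\alpha}$ in Eq.\ (\ref{Eq: Approach1SmallAlphaIntegralFormulation}). Adding the two pieces and pulling out the constant $\frac{\sin(\alpha\pi)}{\pi}$ completes the derivation.

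The main technical point to address is convergence and the legitimacy of splitting: for $\alpha \in (0,1)$ the original integrand behaves like $t^{\alpha-1}$ near $t=0$ (integrable since $\alpha > 0$) and like $x\,t^{\alpha-2}$ near $t=\infty$ (integrable since $\alpha < 1$), so the improper integral is absolutely convergent and the split at $t=1$ is valid. One should also handle the degenerate case $x=0$ separately, where both sides are trivially $0$. For $x>0$ the substitution $t\mapsto 1/s$ is a smooth bijection of $(0,\infty)$ and the change of variables is justified.

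I do not anticipate a genuine obstacle here — this is a routine manipulation. The only place to be slightly careful is the algebraic simplification $\frac{x}{x+1/s}\cdot s^{1-\alpha}\cdot s^{-2} = \frac{x}{1+xs}\,s^{-\alpha}$, making sure the powers of $s$ combine correctly ($s^{1-\alpha-2}\cdot s^{+1}$ from clearing the denominator gives $s^{-\alpha}$), and confirming that the factor $\frac{\sin(\alpha\pi)}{\pi}$ is common to both terms so it can be factored out as written.
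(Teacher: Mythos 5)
Your proposal is correct and follows essentially the same route as the paper: split the improper integral from~\cite[Eq.~(V.4)]{BhatiaMatrixAnalysis} at $t=1$ and apply the substitution $t\mapsto 1/t$ to the tail, with the powers of the new variable combining to $s^{-\alpha}$ exactly as you computed. The additional remarks on convergence and the $x=0$ case are fine but not needed beyond what the paper states.
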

We first consider $\operatorname{Tr}\left[\rho^{\alpha}\sigma^{1-\alpha}\right]$, where $\alpha \in (0,1) $ and $\rho$, $\sigma$ are density operators acting on a finite-dimensional Hilbert space $\mathcal{H}$. This value can be expressed in terms of the eigenvalues and eigenvectors of $\rho$ and $\sigma$. More precisely, 
\begin{align}
    \operatorname{Tr}\left[\rho^{\alpha}\sigma^{1-\alpha}\right] = \sum_{j,k} q_{k}^\alpha p_{j}^{1-\alpha}  \left| \braket{\psi_{j} | \phi_{k}}\right|^2 \; ,
\end{align}
where $\rho = \sum q_{k}\ketbra{\phi_{k}}{\phi_{k}}$ and $\sigma = \sum p_{j}\ketbra{\psi_{j}}{\psi_{j}}$, such that $q_{k}, p_{j} \geq 0$ and both $\{\ket{\psi_{j}}\}$ as well as $\{\ket{\phi_{k}}\}$ form an orthonormal basis of $\mathcal{H}$. Lemma \ref{Lemma: Approach1SmallAlphaIntegralFormulation} provides a method with which to further expand this expression. 
\begin{align}
    \operatorname{Tr}\left[\rho^{\alpha}\sigma^{1-\alpha}\right] &= \sum_{j,k} q_{k}^\alpha p_{j}^{1-\alpha}  \left| \braket{\psi_{j} | \phi_{k}}\right|^2  \\
    &= \sum_{\substack{j,k \ \mathrm{s.t.} \\ q_k,p_j >0}} q_{k}^\alpha p_{j}^{1-\alpha}  \left| \braket{\psi_{j} | \phi_{k}}\right|^2  \\
    &= \sum_{\substack{j,k \ \mathrm{s.t.} \\ q_k,p_j >0}} p_{j} \left(\frac{q_{k}}{p_{j}}\right)^{\alpha} \left| \braket{\psi_{j} | \phi_{k}}\right|^2 \\
    &= \frac{\sin (\alpha \pi)}{\pi} \sum_{\substack{j,k \ \mathrm{s.t.} \\ q_k,p_j >0}}  \left[\int_{0}^{1} \frac{q_k \cdot p_j}{q_k+p_j t}t^{\alpha-1}dt + \int_{0}^{1} \frac{q_k \cdot p_j}{p_j+q_k t} t^{-\alpha}dt \right] \left| \braket{\psi_{j} | \phi_{k}}\right|^2  \label{Eq: Approach1DoubleIntegral}\\
    &= \frac{\sin (\alpha \pi)}{\pi}   \int_{0}^{1} \left[ \sum_{\substack{j,k \ \mathrm{s.t.} \\ q_k,p_j >0}}\frac{q_k \cdot p_j}{q_k+p_j t} \left| \braket{\psi_{j} | \phi_{k}}\right|^2 t^{\alpha-1} +  \sum_{\substack{j,k \ \mathrm{s.t.} \\ q_k,p_j >0}}\frac{q_k \cdot p_j}{p_j+q_k t} \left| \braket{\psi_{j} | \phi_{k}}\right|^2 t^{-\alpha} \right] dt  \label{Eq: Approach1QuasiRelEntropy}
\end{align}
In the second line, we use the fact that one may remove all terms in the sum for which $q_k,p_j=0$, as they do not contribute. Moreover, removing them ensures that the subsequent fractions are well-defined. 
\begin{rem}
One may re-include these terms in the last line, as, for all $t\in (0,1]$,
\begin{align}
    \frac{q_k \cdot p_j}{q_k+p_j t} \ \text{and} \ \frac{q_k \cdot p_j}{p_j+q_k t}
\end{align}
can be continuously extended such that it is well-defined for all $q_k,p_j\geq 0$ and this extension yields a value of zero whenever either $q_k=0$ or $p_j=0$ (or both). 
    For $t=0$, this does not hold, as the former is simply equal to $p_j$, whereas the latter simplifies to $q_k$. As such, the former will be non-zero even if $q_k =0$ (the converse statement holds true for the latter). This effect at the endpoint will clearly not affect the outcome of the integral, but it may affect later results, where we apply quadratures that include said endpoint.
\end{rem}
The last line contains two terms of the form 
\begin{align}
    \sum_{\substack{j,k \ \mathrm{s.t.} \\ q_k,p_j >0}}f(q_k,p_j,t) \left| \braket{\psi_{j} | \phi_{k}}\right|^2 \; ,
\end{align}
for some function, $f$, which takes as input the eigenvalues, $q_k$ and $p_j$, as well as the variable of integration, $t$, and satisfy 
\begin{align} \label{Eq: IncludingAllSumsApproach1Part1}
    \sum_{\substack{j,k \ \mathrm{s.t.} \\ q_k,p_j >0}}f(q_k,p_j,t) \left| \braket{\psi_{j} | \phi_{k}}\right|^2  =  \sum_{j,k}f(q_k,p_j,t) \left| \braket{\psi_{j} | \phi_{k}}\right|^2\; ,
\end{align}
as long as $t \in (0,1]$. 
They are special forms of quasi-relative entropies, and can be expressed as a variational optimization problem, similarly to those in \cite{brown2023deviceindependent}.

\begin{restatable}{prop}{ApproachOnePartOneVarOpt}
\label{Prop: Approach1Part1VarOpt}
Let $\rho = \sum q_{k}\ketbra{\phi_{k}}{\phi_{k}}$ and $\sigma = \sum p_{j}\ketbra{\psi_{j}}{\psi_{j}}$ be positive semi-definite operators. For all $t \in \left(0,1\right]$,
\begin{align}
    \sum_{j,k}  \frac{q_k \cdot p_j}{q_k+p_j t} \left| \braket{\psi_{j} | \phi_{k}}\right|^2 &= \frac{1}{t} \inf_{Z} \left(\operatorname{Tr}\left[\rho Z^\dagger Z\right]+\operatorname{Tr}\left[\sigma\left( t \left(\mathds{1} + Z \right) \left(\mathds{1} + Z^\dagger \right)
        \right)\right]\right) \label{Eq: Approach1Part1VarOptEq1}\\
   \sum_{j,k}   \frac{q_k \cdot p_j}{p_j+q_k t} \left| \braket{\psi_{j} | \phi_{k}}\right|^2  &= \frac{1}{t} \inf_{Z} \left(\operatorname{Tr}\left[\rho\left( t \left(\mathds{1} + Z^\dagger \right) \left(\mathds{1} + Z \right)
        \right)\right] + \operatorname{Tr}\left[\sigma Z Z^\dagger\right]\right) \; . \label{Eq: Approach1Part1VarOptEq2}
\end{align}
\end{restatable}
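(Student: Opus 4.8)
The plan is to prove each identity by reducing it to a scalar optimization problem that is solved by a rank-one (or suitably structured) operator $Z$. Concretely, for Eq.\ (\ref{Eq: Approach1Part1VarOptEq1}) I would expand everything in the joint language of the two eigenbases. Writing $Z$ in terms of its matrix elements $z_{kj} \coloneqq \braket{\phi_k | Z | \psi_j}$ is awkward because the two traces mix the $\{\ket{\phi_k}\}$ basis (for $\rho$) and the $\{\ket{\psi_j}\}$ basis (for $\sigma$); instead I would first observe that the objective
$\operatorname{Tr}[\rho Z^\dagger Z] + t\operatorname{Tr}[\sigma(\mathds{1}+Z)(\mathds{1}+Z^\dagger)]$ is jointly convex in $Z$ (it is a sum of terms of the form $\operatorname{Tr}[A Z^\dagger B Z]$ with $A,B\geq 0$, plus affine terms), so the infimum is attained and characterized by the stationarity condition obtained from $\partial/\partial Z^\dagger = 0$. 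That condition reads $Z\rho + t\sigma(\mathds{1}+Z) = 0$, i.e.\ $Z(\rho + t\sigma) = -t\sigma$ on the relevant support; one then plugs the solution back in. The cleanest route, though, is probably to avoid matrix calculus entirely and argue mode-by-mode.

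So the key steps, in order: \textbf{(i)} Reduce to scalars. Fix the pair of ONBs and note that since $\rho = \sum_k q_k \ketbra{\phi_k}{\phi_k}$ and $\sigma = \sum_j p_j \ketbra{\psi_j}{\psi_j}$, every trace appearing can be written using only the overlaps $c_{jk} \coloneqq \braket{\psi_j | \phi_k}$ and the matrix elements of $Z$ in a mixed basis. Choosing the ansatz $Z = \sum_{j,k} z_{jk}\ketbra{\phi_k}{\psi_j}$, compute $\operatorname{Tr}[\rho Z^\dagger Z]$, $\operatorname{Tr}[\sigma Z Z^\dagger]$, $\operatorname{Tr}[\sigma Z]$, $\operatorname{Tr}[\sigma Z^\dagger]$ explicitly; after expansion the objective becomes a sum over $(j,k)$ of scalar quadratics in $z_{jk}$ (the cross terms between different $(j,k)$ cancel because $\rho,\sigma$ are diagonal in their own bases), weighted by $|c_{jk}|^2$. \textbf{(ii)} Optimize each scalar quadratic. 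For Eq.\ (\ref{Eq: Approach1Part1VarOptEq1}) the per-mode objective is (schematically) $q_k |z_{jk}|^2 + t p_j |1 + z_{jk}|^2$; minimizing the real quadratic $q_k |z|^2 + t p_j |1+z|^2$ over $z \in \mathbb{C}$ gives optimal $z = -t p_j/(q_k + t p_j)$ and minimum value $\frac{q_k \cdot t p_j}{q_k + t p_j}$. Summing over modes and dividing by $t$ yields exactly $\sum_{j,k} \frac{q_k p_j}{q_k + t p_j}|c_{jk}|^2$, which is the claimed LHS. \textbf{(iii)} Handle the degenerate modes: when $q_k = 0$ or $p_j = 0$ the quadratic is still minimized consistently (if $q_k=0$ the optimum is $z=-1$ with value $0$; if $p_j=0$ the optimum is $z=0$ with value $0$), matching the continuous extension discussed in the preceding remark, so the full (unrestricted) sum over $j,k$ is recovered. \textbf{(iv)} Repeat verbatim for Eq.\ (\ref{Eq: Approach1Part1VarOptEq2}) with the roles of $\rho$ and $\sigma$ swapped; by the symmetry of the construction this is immediate. \textbf{(v)} Justify that restricting to the block-diagonal ansatz for $Z$ loses nothing: off-"diagonal" components of $Z$ (in the mixed basis) only add nonnegative terms to $\operatorname{Tr}[\rho Z^\dagger Z]$ and $\operatorname{Tr}[\sigma Z Z^\dagger]$ while not helping the linear terms, so the infimum over all $Z$ equals the infimum over the ansatz. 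Alternatively, invoke the stationarity equation $Z(\rho+t\sigma) = -t\sigma$ directly: on $\operatorname{supp}(\rho+t\sigma)$ it has the unique solution $Z = -t\sigma(\rho+t\sigma)^{-1}$, and substituting back gives $\operatorname{Tr}[t\sigma(\rho+t\sigma)^{-1}\rho(\rho+t\sigma)^{-1}\sigma] + t\operatorname{Tr}[\sigma(\rho(\rho+t\sigma)^{-1})(\rho(\rho+t\sigma)^{-1})^\dagger]$, which simplifies to $\operatorname{Tr}[t\sigma\rho(\rho+t\sigma)^{-1}]$; expanding in the eigenbases recovers $\sum \frac{t q_k p_j}{q_k + t p_j}|c_{jk}|^2$ as needed.

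The main obstacle I anticipate is \textbf{(v)} — making the reduction to the scalar problem fully rigorous, i.e.\ proving that no $Z$ outside the natural ansatz (or outside the support of $\rho + t\sigma$) can do better. The clean way is convexity plus the first-order optimality condition: the map $Z \mapsto \operatorname{Tr}[\rho Z^\dagger Z] + t\operatorname{Tr}[\sigma(\mathds{1}+Z)(\mathds{1}+Z^\dagger)]$ is a convex quadratic functional on the (finite-dimensional) space of operators, so any stationary point is a global minimum; one then checks that $Z_\star = -t\sigma(\rho+t\sigma)^{-1}$ (extended by $0$ on the complement of $\operatorname{supp}(\rho+t\sigma)$, or really any completion thereof) satisfies the stationarity equation, and evaluates the objective there. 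A secondary subtlety is the edge behavior at $t$ near $0$ and the well-definedness of $(\rho + t\sigma)^{-1}$ on its support, but for $t \in (0,1]$ and on $\operatorname{supp}(\rho)+\operatorname{supp}(\sigma)$ this causes no trouble, and the degenerate-mode bookkeeping of step (iii) cleanly matches the continuity remark already made in the text. Everything else is routine scalar calculus.
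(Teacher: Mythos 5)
Your main route is sound and, once executed with the right bookkeeping, gives a complete elementary proof; it is genuinely different in presentation from the paper's, which treats $F_1(x,y)=\frac{xy}{x+yt}$ as a positively homogeneous function of the commuting superoperators $\mathcal{R}_\rho:Z\mapsto Z\rho$ and $\mathcal{L}_\sigma:Z\mapsto \sigma Z$ and invokes the Pusz--Woronowicz variational lemma for the Hilbert--Schmidt inner product as a black box. Your mode-by-mode calculation is exactly the finite-dimensional content of that lemma: the operators $\ketbra{\psi_j}{\phi_k}$ are the joint eigenvectors of $\mathcal{R}_\rho$ and $\mathcal{L}_\sigma$ (eigenvalues $q_k$ and $p_j$), so expanding $Z$ in that basis simultaneously diagonalizes $\operatorname{Tr}[Z\rho Z^\dagger]$, $\operatorname{Tr}[Z^\dagger\sigma Z]$ and the linear terms, and each coefficient can be optimized independently by completing the square, giving $\frac{t q_k p_j}{q_k+tp_j}\left|\braket{\psi_j|\phi_k}\right|^2$ per mode. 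What the paper's route buys is brevity and an automatic extension to the other quasi-relative entropies used later; what yours buys is self-containedness and an explicit optimizer.

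Two concrete corrections are needed. First, your ansatz has the wrong orientation: with $Z=\sum_{j,k}z_{jk}\ketbra{\phi_k}{\psi_j}$ the cross terms do \emph{not} cancel (e.g.\ $\operatorname{Tr}[\rho Z^\dagger Z]$ picks up terms coupling $z_{jk}$ and $z_{j'k}$ for $j\neq j'$), so the claimed decoupling fails as stated. You must instead use $Z=\sum_{j,k}\zeta_{jk}\ketbra{\psi_j}{\phi_k}$, i.e.\ matrix elements $\bra{\psi_j}Z\ket{\phi_k}$; since this is still a full operator basis, the ansatz is no restriction at all and your worry in step (v) evaporates — there is nothing to justify. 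Second, the ``alternative route'' in step (v) is wrong: the stationarity condition $Z\rho+t\sigma+t\sigma Z=0$ is a Sylvester equation, and $Z\rho+t\sigma Z\neq Z(\rho+t\sigma)$ for non-commuting $\sigma$ and $Z$, so the claimed optimizer $Z_\star=-t\sigma(\rho+t\sigma)^{-1}$ and the resulting closed form $\operatorname{Tr}[t\sigma\rho(\rho+t\sigma)^{-1}]$ do not equal the left-hand side in general; the correct solution is $Z_\star=-t(\mathcal{R}_\rho+t\mathcal{L}_\sigma)^{-1}(\sigma)$, which in components is precisely the per-mode optimum $\zeta_{jk}=-\frac{tp_j}{q_k+tp_j}\braket{\psi_j|\phi_k}$ you found. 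Drop the matrix-inverse shortcut and rely on the (corrected) basis expansion, and the proof is complete, including the degenerate-mode bookkeeping in step (iii) and the $\rho\leftrightarrow\sigma$ swap for Eq.\ (\ref{Eq: Approach1Part1VarOptEq2}).
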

We remark that for every value of $t>0$, Proposition \ref{Prop: Approach1Part1VarOpt} represents the terms in Eq.\ (\ref{Eq: Approach1QuasiRelEntropy}) in the form we require. The remaining issue is that $t$ is a variable of integration, and directly applying Proposition \ref{Prop: Approach1Part1VarOpt} would transform $\operatorname{Tr}\left[\rho^{\alpha}\sigma^{1-\alpha}\right]$ into a sum over uncountably many optimization problems. 

This problem can be dealt with by approximating both integrals in Eq.\ (\ref{Eq: Approach1DoubleIntegral}) with quadratures. It is important for our purposes, however, that the quadrature actually bounds the integral expression, rather than just approximating it. In our case, one can lower bound both integrals via Gauss-Jacobi (GJ) quadratures and upper bound them using Gauss-Radau-Jacobi (GRJ) quadratures. The conceptual difference between them is that the latter contains an endpoint (in our case we choose $t=0$) as a quadrature node, whereas the former does not. 

The errors stemming from these quadratures can be related to the derivatives of the functions one is integrating \cite{Hildebrand56}. In particular, our bounds directly follow from the fact that, when taking the derivative over $t$, both $\frac{q_k \cdot p_j}{q_k+p_j t}$ and $\frac{q_k \cdot p_j}{p_j+q_k t}$ have positive even derivatives and negative odd derivatives, for all $t \in [0,1]$. This method of achieving upper and lower bounds is not specific to the functions we are considering, and is in fact a special case of~\cite[Eq.\ (8.9.8) and Eq.\ (8.10.22)]{Hildebrand56}. We include Lemma \ref{Lem: BoundingIntviaQuad} for the sake of completeness.
\begin{restatable}{lem}{BoundingIntviaQuad}
\label{Lem: BoundingIntviaQuad}
Let $\beta>-1$ and $f(t)$ be a function such that 
$\frac{d^{2k} f(t)}{d t^{2k}} \geq 0$ and $\frac{d^{2k-1} f(t)}{d t^{2k-1}} \leq 0$ for all $k \in \mathbb{N}$ and $t \in [0,1]$. In this case, an m-point GJ quadrature generates $m$ nodes $\{t_i\}_{i \in \{1,\dots,m\}}$ and weights $\{w_i\}_{i \in \{1,\dots,m\}}$
such that
\begin{align}
    \int_0^1 f(t) t^\beta dt \geq \sum_{i=1}^{m} w_i f(t_i) \; .
\end{align}
Conversely, an m-point GRJ quadrature with endpoint $t_1=0$ generates $m$ nodes $\{t_i\}_{i \in \{1,\dots,m\}}$ and weights $\{w_i\}_{i \in \{1,\dots,m\}}$ such that 
\begin{align}
    \int_0^1 f(t) t^\beta dt \leq \sum_{i=1}^{m} w_i f(t_i) \; .
\end{align}
\end{restatable}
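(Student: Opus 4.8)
\textbf{Proof plan for Lemma \ref{Lem: BoundingIntviaQuad}.}

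The plan is to invoke the standard error formulas for Gauss-Jacobi and Gauss-Radau-Jacobi quadrature with weight function $t^\beta$ on $[0,1]$, and then control the sign of the error using the hypothesis on the derivatives of $f$. First I would recall the classical result (as in \cite[Ch.~8]{Hildebrand56}) that, for an $m$-point Gaussian-type quadrature rule $\int_0^1 f(t)t^\beta\,dt = \sum_{i=1}^m w_i f(t_i) + E_m[f]$ that is exact for polynomials of degree $2m-1$ (GJ) or $2m-2$ (GRJ), the remainder can be written as
\begin{align}
E_m[f] = c_m \, f^{(2m)}(\xi) \qquad \text{for GJ}, \qquad\qquad
E_m[f] = c'_m \, f^{(2m-1)}(\xi') \qquad \text{for GRJ},
\end{align}
for some $\xi,\xi'\in(0,1)$, where $c_m > 0$ and $c'_m < 0$. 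The positivity of $c_m$ is the standard statement that Gaussian quadrature with a positive weight has a positive-definite error kernel; for the Radau rule with the fixed node placed at the endpoint $t_1=0$, the corresponding kernel has the opposite sign, giving $c'_m<0$. These sign facts are exactly what is recorded in \cite[Eq.~(8.9.8) and Eq.~(8.10.22)]{Hildebrand56}, so I would cite them rather than re-derive them.

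Given these formulas, the lemma is immediate. For the GJ rule, the hypothesis $f^{(2m)}(t)\geq 0$ on $[0,1]$ forces $E_m[f]\geq 0$, i.e. $\int_0^1 f(t)t^\beta\,dt \geq \sum_i w_i f(t_i)$. For the GRJ rule with endpoint node $t_1=0$, the hypothesis $f^{(2m-1)}(t)\leq 0$ together with $c'_m<0$ forces $E_m[f]\geq 0$ as well — wait, I need to be careful about direction here: since the claimed inequality is $\int_0^1 f(t)t^\beta\,dt \leq \sum_i w_i f(t_i)$, I need $E_m[f]\leq 0$, which follows from $c'_m<0$ and $f^{(2m-1)}(\xi')\leq 0$ giving $c'_m f^{(2m-1)}(\xi') \geq 0$ — so in fact the sign convention on $c'_m$ must be that the Radau-at-left-endpoint kernel makes $E_m[f]$ have the \emph{same} sign as $f^{(2m-1)}$ when the node is at the left end, i.e. $E_m[f] = |c'_m| f^{(2m-1)}(\xi') \leq 0$. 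I would state the sign bookkeeping precisely by reference to \cite{Hildebrand56}, where the Radau rule with the node at the lower limit has error proportional to $+f^{(2m-1)}(\xi)$, so that $f^{(2m-1)}\leq 0$ yields an overestimate.

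The only genuine subtlety — and the main obstacle — is a regularity/applicability issue: the error formulas above require $f\in C^{2m}[0,1]$, whereas in our intended application $f(t) = \tfrac{q_k p_j}{q_k + p_j t}$ (and its companion) is only guaranteed smooth on $[0,1]$ when the denominator does not vanish, which holds precisely because $q_k,p_j>0$ and $t\geq 0$. So I would note that the hypotheses of the lemma already bundle in $f\in C^\infty[0,1]$ implicitly (all derivatives exist and have the stated signs on the closed interval), and for the application one checks separately — as is in fact done in the surrounding text — that the relevant rational functions have all even derivatives nonnegative and all odd derivatives nonpositive on $[0,1]$, which is a short induction: $\tfrac{d}{dt}\tfrac{c}{a+bt} = -\tfrac{cb}{(a+bt)^2}$ and more generally $\tfrac{d^n}{dt^n}\tfrac{c}{a+bt} = (-1)^n \tfrac{c\, b^n n!}{(a+bt)^{n+1}}$, whose sign is $(-1)^n$ since $a,b,c,t\geq 0$. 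A secondary bookkeeping point is that GJ/GRJ quadrature is classically stated for the Jacobi weight on $[-1,1]$; I would simply remark that an affine change of variables maps it to the weight $t^\beta$ on $[0,1]$ (with $\beta=\alpha-1>-1$ or $\beta=-\alpha>-1$ in our use cases), preserving both the exactness degree and the sign of the error kernel, so the cited formulas transfer verbatim.
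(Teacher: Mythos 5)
Your proposal is correct and follows essentially the same route as the paper: both invoke the mean-value error formulas from Hildebrand (Eqs.~(8.9.8) and (8.10.22)) for the GJ and GRJ rules with weight $t^\beta$ on $[0,1]$ and read off the sign of $E_m$ from the hypotheses on $f^{(2m)}$ and $f^{(2m-1)}$. Your mid-proof correction of the GRJ error-constant sign lands on the convention the paper actually uses — with the fixed node at the lower endpoint the constant is positive, so $E_m$ shares the sign of $f^{(2m-1)}(\xi)\leq 0$ and the quadrature overestimates the integral.
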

Applying first Lemma \ref{Lem: BoundingIntviaQuad} to Eq.\ (\ref{Eq: Approach1DoubleIntegral}) and then Proposition \ref{Prop: Approach1Part1VarOpt} gives us the desired bound for $\alpha \in (0,1) $.
\begin{restatable}[Approach 1 for $\alpha \in (0,1)$]{thm}{TheoremApponePartone}
\label{Theorem: App1Part1}
    Let $H$ be a finite-dimensional Hilbert space, and let $\rho, \sigma$ be two positive semi-definite operators on $H$. Then, for any $\alpha \in (0,1)$ and $m \in \mathbb{N}$, an m-point GJ quadrature generates nodes $t_1, t_1^\prime, \dots, t_{m}, t_m^\prime \in \left(0,1\right)$ and weights $w_1,w_1^\prime,\dots, w_{m},w_m^\prime >0$ such that
    \begin{align}
    \begin{split}
         \operatorname{Tr}\left[\rho^{\alpha}\sigma^{1-\alpha}\right] &\geq \frac{\sin(\alpha \pi )}{\pi}\sum_{i=1}^{m}
        \frac{w_i}{t_{i}} \inf_{Z} \left(\operatorname{Tr}\left[\rho Z^\dagger Z\right]+\operatorname{Tr}\left[\sigma\left( t_i \left(\mathds{1} + Z \right) \left(\mathds{1} + Z^\dagger \right)
        \right)\right]\right) \\
        &\ \   + \frac{\sin(\alpha \pi )}{\pi}\sum_{i=1}^{m}
        \frac{w_i^\prime}{t_{i}^\prime} \inf_{Z^\prime}\left(\operatorname{Tr}\left[\rho\left( t_i^\prime \left(\mathds{1} + Z^{\prime \dagger} \right) \left(\mathds{1} + Z^\prime \right)\right)\right]+\operatorname{Tr}\left[\sigma Z^\prime Z^{\prime \dagger} 
        \right]\right) \; .
        \end{split}
    \end{align}
Moreover, as $m \to \infty$, the right-hand-side converges to $\operatorname{Tr}\left[\rho^{\alpha}\sigma^{1-\alpha}\right]$.
\end{restatable}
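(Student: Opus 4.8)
The plan is to simply chain together the ingredients already assembled in this subsection, in the order they were introduced. First I would observe that it suffices to prove the bound when $\rho$ is a density operator (the general positive semi-definite case follows by homogeneity of both sides in $\rho$: scaling $\rho \mapsto \lambda \rho$ multiplies the left-hand side by $\lambda$, and on the right-hand side one may absorb a factor $\sqrt{\lambda}$ into each $Z$ while the explicit $\rho$-dependent term scales linearly — and similarly $\sigma$ can be normalized). Restricting to $\rho,\sigma$ density operators, I invoke the spectral decompositions $\rho = \sum_k q_k \ketbra{\phi_k}{\phi_k}$, $\sigma = \sum_j p_j \ketbra{\psi_j}{\psi_j}$ and the identity $\operatorname{Tr}[\rho^\alpha \sigma^{1-\alpha}] = \sum_{j,k} q_k^\alpha p_j^{1-\alpha} |\braket{\psi_j|\phi_k}|^2$, then apply Lemma \ref{Lemma: Approach1SmallAlphaIntegralFormulation} termwise to $(q_k/p_j)^\alpha$ on the support and pull the finite sum inside the integral, arriving exactly at Eq.\ (\ref{Eq: Approach1DoubleIntegral})–(\ref{Eq: Approach1QuasiRelEntropy}).

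Next I would handle the two integrals separately. Fix the first one, $\int_0^1 g_1(t)\, t^{\alpha-1}\,dt$ with $g_1(t) = \sum_{j,k: q_k,p_j>0} \frac{q_k p_j}{q_k + p_j t}|\braket{\psi_j|\phi_k}|^2$. Since $g_1$ is a nonnegative combination of functions $t \mapsto \frac{q_k p_j}{q_k + p_j t}$, and each such function has positive even derivatives and negative odd derivatives on $[0,1]$ (a one-line computation: the $n$-th derivative of $\frac{a}{b+ct}$ is $\frac{a(-c)^n n!}{(b+ct)^{n+1}}$), the hypothesis of Lemma \ref{Lem: BoundingIntviaQuad} is satisfied with $\beta = \alpha - 1 > -1$. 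Hence an $m$-point GJ quadrature with nodes $t_i$ and weights $w_i > 0$ gives $\int_0^1 g_1(t)t^{\alpha-1}\,dt \geq \sum_i w_i g_1(t_i)$. Doing the same for the second integral with exponent $\beta = -\alpha > -1$ produces nodes $t_i'$ and weights $w_i'$ and the analogous lower bound. (The finitely many $t=0$-endpoint subtleties noted in the Remark are irrelevant here, since GJ quadrature — unlike GRJ — places no node at $0$.)

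Then, for each fixed node $t_i \in (0,1]$, I apply Proposition \ref{Prop: Approach1Part1VarOpt}: by Eq.\ (\ref{Eq: IncludingAllSumsApproach1Part1}) the restricted sum $g_1(t_i)$ equals the unrestricted one, which equals $\frac{1}{t_i}\inf_Z(\operatorname{Tr}[\rho Z^\dagger Z] + \operatorname{Tr}[\sigma\, t_i(\mathds{1}+Z)(\mathds{1}+Z^\dagger)])$, and similarly $g_2(t_i')$ via Eq.\ (\ref{Eq: Approach1Part1VarOptEq2}). Substituting these and multiplying through by $\frac{\sin(\alpha\pi)}{\pi} > 0$ (positive since $\alpha \in (0,1)$) yields precisely the claimed inequality. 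For the convergence statement, I would note that the $m$-point GJ quadrature error for a fixed continuous integrand tends to $0$ as $m\to\infty$ (standard Gaussian-quadrature convergence, e.g.\ because the nodes/weights integrate all polynomials of degree $< 2m$ exactly and polynomials are dense); applying this to the two fixed, bounded, continuous integrands $g_1, g_2$ shows the right-hand side converges to the integral expression in Eq.\ (\ref{Eq: Approach1DoubleIntegral}), which by Lemma \ref{Lemma: Approach1SmallAlphaIntegralFormulation} equals $\operatorname{Tr}[\rho^\alpha\sigma^{1-\alpha}]$.

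The only genuine work is the derivative-sign check feeding Lemma \ref{Lem: BoundingIntviaQuad} and the bookkeeping that the quadrature and the variational representation compose in the right direction (both GJ quadrature and Proposition \ref{Prop: Approach1Part1VarOpt} give lower bounds, and $\sin(\alpha\pi)>0$ preserves the direction); everything else is assembling cited results. I expect the main obstacle, if any, to be purely presentational — making sure the two integrals' nodes/weights are indexed without collision and that the $\inf_Z$ for different nodes are kept as genuinely independent optimizations, which is why the statement writes $\inf_Z$ and $\inf_{Z'}$ with the summation outside.
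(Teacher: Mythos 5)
Your proposal follows essentially the same route as the paper's proof: expand via the spectral decompositions and Lemma \ref{Lemma: Approach1SmallAlphaIntegralFormulation}, lower-bound the two integrals with GJ quadratures via Lemma \ref{Lem: BoundingIntviaQuad} (after the derivative-sign check), substitute the variational expressions from Proposition \ref{Prop: Approach1Part1VarOpt} at each node, and prove convergence by polynomial approximation of the continuous integrands combined with exactness of Gaussian quadrature up to degree $2m-1$ --- which is exactly the Weierstrass argument the paper uses. One small caveat: your opening reduction to normalized $\rho$ is both unnecessary (every cited ingredient already holds for general positive semi-definite operators) and incorrectly justified, since $\operatorname{Tr}\left[(\lambda\rho)^{\alpha}\sigma^{1-\alpha}\right]$ scales as $\lambda^{\alpha}$ rather than $\lambda$ and the rescaling $Z\mapsto Z/\sqrt{\lambda}$ does not leave the $\sigma$-term invariant; simply drop that step and the rest of the argument stands as written.
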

\begin{rem}
    For each value $i$, these bounds generate $2$ additional matrix variables, and, similarly to \cite{brown2023deviceindependent}, it is possible to change the order of the sum and infima, s.t.\ 
    \begin{eqnarray}
    \operatorname{Tr}\left[\rho^{\alpha}\sigma^{1-\alpha}\right] \geq  \inf_{Z_1, Z_1^\prime, \dots} \operatorname{Tr}\left[\rho P_{m}(Z_1,Z_1^\prime,\dots)\right]+\operatorname{Tr}\left[\sigma Q_{m}(Z_1,Z_1^\prime,\dots)\right] \; .
\end{eqnarray}
    The polynomials $P_m$ and $Q_m$ have the additional property of always being positive semi-definite. We consider these bounds for $\alpha \in (0,1)$ in Section \ref{Section: AdvantageDistill}.

    In principle, one can use GRJ quadratures (with the endnode $t = 0$) to similarly provide upper bounds. However, for $t = 0$, Proposition \ref{Prop: Approach1Part1VarOpt} is not directly applicable, due to the factor $\frac{1}{t}$. Instead, one can directly deal with the sums that appear in this case, i.e.\
    \begin{align}
         \sum_{\substack{j,k \ \mathrm{s.t.} \\ q_k,p_j >0}}  p_j \left| \braket{\psi_{j} | \phi_{k}}\right|^2 \ \text{and} \ \sum_{\substack{j,k \ \mathrm{s.t.} \\ q_k,p_j >0}}    q_k \left| \braket{\psi_{j} | \phi_{k}}\right|^2 \; .
    \end{align}
    These are equivalent to $\operatorname{Tr}\left[\rho^0 \sigma\right] := \lim_{\epsilon \to 0^+} \operatorname{Tr}\left[\rho^\epsilon \sigma\right]$ and $\operatorname{Tr}\left[\rho \sigma^0\right]  := \lim_{\epsilon \to 0^+} \operatorname{Tr}\left[\rho \sigma^\epsilon\right] $, respectively. As such, one can upper bound them via 
    \begin{align}
        \operatorname{Tr}\left[\rho^0 \sigma\right] &\leq \operatorname{Tr}\left[\sigma\right] \\
        \operatorname{Tr}\left[\rho \sigma^0\right] &\leq \operatorname{Tr}\left[\rho\right] \; .
    \end{align}
    Their corresponding weights $w_1$ and $w_1^\prime$ approach $0$, as the number of nodes, $m$, increases~\cite[Eq.\ (3.10)]{GRWeights}. Therefore,
    this upper bound would also produce increasingly tight results.  
\end{rem}
In the case of $\alpha \in (1,2)$, we apply a similar approach. We again use Lemma \ref{Lemma: Approach1SmallAlphaIntegralFormulation} to express $\operatorname{Tr}\left[\rho^{\alpha}\sigma^{1-\alpha}\right]$ via integrals, and apply Lemma \ref{Lem: BoundingIntviaQuad} to bound them, using quadratures. For any $\rho$, $\sigma$ such that $\rho <\!\!<\sigma$,
\begin{align}
\operatorname{Tr}\left[\rho^{\alpha}\sigma^{1-\alpha}\right] &= \sum_{\substack{j,k \ \mathrm{s.t.} \\ q_k,p_j >0}} q_{k}^\alpha p_{j}^{1-\alpha}  \left| \braket{\psi_{j} | \phi_{k}}\right|^2  \\
    &= \sum_{\substack{j,k \ \mathrm{s.t.} \\ q_k,p_j >0}} q_{k} \left(\frac{q_{k}}{p_{j}}\right)^{\alpha-1} \left| \braket{\psi_{j} | \phi_{k}}\right|^2 \\
    &= \frac{\sin ((\alpha -1)\pi)}{\pi} \sum_{\substack{j,k \ \mathrm{s.t.} \\ q_k,p_j >0}}  q_k \left[\int_{0}^{1} \frac{q_k }{q_k+p_j t}t^{\alpha-2}dt + \int_{0}^{1} \frac{q_k }{p_j+q_k t} t^{1-\alpha}dt \right] \left| \braket{\psi_{j} | \phi_{k}}\right|^2  \label{Eq: Approach1Part2Integral}\\
    &\approx \frac{\sin ((\alpha -1)\pi)}{\pi} \sum_{\substack{j,k \ \mathrm{s.t.} \\ q_k,p_j >0}} \sum_{i}\left[w_i \frac{q_k^2 }{q_k+p_j t_i}\left| \braket{\psi_{j} | \phi_{k}}\right|^2  + w_i^\prime\frac{q_k^2 }{p_j+q_k t_i^\prime} \left| \braket{\psi_{j} | \phi_{k}}\right|^2  \right]  
    \; ,  \label{Eq: Approach1Part2QuasiRelEntropy}
\end{align}
where the ``$\approx$" in the fourth line can be replaced by either ``$\geq$" or ``$\leq$", depending on the quadrature.
While the remaining quasi-relative entropies do not directly have the structure required by Proposition \ref{Prop: Approach1Part1VarOpt}, they can still be expressed as an optimization problem. 
\begin{rem} \label{Rem: CompletingSumApp1Part2}
  For $\alpha \in (1,2)$, we generally assume that $\rho <\!\!<\sigma$. As such, for any $j$ such that $p_j=0$, either $q_k=0$ or $\left| \braket{\psi_{j} | \phi_{k}}\right| = 0$. For $t>0$, this allows us to expand the sum to include all $j,k$ without changing the total output value. 
\end{rem}
\begin{restatable}{prop}{ApproachOnePartTwoVarOpt}
\label{Prop: Approach1Part2VarOpt}
Let $\rho = \sum q_{k}\ketbra{\phi_{k}}{\phi_{k}}$ and $\sigma = \sum p_{j}\ketbra{\psi_{j}}{\psi_{j}}$ be positive semi-definite operators. 
 For all $t \in \left(0,1\right]$, 
\begin{align}
    \sum_{j,k}  \frac{q_k^2}{q_k+p_j t} \left| \braket{\psi_{j} | \phi_{k}}\right|^2 &= \operatorname{Tr}\left[\rho\right] - 
   \inf_{Z} \left(\operatorname{Tr}\left[\rho Z^\dagger Z\right]+t \operatorname{Tr}\left[\sigma \left(\mathds{1} + Z \right) \left(\mathds{1} + Z^\dagger \right)
       \right]\right) \label{Eq: Approach2VarOptEq1}
    \\
   \sum_{j,k}   \frac{q_k^2}{p_j+q_k t} \left| \braket{\psi_{j} | \phi_{k}}\right|^2  &= \frac{1}{t} \left[ \operatorname{Tr}\left[\rho\right] - \inf_{Z} \left(\operatorname{Tr}\left[\rho 
   \left(\mathds{1} + Z^\dagger \right) \left(\mathds{1} + Z \right)
       \right] + \frac{1}{t}\operatorname{Tr}\left[\sigma Z Z^\dagger \right]\right) \right]
   \; . \label{Eq: Approach2VarOptEq2}
\end{align}
\end{restatable}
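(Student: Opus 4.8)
The plan is to deduce both identities from Proposition~\ref{Prop: Approach1Part1VarOpt}, rather than redo a variational computation from scratch. I will use two elementary ingredients: (i) since $\{\ket{\psi_j}\}$ is an orthonormal basis of $\mathcal{H}$, one has $\operatorname{Tr}[\rho] = \sum_k q_k = \sum_{j,k} q_k \left|\braket{\psi_j|\phi_k}\right|^2$; and (ii) the scalar identities $\frac{q^2}{q+pt} = q - \frac{tqp}{q+pt}$ and $\frac{q^2}{p+qt} = \frac{1}{t}\big(q - \frac{qp}{p+qt}\big)$, valid for $q,p \geq 0$ and $t > 0$ (each fraction read as $0$ when its numerator vanishes). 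Ingredient (ii) is exactly what lets the ``$q_k^2$'' numerators appearing in the statement be traded for the ``$q_k p_j$'' numerators handled by Proposition~\ref{Prop: Approach1Part1VarOpt}.

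For Eq.~(\ref{Eq: Approach2VarOptEq1}): multiplying Eq.~(\ref{Eq: Approach1Part1VarOptEq1}) through by $t$ gives $\inf_Z\big(\operatorname{Tr}[\rho Z^\dagger Z] + t\operatorname{Tr}[\sigma(\mathds{1}+Z)(\mathds{1}+Z^\dagger)]\big) = \sum_{j,k} \frac{t q_k p_j}{q_k+p_j t}\left|\braket{\psi_j|\phi_k}\right|^2$; subtracting this from $\operatorname{Tr}[\rho]$ in the form (i) and collapsing the two sums term by term with (ii) yields $\sum_{j,k} \frac{q_k^2}{q_k+p_j t}\left|\braket{\psi_j|\phi_k}\right|^2$, which is the claim. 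For Eq.~(\ref{Eq: Approach2VarOptEq2}): in Eq.~(\ref{Eq: Approach1Part1VarOptEq2}), pulling the common factor $t$ out of the trace against $\rho$ and out of the infimum shows $\inf_Z\big(\operatorname{Tr}[\rho(\mathds{1}+Z^\dagger)(\mathds{1}+Z)] + \tfrac{1}{t}\operatorname{Tr}[\sigma Z Z^\dagger]\big) = \sum_{j,k} \frac{q_k p_j}{p_j+q_k t}\left|\braket{\psi_j|\phi_k}\right|^2$; subtracting this from $\operatorname{Tr}[\rho]$, dividing by $t$, and again using (ii) gives the claim.

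The one point that needs care is the bookkeeping of degenerate indices. Proposition~\ref{Prop: Approach1Part1VarOpt} is phrased as a sum over all $j,k$ with $\frac{q_k p_j}{q_k+p_j t}$ understood as $0$ whenever $q_k p_j = 0$, so I would check that, under the same convention, $\frac{q_k^2}{q_k+p_j t}$ and $\frac{q_k^2}{p_j+q_k t}$ are well defined and that the term-by-term identity in (ii) remains valid in each boundary case ($q_k = 0$; or $p_j = 0$ with $q_k \neq 0$, where the fraction equals $q_k$; or both zero). Since $t \in (0,1]$ is fixed and bounded away from $0$, the factors $1/t$ create no difficulty, and I expect no genuine obstacle beyond this routine case-check — all the substance already resides in Proposition~\ref{Prop: Approach1Part1VarOpt}.

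As an alternative not relying on Proposition~\ref{Prop: Approach1Part1VarOpt}, one could argue directly: expand the quadratic objective in $Z$, vectorize it into the form $\operatorname{vec}(Z)^\dagger M\,\operatorname{vec}(Z) + 2\operatorname{Re}\langle c, \operatorname{vec}(Z)\rangle + \mathrm{const}$ with $M \geq 0$ (for Eq.~(\ref{Eq: Approach2VarOptEq1}), $M = \rho^{T}\otimes\mathds{1} + t\,\mathds{1}\otimes\sigma$), complete the square, and read off the minimum value in the joint eigenbasis $\{\ket{\overline{\phi_k}}\otimes\ket{\psi_j}\}$ of the two summands of $M$. On that route the substantive step is to confirm that the linear term $c$ lies in the range of one of the two summands of $M$ (namely $\operatorname{vec}(\sigma) = (\mathds{1}\otimes\sigma)\operatorname{vec}(\mathds{1})$, respectively $\operatorname{vec}(\rho) = (\rho^{T}\otimes\mathds{1})\operatorname{vec}(\mathds{1})$), hence is orthogonal to $\ker M$ and lies in $\operatorname{range}(M)$, so that the infimum is finite and attained at $-M^{+}c$; this is where the real work would be.
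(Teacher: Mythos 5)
Your proposal is correct and follows essentially the same route as the paper: the paper's proof likewise splits $\frac{q_k^2}{q_k+p_jt}=q_k-\frac{tq_kp_j}{q_k+p_jt}$ and $\frac{q_k^2}{p_j+q_kt}=\frac{1}{t}\bigl(q_k-\frac{q_kp_j}{p_j+q_kt}\bigr)$, sums the first pieces to $\operatorname{Tr}[\rho]$, and feeds the remainders into Proposition~\ref{Prop: Approach1Part1VarOpt}, absorbing the factor $1/t$ into the infimum for the second identity. Your boundary-case check of the scalar identities is the right (and only) point of care, and matches the paper's own remark on degenerate eigenvalues.
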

\begin{rem} \label{Rem: EndNodeApp1Part2}
    When we use the GRJ quadrature with endpoint $t=0$, Proposition \ref{Prop: Approach1Part2VarOpt} is no longer applicable for $t=0$. This, however, is not an issue. 
    The first term 
    contributes
\begin{align}
    \sum_{\substack{j,k \ \mathrm{s.t.} \\ q_k,p_j >0}} q_k \left| \braket{\psi_{j} | \phi_{k}}\right|^2 &= \operatorname{Tr}\left[\rho  \sigma^0\right] \\
    &= \operatorname{Tr}\left[\rho\right] \; ,
\end{align}
where we use $\rho <\!\!<\sigma$ in the last line. Similarly, the contribution from the second term is equal to 
\begin{align}
    \sum_{\substack{j,k \ \mathrm{s.t.} \\ q_k,p_j >0}} \frac{q_k^2}{ p_j} \left| \braket{\psi_{j} | \phi_{k}}\right|^2 =\operatorname{Tr}\left[\rho^2 \sigma^{-1}\right]  \; ,
\end{align}
where $\sigma^{-1}$ represents the pseudo-inverse of the density matrix $\sigma$. Moreover, whenever $\rho <\!\!<\sigma$, this can be expressed as~\cite[Remark 3.6]{brown2023deviceindependent} 
\begin{align}
    \operatorname{Tr}\left[\rho^2 \sigma^{-1}\right] = - \inf_{Z} \left(\operatorname{Tr}\left[\rho\left( Z^\dagger  + Z \right)\right] + \operatorname{Tr}\left[\sigma Z Z^\dagger\right]\right) \; .
\end{align}
\end{rem}

\begin{restatable}[Approach 1 for $\alpha \in (1,2)$]{thm}{TheoremApponeParttwo}
\label{Theorem: App1Part2}
    Let $H$ be a finite-dimensional Hilbert space, and let $\rho, \sigma$ be two positive semi-definite operators on $H$ such that $\rho <\!\!<\sigma$. Then, for any $\alpha \in (1,2)$ and $m \in \mathbb{N}$, an m-point GRJ quadrature with endpoints $t_1 = t_1^\prime =0$ generates nodes 
$t_1, t_1^\prime, \dots, t_{m}, t_m^\prime \in \left[0,1\right)$ and weights $w_1,w_1^\prime,\dots, w_{m},w_m^\prime >0$ such that 
\begin{align}
    \begin{split}
        \operatorname{Tr}\left[\rho^{\alpha}\sigma^{1-\alpha}\right] &\leq C_{m} + \frac{\sin(\alpha \pi )}{\pi}\sum_{i=2}^{m} w_i \inf_{Z} \left(\operatorname{Tr} \left[\rho Z^\dagger Z\right]+\operatorname{Tr}\left[\sigma \left( t_i \left(\mathds{1} + Z \right) \left(\mathds{1} + Z^\dagger \right)
        \right)\right]\right) \\
         & \  \ + \frac{\sin(\alpha \pi )}{\pi} w_1^\prime \inf_{Z^\prime}\left(\operatorname{Tr}\left[\rho \left( Z^{\prime \dagger} + Z^\prime \right)\right]+\operatorname{Tr}\left[\sigma\ Z^\prime Z^{\prime \dagger}
        \right]\right) \\
        & \  \ + \frac{\sin(\alpha \pi )}{\pi}\sum_{i=2}^{m}
        \frac{w_i^\prime}{t_{i}^\prime} \inf_{Z^\prime}\left(\operatorname{Tr}\left[\rho \left(\mathds{1} + Z^{\prime \dagger} \right) \left(\mathds{1} + Z^\prime \right)\right]+\frac{1}{t_{i}^\prime}\operatorname{Tr}\left[\sigma Z^\prime Z^{\prime \dagger}  
        \right]\right) \; ,
        \end{split}
    \end{align}
where $C_m = \frac{\sin((\alpha-1) \pi )}{\pi}\operatorname{Tr}\left[\rho\right] \left(\frac{1}{\alpha-1} + \sum_{i=2}^{m} \frac{w_{i}^\prime}{t_{i}^\prime} \right)$. As $m \to \infty$, this bound converges to $\operatorname{Tr}\left[\rho^{\alpha}\sigma^{1-\alpha}\right]$.
\end{restatable}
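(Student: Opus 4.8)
The plan is to run exactly the same pipeline used for Theorem~\ref{Theorem: App1Part1}, but starting from the $\alpha\in(1,2)$ integral identity in Eq.~(\ref{Eq: Approach1Part2Integral}) and being careful about the endpoint node $t=0$ that the GRJ quadrature forces upon us. Concretely: first I would invoke Lemma~\ref{Lemma: Approach1SmallAlphaIntegralFormulation} with exponent $\alpha-1\in(0,1)$ in place of $\alpha$ applied to $x=q_k/p_j$, which (after multiplying through by $q_k p_j^{1-\alpha}\cdot p_j^{\alpha-1}=q_k$ and using $\rho\ll\sigma$ to justify restricting the sum to $q_k,p_j>0$, exactly as in Remark~\ref{Rem: CompletingSumApp1Part2}) yields the exact representation in Eq.~(\ref{Eq: Approach1Part2Integral}):
\[
\operatorname{Tr}\!\left[\rho^\alpha\sigma^{1-\alpha}\right]
=\frac{\sin((\alpha-1)\pi)}{\pi}\sum_{j,k}q_k\!\left[\int_0^1\frac{q_k}{q_k+p_jt}t^{\alpha-2}dt+\int_0^1\frac{q_k}{p_j+q_kt}t^{1-\alpha}dt\right]\!\left|\braket{\psi_j|\phi_k}\right|^2 .
\]
The two relevant exponents here are $\beta=\alpha-2\in(-1,0)$ and $\beta=1-\alpha\in(-1,0)$, both $>-1$, so Lemma~\ref{Lem: BoundingIntviaQuad} applies; and the integrands as functions of $t$ are $t\mapsto q_k^2/(q_k+p_jt)$ and $t\mapsto q_k^2/(p_j+q_kt)$, which (being of the form $c/(a+bt)$ with $a,b,c\ge 0$) have nonnegative even derivatives and nonpositive odd derivatives on $[0,1]$, as already noted in the text preceding Lemma~\ref{Lem: BoundingIntviaQuad}. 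Hence the GRJ quadrature with $t_1=t_1'=0$ gives the ``$\leq$'' direction in Eq.~(\ref{Eq: Approach1Part2QuasiRelEntropy}).

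Next I would split off the $i=1$ (endpoint) contribution from the two quadrature sums and handle the $i\ge 2$ terms by Proposition~\ref{Prop: Approach1Part2VarOpt}. For the first sum, the $i\ge 2$ terms each contribute $w_i\sum_{j,k}\frac{q_k^2}{q_k+p_jt_i}|\braket{\psi_j|\phi_k}|^2 = w_i\bigl(\operatorname{Tr}[\rho]-\inf_Z(\operatorname{Tr}[\rho Z^\dagger Z]+t_i\operatorname{Tr}[\sigma(\mathds 1+Z)(\mathds 1+Z^\dagger)])\bigr)$ by Eq.~(\ref{Eq: Approach2VarOptEq1}), and the $i=1$ endpoint term contributes $w_1\sum_{j,k}q_k^2/(q_k+p_j\cdot 0)|\cdots|^2 = w_1\operatorname{Tr}[\rho]$ (using $\rho\ll\sigma$ to drop the $q_k=0$ issue, per Remark~\ref{Rem: EndNodeApp1Part2}). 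For the second sum, the $i\ge 2$ terms contribute $\frac{w_i'}{t_i'}\bigl(\operatorname{Tr}[\rho]-\inf_{Z'}(\operatorname{Tr}[\rho(\mathds 1+Z'^\dagger)(\mathds 1+Z')]+\tfrac{1}{t_i'}\operatorname{Tr}[\sigma Z'Z'^\dagger])\bigr)$ by Eq.~(\ref{Eq: Approach2VarOptEq2}), and the $i=1$ endpoint term equals $w_1'\operatorname{Tr}[\rho^2\sigma^{-1}] = -w_1'\inf_{Z'}(\operatorname{Tr}[\rho(Z'^\dagger+Z')]+\operatorname{Tr}[\sigma Z'Z'^\dagger])$ via the identity from Remark~\ref{Rem: EndNodeApp1Part2} (Remark 3.6 of~\cite{brown2023deviceindependent}). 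Collecting all the $\operatorname{Tr}[\rho]$-proportional pieces — namely $\frac{\sin((\alpha-1)\pi)}{\pi}\operatorname{Tr}[\rho](w_1+\sum_{i\ge 2}w_i+\sum_{i\ge 2}\frac{w_i'}{t_i'})$ — and using the exact GJ/GRJ quadrature value $\sum_i w_i = \int_0^1 t^{\alpha-2}dt = \frac{1}{\alpha-1}$ for the first set of weights (so $w_1+\sum_{i\ge2}w_i=\frac{1}{\alpha-1}$), this constant becomes precisely $C_m = \frac{\sin((\alpha-1)\pi)}{\pi}\operatorname{Tr}[\rho]\bigl(\frac{1}{\alpha-1}+\sum_{i=2}^m\frac{w_i'}{t_i'}\bigr)$. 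The sign juggling with $\sin(\alpha\pi)=-\sin((\alpha-1)\pi)$ for $\alpha\in(1,2)$ then turns each ``$\operatorname{Tr}[\rho]-\inf$'' into a ``$+\inf$'' term weighted by $\frac{\sin(\alpha\pi)}{\pi}$, reproducing the stated three-term bound exactly.

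Finally, for the convergence claim I would note that the GRJ quadrature error for $\int_0^1 g(t)t^\beta dt$ with $\beta>-1$ tends to $0$ as $m\to\infty$ whenever $g$ is, say, continuous on $[0,1]$ (standard Gauss–Jacobi convergence theory, which Lemma~\ref{Lem: BoundingIntviaQuad} / \cite{Hildebrand56} already packages), and here the relevant $g$'s $q_k^2/(q_k+p_jt)$, $q_k^2/(p_j+q_kt)$ are smooth on $[0,1]$ for the finitely many eigenvalue pairs with $q_k,p_j>0$; hence the finite sum over $j,k$ of the quadrature approximations converges to the finite sum of the exact integrals, which equals $\operatorname{Tr}[\rho^\alpha\sigma^{1-\alpha}]$. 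The main obstacle — and the only place real care is needed — is the endpoint node: one must correctly identify the two $t=0$ limits as $\operatorname{Tr}[\rho\sigma^0]=\operatorname{Tr}[\rho]$ and $\operatorname{Tr}[\rho^2\sigma^{-1}]$ (both genuinely requiring $\rho\ll\sigma$), rewrite the latter via the Remark~3.6 variational identity, and then verify that the bookkeeping of weights collapses the leftover constants into exactly $C_m$ using $\sum_i w_i=\frac{1}{\alpha-1}$; the rest is a mechanical re-run of the Approach~1, $\alpha\in(0,1)$ argument with the sign of $\sin(\alpha\pi)$ flipped.
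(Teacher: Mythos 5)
Your proposal is correct and follows essentially the same route as the paper's proof: apply Lemma~\ref{Lem: BoundingIntviaQuad} to Eq.~(\ref{Eq: Approach1Part2Integral}), handle the $i\ge 2$ nodes via Proposition~\ref{Prop: Approach1Part2VarOpt} and the $t=0$ endpoints via Remark~\ref{Rem: EndNodeApp1Part2}, then collect the $\operatorname{Tr}[\rho]$ terms using $\sum_i w_i = \frac{1}{\alpha-1}$ and the sign flip $\sin(\alpha\pi)=-\sin((\alpha-1)\pi)$. The convergence argument you sketch (continuity of the integrands on $[0,1]$ plus exactness of the quadrature on polynomials) is also the same as the paper's Weierstrass-based argument.
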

\begin{rem}
    
Conversely, an m-point GJ quadrature generates nodes $t_1, t_1^\prime, \dots, t_{m}, t_m^\prime \in \left(0,1\right)$ and weights $w_1,w_1^\prime,\dots, w_{m},w_m^\prime >0$ such that
    \begin{align} \label{Eq: App1Part2LowerBounds}
    \begin{split}
        \operatorname{Tr}\left[\rho^{\alpha}\sigma^{1-\alpha}\right] &\geq C_{m} + \frac{\sin(\alpha \pi )}{\pi}\sum_{i=1}^{m} w_i \inf_{Z} \left(\operatorname{Tr} \left[\rho Z^\dagger Z\right]+\operatorname{Tr}\left[\sigma \left( t_i \left(\mathds{1} + Z \right) \left(\mathds{1} + Z^\dagger \right)
        \right)\right]\right) \\
        & \  \ + \frac{\sin(\alpha \pi )}{\pi}\sum_{i=1}^{m}
        \frac{w_i^\prime}{t_{i}^\prime} \inf_{Z^\prime}\left(\operatorname{Tr}\left[\rho \left(\mathds{1} + Z^{\prime \dagger} \right) \left(\mathds{1} + Z^\prime \right)\right]+\frac{1}{t_{i}^\prime}\operatorname{Tr}\left[\sigma Z^\prime Z^{\prime \dagger} 
        \right]\right) \; ,
        \end{split}
    \end{align}
    where $C_m = \frac{\sin((\alpha-1) \pi )}{\pi}\operatorname{Tr}\left[\rho\right] \left(\frac{1}{\alpha-1} + \sum_{i=1}^{m} \frac{w_{i}^\prime}{t_{i}^\prime} \right)$.
  The lower bounds we achieve here will again generate polynomials $P_m$ and $Q_m$ that are positive semi-definite and satisfy
    \begin{eqnarray}
    \operatorname{Tr}\left[\rho^{\alpha}\sigma^{1-\alpha}\right] \geq C_m + \frac{\sin(\alpha \pi )}{\pi} \inf_{Z_1, Z_1^\prime, \dots} \operatorname{Tr}\left[\rho P_{m}(Z_1,Z_1^\prime,\dots)\right]+\operatorname{Tr}\left[\sigma Q_{m}(Z_1,Z_1^\prime,\dots)\right] \; .
\end{eqnarray}
   This is not true for the upper bounds due to the contribution from $t_1^\prime = 0$. If one requires the polynomials to be positive semi-definite, the contribution from $t_1^\prime = 0$ can sometimes be dealt with separately. Whenever $\rho \leq \beta \sigma$, where $\beta $ is some positive constant, one can use the bound~\cite[Remark 3.12]{brown2023deviceindependent} 
   \begin{align}
       \operatorname{Tr}\left[\rho^2 \sigma^{-1}\right] \leq \beta \operatorname{Tr}\left[\rho \right] \; .
   \end{align}
   This bound no longer contains any variational expressions over which one has to optimize, and can be incorporated into $C_m$. Moreover, for an $m$-point GRJ quadrature, the corresponding weight $w_1^{\prime}$ will approach $0$ as $m$ increases~\cite[Eq.\ (3.10)]{GRWeights}.
   As such, for large enough $m$, this approximation will not greatly affect the final result.
   
   Absorbing the sine pre-factor into the polynomials converts the optimization into a maximization problem. Incorporating both the pre-factor, as well as the term $C_m$, into the polynomials then yields the desired bound from $Eq.\ (\ref{Eq: IntroIdealUpperBoundAlphageq1})$. We consider ${\alpha \in (1,2)}$ in Section \ref{Section: GREATBounds}, and require upper bounds, which are achieved via GRJ quadratures.
\end{rem}
\subsection{Approach 2} \label{SubSection: App2}

The existence of the bound from this section was first discussed in~\cite{brown2023deviceindependent}, and it relies on the fact that $x^{1-\alpha}$ and $-x^{1-\alpha}$ are operator monotones for $\alpha \in \left[0,1\right)$ and $\alpha \in \left(1,2\right]$, respectively. In particular, L{\"o}wner's theorem \cite{Lwner1934berMM} provides an integral representation for these quantities, similar to Lemma \ref{Lemma: Approach1SmallAlphaIntegralFormulation}. For all $\alpha \in (0,1) \cup (1,2)$ \cite{faust2023rational,brown2023deviceindependent},
\begin{align}
    x^{1-\alpha} = 1 + \frac{\sin \left(\alpha\pi\right)}{\pi }\int_{0}^{1} \frac{x-1}{t\left(x-1\right) +1} t^{\alpha-1}\left(1-t\right)^{1-\alpha}dt \; .
\end{align}
Given any density matrices $\rho = \sum q_{k}\ketbra{\phi_{k}}{\phi_{k}}$ and $\sigma = \sum p_{j}\ketbra{\psi_{j}}{\psi_{j}}$,\footnote{If $\alpha >1$, then we include the constraint $\rho <\!\!<\sigma$.}
\begin{align}
\operatorname{Tr}\left[\rho^{\alpha}\sigma^{1-\alpha}\right] &= \sum_{\substack{j,k \ \mathrm{s.t.} \\ q_k,p_j >0}} q_{k}^\alpha p_{j}^{1-\alpha}  \left| \braket{\psi_{j} | \phi_{k}}\right|^2  \\
    &= \sum_{\substack{j,k \ \mathrm{s.t.} \\ q_k,p_j >0}} q_{k} \left(\frac{p_{j}}{q_{k}}\right)^{1-\alpha} \left| \braket{\psi_{j} | \phi_{k}}\right|^2 \\
    &=\sum_{\substack{j,k \ \mathrm{s.t.} \\ q_k,p_j >0}}  q_k \left[1 + \frac{\sin \left(\alpha\pi\right)}{\pi }\int_{0}^{1} \frac{p_j-q_k}{t\left(p_j-q_k\right) +q_k} t^{\alpha-1}\left(1-t\right)^{1-\alpha}dt \right] \left| \braket{\psi_{j} | \phi_{k}}\right|^2  \; .\label{Eq: App2Integral}
\end{align}
This integral can be upper and lower bounded via GRJ quadratures, by including either the endpoint $t_1 = 0$ or $t_m = 1$~\cite[Theorem 1]{faust2023rational}.
\begin{prop} [Special case of \cite{faust2023rational}, Theorem 1] \label{Prop: App2QuadBounds}
    Let $\alpha \in (0,1) \cup (1,2)$. For any $m \in \mathbb{N}$, an m-point GRJ quadrature with node $t_1 = 0$, will generate $m$ nodes $\{t_i\}_{i \in \{1,\dots,m\}}$ and weights $\{w_i\}_{i \in \{1,\dots,m\}}$
such that
\begin{align}
    \int_{0}^{1} \frac{x-1}{t\left(x-1\right) +1} t^{\alpha-1}\left(1-t\right)^{1-\alpha}dt \leq \sum_{i=1}^{m} w_i \frac{x-1}{t_i\left(x-1\right) +1}  \; .
\end{align}
Conversely, an m-point GRJ quadrature with node $t_m=1$ will generate $m$ nodes $\{t_i\}_{i \in \{1,\dots,m\}}$ and weights $\{w_i\}_{i \in \{1,\dots,m\}}$ such that 
\begin{align}
    \int_{0}^{1} \frac{x-1}{t\left(x-1\right) +1} t^{\alpha-1}\left(1-t\right)^{1-\alpha}dt \geq \sum_{i=1}^{m} w_i  \frac{x-1}{t_i\left(x-1\right) +1}  \; .
\end{align}
\end{prop}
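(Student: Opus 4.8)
The plan is to read the statement as the specialization of \cite[Theorem~1]{faust2023rational} to the rational integrand
\[
f_x(t) \coloneqq \frac{x-1}{t(x-1)+1} = \frac{c}{1+ct}, \qquad c \coloneqq x-1,
\]
which is exactly the integrand appearing in the L\"owner representation of $x^{1-\alpha}$ used just above; so in principle one only has to check that $f_x$ meets the hypotheses of that theorem. For completeness I would recover the bound directly from the standard error representation of Gauss--Radau--Jacobi quadrature. First I fix $x>0$, so that $c\in(-1,\infty)$ and $1+ct>0$ for every $t\in[0,1]$, whence $f_x\in C^\infty([0,1])$; and I note that for $\alpha\in(0,1)\cup(1,2)$ the Jacobi exponents $\alpha-1$ and $1-\alpha$ both lie in $(-1,1)$, so that an $m$-point GRJ rule against the weight $w(t)=t^{\alpha-1}(1-t)^{1-\alpha}$ on $[0,1]$ with one prescribed endpoint node is well defined, has strictly positive weights, and integrates polynomials of degree at most $2m-2$ exactly.

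The only real computation is the derivative
\[
f_x^{(n)}(t) = (-1)^n\, n!\, c^{\,n+1}\,(1+ct)^{-(n+1)},
\]
and in particular the odd derivative that controls the error of an $m$-point GRJ rule,
\[
f_x^{(2m-1)}(t) = -(2m-1)!\, c^{\,2m}\,(1+ct)^{-2m} \;\le\; 0 \qquad\text{for all } t\in[0,1],
\]
since $c^{2m}\ge 0$ and $1+ct>0$. By contrast, the even derivatives $f_x^{(2m)}$ carry the sign of $c=x-1$ and are not sign-definite in $x$; this is why a plain Gauss--Jacobi rule, whose error involves $f_x^{(2m)}$, cannot deliver one-sided bounds valid simultaneously for all $x$, and a Radau rule with a fixed endpoint is needed. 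It is also why the pure-power Lemma~\ref{Lem: BoundingIntviaQuad} does not apply here directly: the weight is a genuine Jacobi weight, not $t^\beta$.

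Next I would invoke the Peano-kernel error representation for GRJ quadrature (\cite[Theorem~1]{faust2023rational}, or equivalently the Hermite-interpolation remainder combined with exactness on degree $2m-2$, valid because the relevant node polynomial has constant sign on $[0,1]$): with the prescribed node at $t_1=0$ there is $\xi\in(0,1)$ with
\[
\int_0^1 f_x(t)\,w(t)\,dt \;-\; \sum_{i=1}^m w_i\, f_x(t_i) \;=\; \frac{f_x^{(2m-1)}(\xi)}{(2m-1)!}\int_0^1 t\prod_{i=2}^m (t-t_i)^2\, w(t)\,dt,
\]
and since the kernel $t\prod_{i=2}^m(t-t_i)^2 w(t)$ is nonnegative on $[0,1]$ the right-hand side is $\le 0$, giving the first inequality of the proposition. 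With the prescribed node at $t_m=1$ the node polynomial is $(t-1)\prod_{i=1}^{m-1}(t-t_i)^2$, so the kernel $(t-1)\prod_{i=1}^{m-1}(t-t_i)^2 w(t)$ is $\le 0$ on $[0,1]$, the error changes sign to $\ge 0$, and the second inequality follows.

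The step I expect to be the main obstacle is pinning down the GRJ error formula with the \emph{correct} Peano kernel and checking its sign on $[0,1]$ --- that is, tracking which endpoint carries the simple node and which nodes are doubled; this is exactly the content imported from \cite{faust2023rational}. A minor loose end is the boundary value $x=0$, where $c=-1$ and $f_x$ has a pole at $t=1$ so the smoothness hypothesis fails: there one checks that the left-hand integral still converges (for $\alpha\in(0,1)$) while the right-hand side with a node at $t=1$ diverges to $-\infty$, making the second inequality trivial, and the first follows by continuity in $x$; in any case only strictly positive ratios $x=p_j/q_k$ arise in the integral \eqref{Eq: App2Integral}, so $x=0$ can simply be excluded.
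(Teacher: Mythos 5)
Your proposal is correct, and it reconstructs essentially the argument that the paper imports wholesale from~\cite[Theorem~1]{faust2023rational} (the paper itself gives no proof of this proposition, only the citation): the sign of the Gauss--Radau error is governed by $f_x^{(2m-1)}(\xi)$ together with the sign-definite Peano kernel $(t-t_{\mathrm{fixed}})\prod_i(t-t_i)^2\,w(t)$, exactly paralleling the paper's own proof of Lemma~\ref{Lem: BoundingIntviaQuad} via~\cite[Eq.\ (8.10.22)]{Hildebrand56}. Your derivative computation $f_x^{(2m-1)}(t)=-(2m-1)!\,(x-1)^{2m}(1+(x-1)t)^{-2m}\le 0$ and the observation that the even derivatives are not sign-definite in $x$ (so a Radau rather than plain Gauss rule is needed) are both correct, and the $x=0$ boundary case is handled appropriately.
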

\begin{rem}
    We will generally be interested in the m-point quadrature which includes the end node $t_m =1$. For this node, we need to consider the contribution
    \begin{align}
        \sum_{\substack{j,k \ \mathrm{s.t.} \\ q_k,p_j >0}}  q_k \left[\frac{\sin \left(\alpha\pi\right)}{\pi } \frac{p_j-q_k}{p_j}  \right] \left| \braket{\psi_{j} | \phi_{k}}\right|^2 = \sum_{\substack{j,k \ \mathrm{s.t.} \\ p_j >0}}  \left[ \frac{\sin \left(\alpha\pi\right)}{\pi }q_k - \frac{\sin \left(\alpha\pi\right)}{\pi } \frac{q_k^2}{p_j}  \right] \left| \braket{\psi_{j} | \phi_{k}}\right|^2 \; .
    \end{align}
    Here, one runs into the issue that there does not seem to be a clean way to express
\begin{align}
    \sum_{\substack{j,k \ \mathrm{s.t.} \\ p_j >0}}  \left[ \frac{\sin \left(\alpha\pi\right)}{\pi } \frac{q_k^2}{p_j}  \right] \left| \braket{\psi_{j} | \phi_{k}}\right|^2 = \frac{\sin \left(\alpha\pi\right)}{\pi } \operatorname{Tr}\left[\rho^2  \sigma^{-1}\right]\; ,
\end{align}
as a variational problem for any $\rho, \sigma$, where $\sigma^{-1}$ should be viewed as the pseudo-inverse of $\sigma$. If $\rho <\!\!<\sigma$, then it is known that~\cite[Remark 3.6]{brown2023deviceindependent} 
\begin{align}
    \operatorname{Tr}\left[\rho^2 \sigma^{-1}\right] = - \inf_{Z} \left(\operatorname{Tr}\left[\rho\left( Z^\dagger  + Z \right)\right] + \operatorname{Tr}\left[\sigma Z Z^\dagger\right]\right) \; .
\end{align}
    However, this no longer holds true for general $\rho$, $\sigma$.\footnote{This can be seen by considering  
$
        \rho = \begin{pmatrix}
1 & 0 \\
0 & 0
\end{pmatrix} \ , \ \sigma = \begin{pmatrix}
0 & 0 \\
0 & 1
\end{pmatrix} 
$, which are orthogonal. In this case, one has that $\operatorname{Tr}\left[\rho^2  \sigma^{-1}\right]=0$, but
$
         -\inf_{Z} \left(\operatorname{Tr}\left[\rho\left( Z^\dagger  + Z \right)\right] + \operatorname{Tr}\left[\sigma Z Z^\dagger\right]\right)
$ diverges. To see this, simply consider $Z=-k \cdot \rho$, where $k$ is a positive real number. These are feasible points in the optimization problem and, as $k\to \infty$, the objective function diverges. The reason for this difference is that we consider the pseudo-inverse, whereas the variational optimization problem is equivalent to $\lim_{\epsilon \to 0^+}\operatorname{Tr}\left[\rho^2  \left(\sigma +\epsilon \mathds{1} \right)^{-1}\right]$. } To avoid this issue, we will from now on assume that $\rho <\!\!<\sigma$, even for $\alpha < 1$.
\end{rem}
For any $t\in \left(0,1\right]$, and density operators $\rho, \sigma$ such that $\rho <\!\!<\sigma$, 
\begin{align}
    \sum_{\substack{j,k \ \mathrm{s.t.} \\ q_k,p_j >0}}  q_k \left[ \frac{\sin \left(\alpha\pi\right)}{\pi }\frac{p_j-q_k}{t\left(p_j-q_k\right) +q_k}  \right] \left| \braket{\psi_{j} | \phi_{k}}\right|^2  = \lim_{\epsilon \to 0^+} \sum_{j,k}  q_k \left[\frac{\sin \left(\alpha\pi\right)}{\pi }\frac{(p_j+\epsilon)-q_k}{t\left((p_j+\epsilon)-q_k\right) +q_k}  \right] \left| \braket{\psi_{j} | \phi_{k}}\right|^2 \; , \label{Eq: App2CompletionSum}
\end{align}
due to the fact that for any $k,j$ such that $p_j=0$, either $q_k=0$ or $\left| \braket{\psi_{j} | \phi_{k}}\right| = 0$. For simplicity, and in accordance with~\cite{brown2023deviceindependent}, we will simply refer to this limit as 
\begin{align}
    \sum_{j,k} q_k \left[\frac{ \sin\left(\alpha\pi\right)}{\pi }\frac{p_j-q_k}{t\left(p_j-q_k\right) +q_k}\right] \left| \braket{\psi_{j} | \phi_{k}}\right|^2 \; .
\end{align}
These quasi-relative entropies are known to have variational expressions~\cite[Proposition~3.5]{brown2023deviceindependent}.

\begin{prop}[\cite{brown2023deviceindependent}, Proposition~3.5] \label{Prop: Approach2VarOpt}
Let $\rho = \sum q_{k}\ketbra{\phi_{k}}{\phi_{k}}$ and $\sigma = \sum p_{j}\ketbra{\psi_{j}}{\psi_{j}}$ be positive semi-definite operators. For all $t \in \left(0,1\right]$,
\begin{align}
    \sum_{j,k} q_k \left[\frac{ \sin\left(\alpha\pi\right)}{\pi }\frac{p_j-q_k}{t\left(p_j-q_k\right) +q_k}\right] \left| \braket{\psi_{j} | \phi_{k}}\right|^2
\end{align}
is equal to
\begin{align}
    \frac{\sin\left(\alpha\pi\right)}{\pi \cdot t}
   \inf_{Z} \left(\operatorname{Tr}\left[\rho \left(1+Z+Z^\dagger + (1-t)Z^\dagger Z\right)\right]+t\operatorname{Tr}\left[\sigma Z Z^\dagger 
      \right]\right)   \; .
\end{align}
\end{prop}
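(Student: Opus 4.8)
The plan is to reduce the claimed operator identity to a scalar identity term by term in the eigenbasis decompositions, and then to solve the resulting scalar infimum explicitly. Concretely, expand $Z = \sum_{j,k} z_{jk} \ketbra{\phi_k}{\psi_j}$ in the ``mixed'' basis built from the eigenvectors of $\rho$ and $\sigma$ (one could equally write $Z$ with arbitrary matrix entries, but this choice diagonalizes the problem). Substituting into $\operatorname{Tr}[\rho(1+Z+Z^\dagger+(1-t)Z^\dagger Z)] + t\operatorname{Tr}[\sigma Z Z^\dagger]$ and using $\rho\ket{\phi_k} = q_k\ket{\phi_k}$, $\sigma\ket{\psi_j} = p_j\ket{\psi_j}$, the objective decouples into $\operatorname{Tr}[\rho] + \sum_{j,k}\big(q_k(z_{jk}+\bar z_{jk}) + q_k(1-t)|z_{jk}|^2|\braket{\psi_j|\phi_k}|^2 + t\,p_j |z_{jk}|^2 |\braket{\psi_j|\phi_k}|^2\big)$, so that the infimum over $Z$ is an independent scalar minimization in each $z_{jk}$ for which $\braket{\psi_j|\phi_k}\neq 0$ (and contributes nothing when $\braket{\psi_j|\phi_k}=0$, matching the convention from Eq.~\eqref{Eq: App2CompletionSum}). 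The cross term $q_k(z_{jk}+\bar z_{jk})$ has no $|\braket{\psi_j|\phi_k}|^2$ weight, so I should first absorb it: completing the square in $z_{jk}$ over the real and imaginary parts.

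The key computation is then: minimize $g(z) = a(z+\bar z) + b|z|^2$ over $z\in\mathbb{C}$ where $a = q_k$ and $b = \big(q_k(1-t) + t p_j\big)|\braket{\psi_j|\phi_k}|^2$. Writing $z = u+iv$, $g = 2au + b(u^2+v^2)$; assuming $b>0$ (which holds whenever $\braket{\psi_j|\phi_k}\neq0$ and $t\in(0,1]$, since then $q_k(1-t)+tp_j>0$ unless $q_k = p_j = 0$, a case already excluded), the minimum is at $v=0$, $u = -a/b$, giving $g_{\min} = -a^2/b = -q_k^2/\big((q_k(1-t)+tp_j)|\braket{\psi_j|\phi_k}|^2\big)$. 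Hence
\begin{align}
\inf_Z\big(\cdots\big) = \operatorname{Tr}[\rho] - \sum_{j,k} \frac{q_k^2\,|\braket{\psi_j|\phi_k}|^2}{\big(q_k(1-t)+tp_j\big)|\braket{\psi_j|\phi_k}|^2}\cdot|\braket{\psi_j|\phi_k}|^2 \; ,
\end{align}
wait — I should be careful with bookkeeping of the $|\braket{\psi_j|\phi_k}|^2$ factors here, since the linear term carries weight $1$ and the quadratic term carries weight $|\braket{\psi_j|\phi_k}|^2$; the upshot after simplification is $\operatorname{Tr}[\rho] - \sum_{j,k} \frac{q_k^2|\braket{\psi_j|\phi_k}|^2}{q_k(1-t)+tp_j}$. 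Then I would multiply by $\frac{\sin(\alpha\pi)}{\pi t}$ and verify algebraically that $\frac{1}{t}\Big(q_k - \frac{q_k^2}{q_k(1-t)+tp_j}\Big) = \frac{q_k(p_j - q_k)}{t(p_j-q_k)+q_k}$, which is the summand on the left-hand side; this is a one-line rational-function identity (common denominator $q_k(1-t)+tp_j = t(p_j-q_k)+q_k$), and $\operatorname{Tr}[\rho] = \sum_{j,k} q_k|\braket{\psi_j|\phi_k}|^2$ supplies the ``$q_k$'' term.

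The main obstacle is handling the degenerate terms cleanly: when $\braket{\psi_j|\phi_k}=0$ the quadratic coefficient $b$ vanishes, $g(z) = q_k(z+\bar z)$ is unbounded below as $z\to-\infty$ unless $q_k = 0$, so naively the infimum would be $-\infty$. The resolution — and the reason the $\rho\ll\sigma$ hypothesis and the $\epsilon$-regularization of Eq.~\eqref{Eq: App2CompletionSum} matter — is that this is exactly the situation addressed in \cite{brown2023deviceindependent}: one works with $\sigma+\epsilon\mathds{1}$ in place of $\sigma$ so that $b>0$ for every $(j,k)$ with $q_k>0$, carries out the scalar minimization, and takes $\epsilon\to0^+$ at the end; any $(j,k)$ with $q_k>0$, $\braket{\psi_j|\phi_k}\neq 0$ but $p_j=0$ is ruled out by $\rho\ll\sigma$, so the limit is finite and agrees with the stated expression. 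Since this proposition is quoted verbatim from \cite[Proposition~3.5]{brown2023deviceindependent}, I would in fact simply cite it; the sketch above is the self-contained argument one would give if a proof were required.
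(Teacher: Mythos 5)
The strategy is sound and is exactly the hands-on version of the Pusz--Woronowicz argument that the paper uses in its appendix for the sibling results (Propositions \ref{Prop: Approach1Part1VarOpt} and \ref{Prop: Approach1Part2VarOpt}); note the paper itself gives no proof of this particular statement, since it is imported verbatim from \cite{brown2023deviceindependent}. Your final answer and the closing rational-function identity $\frac{1}{t}\bigl(q_k - \frac{q_k^2}{(1-t)q_k+tp_j}\bigr) = \frac{q_k(p_j-q_k)}{t(p_j-q_k)+q_k}$ are both correct.

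However, the central expansion as written does not go through, and the ``wait --- I should be careful with bookkeeping'' aside flags the problem without repairing it. With your choice $Z=\sum_{j,k}z_{jk}\ketbra{\phi_k}{\psi_j}$, the term $\operatorname{Tr}[\rho Z^\dagger Z]=\sum_{j,j',k}\bar z_{jk}z_{j'k}\bra{\psi_{j'}}\rho\ket{\psi_j}$ does \emph{not} decouple, because $\rho$ is not diagonal in the $\{\ket{\psi_j}\}$ basis. The decomposition that works is the transposed one, $Z=\sum_{j,k}z_{jk}\ketbra{\psi_j}{\phi_k}$: the operators $\ketbra{\psi_j}{\phi_k}$ are an orthonormal (Hilbert--Schmidt) joint eigenbasis of the commuting maps $a\mapsto\sigma a$ and $a\mapsto a\rho$, giving $\operatorname{Tr}[\rho Z^\dagger Z]=\sum_{j,k}q_k|z_{jk}|^2$ and $\operatorname{Tr}[\sigma ZZ^\dagger]=\sum_{j,k}p_j|z_{jk}|^2$ with \emph{no} overlap factor on the quadratic terms, while the linear term is $\operatorname{Tr}[\rho(Z+Z^\dagger)]=\sum_{j,k}2q_k\operatorname{Re}\bigl(z_{jk}\overline{\braket{\psi_j|\phi_k}}\bigr)$, i.e.\ the overlap sits on the \emph{linear} term (as a complex amplitude, not its modulus squared) rather than on the quadratic one. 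The per-$(j,k)$ minimization of $2\operatorname{Re}(\bar a z)+b|z|^2$ with $a=q_k\braket{\psi_j|\phi_k}$ and $b=(1-t)q_k+tp_j$ then gives $-|a|^2/b=-q_k^2|\braket{\psi_j|\phi_k}|^2/((1-t)q_k+tp_j)$, which is your claimed upshot. This correction also changes the diagnosis of the degenerate case: the quadratic coefficient does \emph{not} vanish when $\braket{\psi_j|\phi_k}=0$ (there the linear term vanishes and $z_{jk}=0$ is optimal, contributing nothing); the genuine degeneracy is $(1-t)q_k+tp_j=0$, i.e.\ $t=1$ with $p_j=0$ (or $q_k=p_j=0$ for $t<1$), and it is precisely there that $\rho\ll\sigma$ forces $q_k\braket{\psi_j|\phi_k}=0$ and keeps the infimum finite --- your $\epsilon$-regularization remark is the right tool, just aimed at the wrong degenerate set.
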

Proposition \ref{Prop: Approach2VarOpt} combined with the $t_m = 1$ GRJ quadrature are sufficient to provide the desired upper bounds on Petz-{\Renyi} divergences. 
\begin{restatable}[Approach 2]{thm}{TheoremApptwo}
\label{Theorem: App2}
    Let $H$ be a finite-dimensional Hilbert space, and let $\rho, \sigma$ be two positive semi-definite operators on $H$ such that $\rho <\!\!<\sigma$. Then, for any $\alpha \in (0,1)$ and $m \in \mathbb{N}$, the m-point GRJ quadrature with endpoint $t_m = 1$ generates nodes $t_1,\dots, t_{m} \in \left(0,1\right]$ and weights $w_1,\dots, w_{m} >0$ such that
    \begin{eqnarray}
         \operatorname{Tr}\left[\rho^{\alpha}\sigma^{1-\alpha}\right] \geq  C_{m} + \frac{\sin(\alpha \pi )}{\pi}\sum_{i=1}^{m} \frac{w_{i}}{t_i} 
        \inf_{Z} \left(\operatorname{Tr}\left[\rho\left(Z+Z^\dagger + \left(1-t_i\right)Z^\dagger Z\right)\right]+t_i\operatorname{Tr}\left[\sigma Z Z^\dagger\right]\right) \; ,
    \end{eqnarray}
    where $C_m = \operatorname{Tr}\left[\rho\right] \left(1 + \frac{\sin(\alpha \pi )}{\pi}\sum_{i=1}^{m} \frac{w_{i}}{t_i} \right)$. The inequality is reversed for $\alpha \in (1,2)$. Moreover, as $m \to \infty$, the right-hand-side converges to $\operatorname{Tr}\left[\rho^{\alpha}\sigma^{1-\alpha}\right]$.
\end{restatable}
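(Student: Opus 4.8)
\textbf{Proof proposal for Theorem \ref{Theorem: App2}.}

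The plan is to combine the integral representation from L\"owner's theorem with the quadrature bounds of Proposition \ref{Prop: App2QuadBounds} and the variational formula of Proposition \ref{Prop: Approach2VarOpt}, following the same pattern as the proof of Theorem \ref{Theorem: App1Part1}. First I would start from the spectral expansion of $\operatorname{Tr}\left[\rho^{\alpha}\sigma^{1-\alpha}\right]$ culminating in Eq.\ (\ref{Eq: App2Integral}), restricting attention to the indices with $q_k,p_j>0$ so every fraction is well defined. Then I would apply Proposition \ref{Prop: App2QuadBounds} with the endpoint $t_m=1$ quadrature to each integral $\int_0^1 \frac{p_j-q_k}{t(p_j-q_k)+q_k}t^{\alpha-1}(1-t)^{1-\alpha}dt$; since this yields a lower bound on each integral for $\alpha\in(0,1)$ (the integrand's $x\mapsto (x-1)/(t(x-1)+1)$ is applied with $x=p_j/q_k>0$, so $x-1>-1$ and the relevant monotonicity holds), and since each coefficient $q_k\,|\braket{\psi_j|\phi_k}|^2$ is nonnegative, summing over $j,k$ preserves the inequality direction and gives
\begin{align}
\operatorname{Tr}\left[\rho^{\alpha}\sigma^{1-\alpha}\right] \geq \operatorname{Tr}[\rho] + \sum_{i=1}^{m} w_i \sum_{\substack{j,k\ \mathrm{s.t.}\\ q_k,p_j>0}} q_k\,\frac{\sin(\alpha\pi)}{\pi}\,\frac{p_j-q_k}{t_i(p_j-q_k)+q_k}\,\left|\braket{\psi_j|\phi_k}\right|^2 .
\end{align}

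Next I would handle the two types of nodes separately. For the interior nodes $t_i\in(0,1)$ and any node with $t_i\in(0,1]$, Remark after Eq.\ (\ref{Eq: App2CompletionSum}) lets me re-include all $j,k$ (using $\rho<\!\!<\sigma$), so Proposition \ref{Prop: Approach2VarOpt} applies verbatim, converting the $i$-th inner sum into $\frac{\sin(\alpha\pi)}{\pi t_i}\inf_Z(\operatorname{Tr}[\rho(1+Z+Z^\dagger+(1-t_i)Z^\dagger Z)]+t_i\operatorname{Tr}[\sigma ZZ^\dagger])$. The constant $1$ inside the trace pulls out a term $\frac{\sin(\alpha\pi)}{\pi t_i}\operatorname{Tr}[\rho]$, which, together with the leading $\operatorname{Tr}[\rho]$, assembles into $C_m=\operatorname{Tr}[\rho](1+\frac{\sin(\alpha\pi)}{\pi}\sum_i \frac{w_i}{t_i})$; what remains of each term is exactly $\frac{\sin(\alpha\pi)}{\pi}\frac{w_i}{t_i}\inf_Z(\operatorname{Tr}[\rho(Z+Z^\dagger+(1-t_i)Z^\dagger Z)]+t_i\operatorname{Tr}[\sigma ZZ^\dagger])$, matching the claimed expression. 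The $\alpha\in(1,2)$ case is identical except that $\sin(\alpha\pi)<0$ and Proposition \ref{Prop: App2QuadBounds}'s $t_m=1$ quadrature combined with $\rho<\!\!<\sigma$ flips every inequality, giving the reversed bound; I should double-check that for $\alpha\in(1,2)$ one indeed has $\rho<\!\!<\sigma\Rightarrow p_j>0$ whenever $q_k|\braket{\psi_j|\phi_k}|^2\neq0$, which is precisely Remark \ref{Rem: CompletingSumApp1Part2}.

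For the convergence claim, I would invoke the standard fact that GRJ quadratures with a fixed endpoint converge to the integral for continuous integrands as $m\to\infty$ (the weights are positive and the nodes fill $[0,1]$); since for each fixed $j,k$ the function $t\mapsto (p_j-q_k)/(t(p_j-q_k)+q_k)\cdot t^{\alpha-1}(1-t)^{1-\alpha}$ is integrable on $[0,1]$ and the sum over $j,k$ is finite, the quadrature error vanishes and the bound converges to $\operatorname{Tr}[\rho^{\alpha}\sigma^{1-\alpha}]$ exactly. I expect the main obstacle to be the bookkeeping around the $t_m=1$ endpoint: there the integrand's weight $(1-t)^{1-\alpha}$ is singular for $\alpha\in(1,2)$ and the fraction evaluates to $(p_j-q_k)/p_j$, so I must carefully verify that Proposition \ref{Prop: App2QuadBounds} (which is quoted as already accounting for this endpoint via \cite{faust2023rational}) indeed delivers a finite weighted node value there, and that $\rho<\!\!<\sigma$ is exactly what is needed for the $q_k^2/p_j$ piece to be well-defined and for Proposition \ref{Prop: Approach2VarOpt} to remain valid at $t_m=1$. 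Everything else is routine substitution and rearrangement.
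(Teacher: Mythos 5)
Your proposal is correct and follows essentially the same route as the paper's proof: apply Proposition \ref{Prop: App2QuadBounds} to Eq.\ (\ref{Eq: App2Integral}), re-include the zero-eigenvalue terms via Eq.\ (\ref{Eq: App2CompletionSum}) using $\rho <\!\!<\sigma$, apply Proposition \ref{Prop: Approach2VarOpt} at each node $t_i\in(0,1]$ (including the endpoint $t_m=1$), and absorb the constant contributions into $C_m$, with the sign of $\sin(\alpha\pi)$ reversing the inequality for $\alpha\in(1,2)$. The only minor difference is the convergence step, where the paper invokes \cite[Theorem 1]{faust2023rational} directly for the vanishing of the quadrature error rather than the generic continuous-integrand argument you sketch, but both rest on the same fact.
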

\begin{rem} 
Theorem \ref{Theorem: App2} also generates polynomials $P_m$ and $Q_m$ such that
\begin{eqnarray}
    \operatorname{Tr}\left[\rho^{\alpha}\sigma^{1-\alpha}\right] \approx  C_m + \frac{\sin(\alpha \pi )}{\pi} \inf_{Z_1, Z_2,\dots} \operatorname{Tr}\left[\rho P_{m}(Z_1,Z_2,\dots)\right]+\operatorname{Tr}\left[\sigma Q_{m}(Z_1,Z_2,\dots)\right] \; .
\end{eqnarray}    
    However, unlike Theorem  \ref{Theorem: App1Part1}, $P_m$ is no longer generally positive semi-definite. Moreover, Theorem  \ref{Theorem: App2} assumes $\rho <\!\!<\sigma$. As such, it cannot directly be used when we consider device-independent advantage distillation in Section \ref{Section: AdvantageDistill}.

    To achieve the reverse bounds, one has to include the end node $t_1 = 0$~\cite[Theorem 1]{faust2023rational}. Here, one has to consider a sum of the form
    \begin{align}
    \sum_{\substack{j,k \ \mathrm{s.t.} \\ q_k,p_j >0}}  q_k \left[ \frac{ \sin \left(\alpha\pi\right)}{\pi }\frac{p_j-q_k}{q_k}  \right] \left| \braket{\psi_{j} | \phi_{k}}\right|^2 &=   \frac{ \sin \left(\alpha\pi\right)}{\pi }\operatorname{Tr}\left[ \sigma  \rho^{0}\right] -  \frac{ \sin \left(\alpha\pi\right)}{\pi }\operatorname{Tr}\left[\rho  \sigma^{0}\right] \\
    &=  \frac{ \sin \left(\alpha\pi\right)}{\pi }\operatorname{Tr}\left[ \sigma  \rho^{0}\right] -  \frac{ \sin \left(\alpha\pi\right)}{\pi }\operatorname{Tr}\left[\rho \right]\; ,
\end{align}
where the last line follows from the fact that $\rho <\!\!<\sigma$. Moreover, to achieve upper bounds on $\operatorname{Tr}\left[\rho^{\alpha}\sigma^{1-\alpha}\right]$ for $\alpha \in (0,1)$ and lower bounds for $\alpha \in (1,2)$, it is sufficient to apply the bound
\begin{align}
    \operatorname{Tr}\left[ \sigma  \rho^{0}\right] \leq \operatorname{Tr}\left[ \sigma\right] \; .
\end{align}
\end{rem}

\section{Improved DI Protocol Key Rates using the Generalized R\'enyi Entropy Accumulation Theorem} \label{Section: GREATBounds}
\subsection{Introduction}
We consider the standard set-up for DIQKD (also basically usable for \term{device-independent randomness expansion}, DIRE~\cite{Liu_2021}) with two parties, Alice and Bob, who each have access to a measurement device with multiple measurement settings. Apart from the set of possible inputs and outputs, the devices are not further characterized.
\begin{algorithm}[H]
\floatname{algorithm}{}
\caption{Set-up}
\begin{tabular}{lll} 
 Parties: &Alice &Bob\\
 Inputs: & $\mathcal{X} = \{0,\dots,M_{A}\}$  & $\mathcal{Y} = \{0,\dots,M_{B}\}$\\
 Key-Inputs: & $\mathcal{X}_{K} \subset \mathcal{X}$  & $\mathcal{Y}_{K} \subset \mathcal{Y}$\\
 Test-Inputs: & $\mathcal{X}_{T} \subset \mathcal{X}$  & $\mathcal{Y}_{T} \subset \mathcal{Y}$\\
 Outputs for Inputs $X,Y$: & $\mathcal{A}_{X} = \{0,\dots,m_{A,X}\}$  &$\mathcal{B}_{Y} = \{0,\dots,m_{B,Y}\}$\\
\end{tabular}
\end{algorithm} 
Each round of the protocol can be categorized into one of two groups; key-generation and test rounds. The former is used to create a raw key, using so-called key-generating measurements. Rounds of the latter constitute a type of non-local test which, if passed, verify that that the measurement outcomes are sufficiently random to an adversary, Eve. For each test round, $i$, the test data, $\bar{C}_i$, depends on the inputs $X_i,Y_i$ and outputs $A_i,B_i$, i.e. there exists a function $\bar{C}_i = f(A_i,B_i, X_i, Y_i)$.
One can additionally set $C_i = \perp$ for key-generating rounds, which simply means that no test data has been gathered. In this case, one can parameterize the test data for any round via a function
\begin{equation} \label{FunctionTestData}
    \bar{C}_i = f(A_i,B_i, X_i, Y_i,T_i) \; ,
\end{equation}
where $T_i \in \{0,1\}$ is a (random) bit that determines whether the $i$'th round will be a test- or key-generating round. Alice and Bob then verify that the frequency distribution on the stored $C_i$ values lies within some set of accepted distributions, $S_\Omega$. In the most general case, the data stored during test rounds is the combined inputs and outputs, and therefore 
\begin{eqnarray}
    \bar{C}_i \in \{ \perp\} \cup \{ (A,B,X,Y)|X \in \mathcal{X}_{T},Y \in \mathcal{Y}_{T}, A \in \mathcal{A}_{X},B \in \mathcal{B}_{Y}  \}
    \; .
\end{eqnarray}
Alternatively, one can use a simple ``pass/fail" condition for every test round, using a function $w(A,B,X,Y)$. Here, Alice and Bob pass the test if $w(A,B,X,Y)=1 $, fail if $w(A,B,X,Y)=0 $, and the data that can be stored in $\bar{C}_i$ reduces to
\begin{eqnarray}
    \bar{C}_i \in \{ \perp, 0,1\} \; .
\end{eqnarray}


\begin{algorithm}[H]
\floatname{algorithm}{}
\caption{Parameters \& Notation}
\begin{tabular}{lll} 
 Rounds: & $n$  & \\
 Prob.\ of test round: & $\gamma$  & \\
 Input $i$'th Round: & $X_i, Y_i$  & \\
 Output $i$'th Round: & $A_i, B_i$  & \\
Test/Key $i$'th Round: & $T_i$  & \\
  Test Data $i$'th round: & $\bar{C}_i$  & \\
 All Outputs: & ${X}_{1}^{n}, {Y}_{1}^{n}$  & \\
 All Inputs: & ${A}_{1}^{n}, {B}_{1}^{n}$  & \\
  All Test/Key Bits: & ${T}_{1}^{n}$  & \\
    All Test Data: & $\bar{{C}}_{1}^{n}$  & \\
\end{tabular}
\end{algorithm} 

\begin{algorithm}[H]
\floatname{algorithm}{}
\caption{Cryptographic Protocol}
\begin{algorithmic}[1] 
\State For all rounds, $i \in \left[n\right]$:
\begin{algsubstates}
\State Alice and Bob generate a common random bit $T_i$, such that $\operatorname{P}\left(T_i=0\right) = 1-\gamma$ and $\operatorname{P}\left(T_i=1\right) = \gamma$.
\State If $T_i=0$, both parties will choose 
key-generating inputs $\left(X_i,Y_i\right) \in \mathcal{X}_{K} \times \mathcal{Y}_{K}$ according to some distribution, obtaining outputs $A_i$ and $B_i$, respectively. They then set $ \bar{C}_i = \perp$. 
\State If $T_i=1$, both parties will choose test inputs $\left(X_i,Y_i\right) \in \mathcal{X}_{T} \times \mathcal{Y}_{T}$ according to some distribution, obtaining outputs $A_i$ and $B_i$, respectively. They set $\bar{C}_i$ according to the specified function of $\left(A_i,B_i,X_i,Y_i\right)$.\protect\footnotemark
\end{algsubstates}
\State Parameter estimation (a.k.a.~acceptance test): The protocol aborts if the observed frequency distribution $\freq_{\bar{c}_1^n}$ lies outside some predetermined set $S_\Omega$.
\State Classical postprocessing: Alice and Bob perform some additional classical operations, including for instance error correction in the case of DIQKD, ending with a privacy amplification step to generate their final keys.
\end{algorithmic}
\label{alg:OriginalProtocol}
\end{algorithm}
\footnotetext{For strict accuracy, in the physical protocol Alice and Bob should only announce the values required to compute $\bar{C}_i$ \emph{after} all the quantum states have been sent and measured, to prevent Eve updating her attack based on these announcements. The description here is more regarding a ``virtual'' process obtained by commuting some of these measurements and announcements; see~\cite{Arnon-Friedman:2018aa,TSB+22} for further discussion. Alternatively, this restriction can be avoided by working under the model of the generalized EAT, but this introduces a slight change of {\Renyi} parameters in the subsequent bounds, and we hence avoid it here for ease of presentation.}
In the above protocol, for simplicity, we have assumed Alice and Bob can jointly generate the bits $T_i$. This can be achieved using some initial seed randomness; for more detailed discussion, refer to~\cite{Liu_2021,Bhavsar2023} for the case of DIRE and~\cite{TSB+22} for the case of DIQKD. Furthermore, in the case of DIQKD, in order to compute the value of $\bar{C}_i$, at least one of Alice and/or Bob needs to know all the values $\left(A_i,B_i,X_i,Y_i\right)$ in test rounds. For the purposes of our analysis we shall suppose this is simply achieved by having Bob publicly announce $Y_i$ together with a value
\begin{align}\label{eq:testB}
\bar{B}_i = 
\begin{cases}
0 & \text{if } T_i = 0\\
B_i & \text{if } T_i = 1
\end{cases},
\end{align}
where we suppose the value $0$ lies in the alphabet of $B_i$ (if it is not, just replace it in the above with some fixed value from that alphabet), to slightly reduce the effects of some dimension-dependent bounds we discuss later. 
This allows Alice to compute $\bar{C}_i$ on her side and decide whether to accept or abort.
Slightly better finite-size keyrates can be obtained by instead implementing somewhat elaborate error-correction procedures~\cite{NDN+22}, but we do not consider them here --- our goal is just to compare our bounds on some ``basic'' smooth min-entropy terms against previous EAT results, and these variants do not affect this comparison.

After $n$ rounds, the global state is of the form $\rho_{{A}_{1}^{n} {B}_{1}^{n}
\bar{{C}}_{1}^{n}{X}_{1}^{n}{Y}_{1}^{n}{T}_{1}^{n}\mathbf{E}}$, which can be seen as a convex sum
\begin{eqnarray}
    \rho_{{A}_{1}^{n} {B}_{1}^{n}
\bar{{C}}_{1}^{n}{X}_{1}^{n}{Y}_{1}^{n}{T}_{1}^{n}\mathbf{E}} = p_{\Omega}\rho_{|\Omega} + \left(1-p_{\Omega}\right)\rho_{|\bar{\Omega}} \; ,
\end{eqnarray}
where $\rho_{|\Omega}$ contains all terms for which the distribution of the data stored in $\bar{{C}}_{1}^{n}$ lies in the set of accepted distributions, $S_{\Omega}$. 
The length of secure key that can be produced by the final privacy amplification step will depend on the smooth min-entropy\footnote{As discussed previously, recent work has shown~\cite{arx_GLT+22} that better keyrates are obtained by directly applying the {\Renyi} entropy in a privacy amplification theorem derived in~\cite{arx_Dup21}, but for ease of comparison to earlier EAT security proofs, here we just work with smooth min-entropy.} of Alice and/or Bob's raw outputs conditioned on the side-information that Eve holds just before that step. More specifically, for DIRE the quantity of interest would just be $H_{\operatorname{min}}^\epsilon \left({A}_{1}^{n}{B}_{1}^{n}|
{X}_{1}^{n}{Y}_{1}^{n}{T}_{1}^{n}\mathbf{E}\right)_{\rho_{|\Omega}}$ (assuming we use both Alice and Bob's outputs for key generation). For DIQKD the situation is more complex, but basically it suffices to bound $H_{\operatorname{min}}^\epsilon \left({A}_{1}^{n}|
{X}_{1}^{n}{Y}_{1}^{n}{T}_{1}^{n}\mathbf{E}\right)_{\rho_{|\Omega}}$ (or $H_{\operatorname{min}}^\epsilon \left({A}_{1}^{n}|\bar{{B}}_{1}^{n}{X}_{1}^{n}{Y}_{1}^{n}{T}_{1}^{n}\mathbf{E}\right)_{\rho_{|\Omega}}$, depending on the protocol) and then perform some additional computations to account for public announcements and/or error correction. Here we focus on explaining how to bound $H_{\operatorname{min}}^\epsilon \left({A}_{1}^{n}{B}_{1}^{n}|
{X}_{1}^{n}{Y}_{1}^{n}{T}_{1}^{n}\mathbf{E}\right)_{\rho_{|\Omega}}$, and then briefly sketch out how to handle the DIQKD version instead.

To do so, we can relate the smooth min-entropy to {\Renyi} entropies as follows.
First, for technical reasons we need to consider another variant of {\Renyi} entropy referred to as the \term{sandwiched {\Renyi} entropy}, which we denote as $\widetilde{H}^\uparrow_\alpha$; however, we will not need to make use of its actual definition in this work, so we just defer to~\cite{Tomamichel2015QuantumIP} for an explanation of its definition and properties (we use the same notation $\widetilde{H}^\uparrow_\alpha$ for it as in that book). For this entropy, we have for all $\alpha \in \left(1,2\right]$:
\begin{align}
    H_{\operatorname{min}}^{\epsilon}\left({A}_{1}^{n}{{B}}_{1}^{n}|
{X}_{1}^{n}{Y}_{1}^{n}{T}_{1}^{n}\mathbf{E}\right)_{\rho_{|\Omega}} 
&\geq \widetilde{H}_{\alpha}^{\uparrow}\left({A}_{1}^{n}{{B}}_{1}^{n}|
{X}_{1}^{n}{Y}_{1}^{n}{T}_{1}^{n}\mathbf{E}\right)_{\rho_{|\Omega}} - \frac{\log (2/\epsilon^{2})}{\alpha - 1} \label{Eq: RenyitoHmineps} \\
&= \widetilde{H}_{\alpha}^{\uparrow}\left({A}_{1}^{n}{{B}}_{1}^{n}\bar{{C}}_{1}^{n}|
{X}_{1}^{n}{Y}_{1}^{n}{T}_{1}^{n}\mathbf{E}\right)_{\rho_{|\Omega}} - \frac{\log (2/\epsilon^{2})}{\alpha - 1}  \label{Eq: InterimGREATStep}\\ 
\text{(optionally)} &\geq \widetilde{H}_{\alpha}^{\uparrow}\left({A}_{1}^{n}\bar{{C}}_{1}^{n}|
{X}_{1}^{n}{Y}_{1}^{n}{T}_{1}^{n}\mathbf{E}\right)_{\rho_{|\Omega}} - \frac{\log (2/\epsilon^{2})}{\alpha - 1} \; , \label{Eq: FinalGREATStep}
\end{align}
where the first line is \cite[Prop.~6.5 and Eq.~(6.92)]{Tomamichel2015QuantumIP} (originally shown in~\cite{Mull13}), the second line holds because $\bar{{C}}_{1}^{n}$ can be ``projectively reconstructed'' from ${A}_{1}^{n}{B}_{1}^{n}
{X}_{1}^{n}{Y}_{1}^{n}{T}_{1}^{n}$ in the sense described in~\cite[Lemma~B.7]{DFR20}, and the last line holds because classical registers have non-negative contributions to entropy~\cite[Lemma~5.3]{Tomamichel2015QuantumIP}.
That last inequality is not strictly necessary, as the {\Renyi} EAT can directly be applied to Eq.\ (\ref{Eq: InterimGREATStep}); however, we apply it in this work to obtain a more straightforward comparison to previous results in~\cite{Arnon-Friedman:2018aa,Liu_2021} (which also basically excluded the entropy contributions from Bob's outputs, as they relied on a closed-form expression from~\cite{pironio2009deviceindependent} for Alice's entropy only). The resulting SDPs in our approach are also simpler when considering this version.

The {\Renyi} EAT from~\cite[Theorem~5.1 and Lemma~6.1]{arx_AHT24} then gives a lower bound on $\widetilde{H}_{\alpha}^{\uparrow}\left({A}_{1}^{n}\bar{{C}}_{1}^{n}|
{X}_{1}^{n}{Y}_{1}^{n}{T}_{1}^{n}\mathbf{E}\right)_{\rho_{|\Omega}}$, where the first-order contribution only depends on a single-round quantity. This single round can be described by a mapping 
\begin{equation} \label{Eq: SingleRoundMapping}
\begin{aligned}
\mathcal{M}: \mathcal{H}_{Q_AQ_BE} &\to \mathcal{H}_{A\bar{C}XYTE} \\
    \sigma_{Q_AQ_BE}  &\mapsto  \sum_{x,y,t}  \operatorname{P}\left(X=x,Y=y,T=t\right)\mathcal{M}(\sigma)_{A\bar{C}E| X=x,Y=y,T=t } \otimes\ketbra{x,y,t}{x,y,t} \; ,
\end{aligned}
\end{equation}
with 
\begin{align}
    \mathcal{M}(\sigma)_{A\bar{C}E| X=x,Y=y,T=t } &:= \sum_{a,b} \ketbra{a,\bar{c}}{a,\bar{c}} \otimes \operatorname{Tr}_{Q_AQ_B}\left[\left(M_{a|x} \otimes N_{b|y} \otimes \mathds{1}_{E}\right) \sigma_{Q_AQ_BE} \right]  \; ,
\end{align}
where, $\bar{c}$ is determined from the classical registers via Eq.\ (\ref{FunctionTestData}), and $\{M_{a|x}\}$, $\{N_{b|y}\}$ are a family of POVM's. 
The bound from~\cite[Theorem~5.1 and Lemma~6.1]{arx_AHT24} on the {\Renyi} entropy takes the simple form
\begin{eqnarray}\label{eq:GREATbnd}
    \widetilde{H}_{\alpha}^{\uparrow}\left({A}_{1}^{n}\bar{{C}}_{1}^{n}|
{X}_{1}^{n}{Y}_{1}^{n}{T}_{1}^{n}\mathbf{E}\right)_{\rho_{|\Omega}} \geq n h_{\alpha} 
- \frac{\alpha}{\alpha-1} \log\frac{1}{p_\Omega} \; ,
\end{eqnarray}
where\footnote{Strictly speaking, in that work, the bound on the right-hand-side of Eq.~\eqref{eq:GREATbnd} is also in terms of sandwiched {\Renyi} divergences rather than Petz-{\Renyi} divergences. However, since the former is upper bounded by the latter~\cite{Tomamichel2015QuantumIP}, the bound we present here is also immediately valid.} 
\begin{align} \label{Eq: DefinitionhAlpha}
    h_{\alpha} = \inf_{q \in S_\Omega} \inf_{\sigma \in S_{=}(Q_AQ_BE)} \left( \frac{1}{\alpha-1}D\left(q \middle\Vert \mathcal{M}(\sigma)_{\bar{C}}\right)-\sum_{\bar{c}}q(\bar{c})D_{\alpha}\left(\mathcal{M}(\sigma)_{AXYTE \land \bar{c}} \middle\Vert \mathds{1}_{A}\otimes \mathcal{M}(\sigma)_{XYTE} \right) \right) \; ,
\end{align}
where in the first term, we interpret $\mathcal{M}(\sigma)_{\bar{C}}$ as a classical probability distribution on $C$ (since $C$ is a classical register).
Device-independent lower bounds on the accumulated entropy can then be achieved by minimizing this quantity over all possible sets of POVM's.

To summarize the above analysis, the final bound we obtained on $H_{\operatorname{min}}^\epsilon \left({A}_{1}^{n}{B}_{1}^{n}|
{X}_{1}^{n}{Y}_{1}^{n}{T}_{1}^{n}\mathbf{E}\right)_{\rho_{|\Omega}}$ is
\begin{align}\label{Eq: FinalHminBound}
H_{\operatorname{min}}^\epsilon \left({A}_{1}^{n}{B}_{1}^{n}|
{X}_{1}^{n}{Y}_{1}^{n}{T}_{1}^{n}\mathbf{E}\right)_{\rho_{|\Omega}} \geq n h_{\alpha} 
- \frac{\alpha}{\alpha-1} \log\frac{1}{p_\Omega} - \frac{\log (2/\epsilon^{2})}{\alpha - 1}, \; 
\end{align}
where the $h_{\alpha}$ term could be bounded by minimizing the expression in Eq.~\eqref{Eq: DefinitionhAlpha} over all POVMs, or relaxing it to various lower bounds that we discuss later. (This is with the understanding that the $B_1^n$ entropy contributions were dropped by invoking the optional inequality in Eq.~\eqref{Eq: FinalGREATStep}, to provide a fairer comparison to the previous works~\cite{Arnon-Friedman:2018aa,Liu_2021} which also dropped those contributions.)

\begin{rem}
To extend the above argument to handle DIQKD, we can first apply analogous arguments to see that the lower bound in Eq.~\eqref{eq:GREATbnd} also holds for\footnote{An alternative to considering $\widetilde{H}_{\alpha}^{\uparrow}\left({A}_{1}^{n}\bar{{B}}_{1}^{n}|
{X}_{1}^{n}{Y}_{1}^{n}{T}_{1}^{n}\mathbf{E}\right)_{\rho_{|\Omega}}$, as discussed in~\cite{murta2018realization}, is to directly consider $\widetilde{H}_{\alpha}^{\uparrow}\left({A}_{1}^{n}\bar{{C}}_{1}^{n}|
{X}_{1}^{n}{Y}_{1}^{n}{T}_{1}^{n}\mathbf{E}\right)_{\rho_{|\Omega}}$ in our subsequent steps. This can potentially yield better bounds if $\bar{C}_i$ has smaller dimension than $\bar{B}_i$, due to the dimension dependence in Eq.~\eqref{eq:DIQKDchain} (when considering $\bar{C}_i$ instead, the dimension dependence can be slightly improved by noting that $\bar{C}_i$ never takes the value $\perp$ in generation rounds; see~\cite[Remark~8.1]{arx_AHT24}). On the other hand, if the protocol anyway announces the $\bar{B}_i$ values, it is more useful to have $\widetilde{H}_{\alpha}^{\uparrow}\left({A}_{1}^{n}|\bar{{B}}_{1}^{n}{X}_{1}^{n}{Y}_{1}^{n}{T}_{1}^{n}\mathbf{E}\right)_{\rho_{|\Omega}} $ rather than $\widetilde{H}_{\alpha}^{\uparrow}\left({A}_{1}^{n}|\bar{{C}}_{1}^{n}{X}_{1}^{n}{Y}_{1}^{n}{T}_{1}^{n}\mathbf{E}\right)_{\rho_{|\Omega}}$ as an intermediate quantity in the subsequent proofs, so the question of which approach is better might have to be resolved on a case-by-case basis.} $\widetilde{H}_{\alpha}^{\uparrow}\left({A}_{1}^{n}\bar{{B}}_{1}^{n}|
{X}_{1}^{n}{Y}_{1}^{n}{T}_{1}^{n}\mathbf{E}\right)_{\rho_{|\Omega}}$.
To relate this to the previously mentioned quantity of interest $H_{\operatorname{min}}^\epsilon \left({A}_{1}^{n}|
{X}_{1}^{n}{Y}_{1}^{n}{T}_{1}^{n}\mathbf{E}\right)_{\rho_{|\Omega}}$ for DIQKD (ignoring some other public announcements), existing EAT security proofs~\cite{Arnon-Friedman:2018aa,TSB+22,arx_GLT+22,arx_CT23} used chain rules for smooth entropies or {\Renyi} entropies to remove the $\bar{{B}}_{1}^{n}$ registers. However, following a similar analysis as in our companion work~\cite{arx_AHT24}, we now point out that a much simpler bound can be obtained if the accept condition includes the condition that the observed frequency of test rounds (i.e.~the frequency of $T_i=1$) is at most $\gamma+\dtol$ for some fixed value $\dtol$.
Letting $\Omega_{T_1^n}$ denote the set of values on $T_1^n$ with nonzero probability in the conditional state $\rho_{|\Omega}$, for such protocols we can then write
\begin{align}
\widetilde{H}_{\alpha}^{\uparrow}\left({A}_{1}^{n}|
{X}_{1}^{n}{Y}_{1}^{n}{T}_{1}^{n}\mathbf{E}\right)_{\rho_{|\Omega}} 
&\geq \widetilde{H}_{\alpha}^{\uparrow}\left({A}_{1}^{n}|\bar{{B}}_{1}^{n}{X}_{1}^{n}{Y}_{1}^{n}{T}_{1}^{n}\mathbf{E}\right)_{\rho_{|\Omega}} \\
&\geq \widetilde{H}_{\alpha}^{\uparrow}\left({A}_{1}^{n}\bar{{B}}_{1}^{n}|
{X}_{1}^{n}{Y}_{1}^{n}{T}_{1}^{n}\mathbf{E}\right)_{\rho_{|\Omega}} - \max_{t_1^n \in \Omega_{T_1^n}} H_{0}(\bar{{B}}_{1}^{n})_{\rho_{|t_1^n}} \\
&\geq \widetilde{H}_{\alpha}^{\uparrow}\left({A}_{1}^{n}\bar{{B}}_{1}^{n}|
{X}_{1}^{n}{Y}_{1}^{n}{T}_{1}^{n}\mathbf{E}\right)_{\rho_{|\Omega}} - (\gamma+\dtol) n \log\dim(\bar{B}_i), \label{eq:DIQKDchain}
\end{align}
where the second line is proven in~\cite[Remark~8.1]{arx_AHT24}
and the third line follows from our above requirement that the accept condition is such that all $t_1^n \in \Omega_{T_1^n}$ have test-round frequency at most $\gamma+\dtol$, and the fact that $\bar{B}_i$ is deterministically set to a fixed value in all generation rounds. 
We can then convert this bound on $\widetilde{H}_{\alpha}^{\uparrow}\left({A}_{1}^{n}|
{X}_{1}^{n}{Y}_{1}^{n}{T}_{1}^{n}\mathbf{E}\right)_{\rho_{|\Omega}} $ to a bound on $H_{\operatorname{min}}^\epsilon \left({A}_{1}^{n}|
{X}_{1}^{n}{Y}_{1}^{n}{T}_{1}^{n}\mathbf{E}\right)_{\rho_{|\Omega}}$ the same way as in Eq.~\eqref{Eq: RenyitoHmineps} (or again, for better keyrates one could directly apply the {\Renyi} privacy amplification theorem of~\cite{arx_Dup21}).

In fact, since for our protocol we anyway publicly announce the $\bar{{B}}_{1}^{n}$ registers for simplicity, it is more expedient (and yields better final bounds) to instead use the fact that the above sequence of calculations was really just a lower bound on $\widetilde{H}_{\alpha}^{\uparrow}\left({A}_{1}^{n}|\bar{{B}}_{1}^{n}{X}_{1}^{n}{Y}_{1}^{n}{T}_{1}^{n}\mathbf{E}\right)_{\rho_{|\Omega}}$, and to use this quantity directly in the security proof.
\end{rem}

We now turn to the set $S_\Omega$ in the accept condition. In this work, we focus on the case where it has the following form.\footnote{For practical applications, we highlight that significant improvements can be obtained by using the extremely sharp binomial-distribution bounds described in~\cite{Liu_2021} (rather than the Chernoff bound used here), and also using different $\dtol$ values for each $\bar{c}$ value, as discussed in~\cite[Remark~7.1]{arx_AHT24}.
However, for the demonstrative examples in this work we will focus on the form described here for simplicity.}  First, we suppose that the honest protocol is IID and each round produces some distribution $\vect{q}_\mathrm{hon}$ on the alphabet $\{0,1,\perp\}$. Then we take some value $\dtol>0$ and set $S_\Omega$ to be the set of distributions $\vect{q}$ such that
\begin{align}\label{eq:acceptbox}
\forall \bar{c} \in \{0,1,\perp\}, \quad 
\left|q(\bar{c}) - q_\mathrm{hon}(\bar{c})\right| \leq \dtol
.
\end{align}
For this accept condition to be ``reasonable'' in a protocol, we should choose $\delta$ such that the probability of the honest behavior aborting is at most some small value $\eps^\mathrm{com}$ (often called the \term{completeness} parameter). In this work, we use the simple option of noting that by applying the Chernoff bound to the honest IID behavior, we have (writing $q_\mathrm{hon}(\text{not-}\bar{c})$ to denote the probability of \emph{not} getting $\bar{c}$ under distribution $q_\mathrm{hon}$):
\begin{align}
\forall \bar{c} \in \{0,1,\perp\}, \quad 
\Pr_{\bar{C}_1^n}\left[\left|\freq_{\bar{C}_1^n}(\bar{c}) - q_\mathrm{hon}(\bar{c})\right|> \dtol \right] \leq 2 e^{-\frac{\dtol^2 n}{3\min\{ q_\mathrm{hon}(\bar{c}) , q_\mathrm{hon}(\text{not-}\bar{c}) \}}} \leq 2 e^{-\frac{\dtol^2 n}{3\gamma}},
\end{align}
where for the first inequality we used the fact that $\left|\freq_{\bar{c}_1^n}(\bar{c}) - q_\mathrm{hon}(\bar{c})\right| = \left|\freq_{\bar{c}_1^n}(\text{not-}\bar{c}) - q_\mathrm{hon}(\text{not-}\bar{c})\right|$ to choose the smaller value out of $\{ q_\mathrm{hon}(\bar{c}) , q_\mathrm{hon}(\text{not-}\bar{c}) \}$ when applying the Chernoff bound, and to get the second inequality we noted that for $\bar{c} = \perp$ we have $q_\mathrm{hon}(\text{not-}\bar{c}) = \gamma$ and for the other $\bar{c}$ values we have $q_\mathrm{hon}(\bar{c}) \leq \gamma$.
With this, by simply applying the union bound we see that to achieve an upper bound of $\eps^\mathrm{com}$ on the honest abort probability, it suffices to take
\begin{align}\label{eq:dtolbox}
\dtol = \sqrt{\frac{3 \gamma}{n} \ln \frac{6}{\eps^\mathrm{com}}}.
\end{align}

We highlight however that the previous EAT versions in~\cite{Liu_2021,DF19} were not applied to $S_\Omega$ of the form~\eqref{eq:acceptbox}, because most of the constraints in that condition are not ``active'' in the bounds they derive (due to some properties of \emph{crossover min-tradeoff functions} defined in those works). Instead, they simply used a one-sided constraint on a single term, namely
\begin{align}\label{eq:accept1sided}
q(0) \leq q_\mathrm{hon}(0) + \dtol.
\end{align}
For $S_\Omega$ of this form, since there is only ``one condition'' for the protocol to accept, one can choose a smaller value of $\dtol$ than the formula~\eqref{eq:dtolbox} above (while maintaining the same $\eps^\mathrm{com}$ value); namely, by applying the Chernoff bound for a one-sided condition we find it suffices to take
\begin{align}\label{eq:dtol1sided}
\dtol = \sqrt{\frac{3 \gamma}{n} \ln \frac{1}{\eps^\mathrm{com}}}.
\end{align}
In this work, when generating data points for the previous EAT versions for comparison purposes, we used the above choices~\eqref{eq:accept1sided}--\eqref{eq:dtol1sided} for $S_\Omega$ and $\dtol$ to give a ``fairer'' comparison (since using $S_\Omega$ of the form~\eqref{eq:acceptbox} would not allow those versions to ``fully exploit'' the constraints).\footnote{Alternatively, we could also have used this choice of $S_\Omega$ when generating data points based on our new methods. However, we found that 
our methods usually (though not always) give better bounds when using $S_\Omega$ of the form~\eqref{eq:acceptbox}.
We stress however that from an applied perspective, this is still a fair comparison since we achieve the same operational parameter $\eps^\mathrm{com}$ in both the scenarios we are comparing.}

\subsection{Device-Independent Lower Bounds on the Single-Round Contribution}
\subsubsection{Device-Independent Approach Towards Directly Bounding \texorpdfstring{$h_{\alpha}$}{Single-Round Term}}\label{Section: SingleRoundCalculation}

Between two parties, the measurement devices for a single round of the protocol can be described by two families of POVM's, $\{M_{a|x}\}$ and $\{N_{b|y}\}$. Here, $\{M_{a|x}\}$ describes Alice's device, and $\{N_{b|y}\}$ describes Bob's. The goal of this section is to provide device-independent lower bounds on the single-round contribution, $h_{\alpha}$, i.e. $\inf_{\{M_{a|x}\}, \{N_{b|y}\}} h_{\alpha}$, where
\begin{align} \label{Eq: hhatSecion4.2}
      h_{\alpha} = \inf_{q \in S_\Omega} \inf_{\sigma \in S_{=}(Q_AQ_BE)} \left( \frac{1}{\alpha-1}D\left(q \middle\Vert \mathcal{M}(\sigma)_{\bar{C}}\right)-\sum_{\bar{c}}q(\bar{c})D_{\alpha}\left(\mathcal{M}(\sigma)_{AXYTE \land \bar{c}} \middle\Vert \mathds{1}_{A}\otimes \mathcal{M}(\sigma)_{XYTE} \right) \right) \; .
\end{align}
Due to the data-processing inequality, one may assume that $\sigma_{Q_AQ_BE}$ is always pure, i.e. the adversary holds a purification of Alice's and Bob's systems. 
For any tripartite input state $\sigma_{Q_AQ_BE}$, the single round mapping, $\mathcal{M}$, from Eq.\ (\ref{Eq: SingleRoundMapping}) outputs a state
\begin{align}
    \mathcal{M}(\sigma)_{A\bar{C}XYTE} = \sum_{x,y,t}  \operatorname{P}\left(X=x,Y=y,T=t\right)\mathcal{M}(\sigma)_{A\bar{C}E| X=x,Y=y,T=t } \otimes\ketbra{x,y,t}{x,y,t} \; ,
\end{align}
where $\operatorname{P}\left(X=x,Y=y,T=t\right)$ is determined by the protocol and 
\begin{align} \label{Eq: OutputStateForFixedInputs}
    \mathcal{M}(\sigma)_{A\bar{C}E| X=x,Y=y,T=t } = \sum_{a,b}\ketbra{a,\bar{c}}{a,\bar{c}} \otimes \operatorname{Tr}_{Q_AQ_B}\left[\left(M_{a|x} \otimes N_{b|y} \otimes \mathds{1}_{E}\right) \sigma_{Q_AQ_BE} \right] \; .
\end{align}
In Eq.\ (\ref{Eq: OutputStateForFixedInputs}), the value of the classical register $\bar{C}$ is determined by $A,B,X,Y$. Using the convention from Eq.\ (\ref{FunctionTestData}), we set $\bar{c} = f(a,b,x,y,t)$. Our first lemma in Section \ref{Section: SingleRoundCalculation} simplifies the expression for $h_{\alpha}$, by combining both terms.
\begin{restatable}{lem}{SinglehAlphaExpressionviaRelEntropy}
\label{Lemma: SinglehAlphaExpressionviaRelEntropy}
    For all $\alpha \in (1,2)$,
   \begin{eqnarray} \label{Eq: simptargetexpression}
      \frac{1}{\alpha-1}D\left(q \middle\Vert p\right) = \frac{1}{\alpha-1}D\left(q \middle\Vert \mathcal{M}(\sigma)_{\bar{C}}\right)-\sum_{\bar{c}}q(\bar{c})D_{\alpha}\left(\mathcal{M}(\sigma)_{AXYTE \land \bar{c}} \middle\Vert \mathds{1}_{A}\otimes \mathcal{M}(\sigma)_{XYTE} \right)  \; ,
\end{eqnarray}
where 
\begin{equation} \label{Eq: pbarcDefn}
    p(\bar{c}) := \operatorname{Tr}\left[\mathcal{M}(\sigma)_{AXYTE \land \bar{c}}^{\alpha} \cdot \mathds{1}_{A} \otimes \mathcal{M}(\sigma)_{XYTE}^{1-\alpha} \right]  \; .
\end{equation}
\end{restatable}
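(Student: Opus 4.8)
The plan is to unfold both divergences appearing on the right-hand side of \eqref{Eq: simptargetexpression}, using only their definitions, and to observe that the normalization factors cancel term by term. Throughout, fix $q \in S_\Omega$ and $\sigma \in S_{=}(Q_AQ_BE)$, and abbreviate $\omega_{\bar c} := \mathcal{M}(\sigma)_{AXYTE \land \bar c}$ and $\tau := \mathds{1}_A \otimes \mathcal{M}(\sigma)_{XYTE}$. The key observation is that tracing the state $\mathcal{M}(\sigma)_{A\bar C XYTE}$ (which is classical on $\bar C$) down to the register $\bar C$ yields the classical distribution $\mathcal{M}(\sigma)_{\bar C}(\bar c) = \operatorname{Tr}[\omega_{\bar c}]$ for every $\bar c$. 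Thus the same quantity $\operatorname{Tr}[\omega_{\bar c}]$ plays the role of the reference distribution in the relative-entropy term \emph{and} of the normalization $\operatorname{Tr}[\rho]$ in the Petz-{\Renyi} divergence term.

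First I would check that $D_\alpha(\omega_{\bar c}\|\tau)$ lies in the finite regime of the definition, so that its closed form applies. Since $\omega_{\bar c} \leq \sum_{\bar c'}\omega_{\bar c'} = \mathcal{M}(\sigma)_{AXYTE}$, and the support of a bipartite operator is contained in the full first tensor factor times the support of its second marginal, we get
\[
\operatorname{supp}(\omega_{\bar c}) \subseteq \operatorname{supp}\big(\mathcal{M}(\sigma)_{AXYTE}\big) \subseteq \mathcal{H}_A \otimes \operatorname{supp}\big(\mathcal{M}(\sigma)_{XYTE}\big) = \operatorname{supp}(\tau),
\]
i.e.\ $\omega_{\bar c} \ll \tau$. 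Hence, for $\alpha \in (1,2)$ and any $\bar c$ with $\operatorname{Tr}[\omega_{\bar c}] > 0$, the definition of the Petz-{\Renyi} divergence gives
\[
D_\alpha(\omega_{\bar c}\|\tau) = \frac{1}{\alpha-1}\log\frac{\operatorname{Tr}[\omega_{\bar c}^\alpha \tau^{1-\alpha}]}{\operatorname{Tr}[\omega_{\bar c}]} = \frac{1}{\alpha-1}\log\frac{p(\bar c)}{\operatorname{Tr}[\omega_{\bar c}]},
\]
with $p(\bar c)$ exactly as defined in \eqref{Eq: pbarcDefn}.

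The computation is then immediate: writing the classical relative entropy as $D(q\|\mathcal{M}(\sigma)_{\bar C}) = \sum_{\bar c} q(\bar c)\log\frac{q(\bar c)}{\operatorname{Tr}[\omega_{\bar c}]}$ and substituting the previous display,
\[
\frac{1}{\alpha-1}D\big(q\big\|\mathcal{M}(\sigma)_{\bar C}\big) - \sum_{\bar c} q(\bar c)\, D_\alpha(\omega_{\bar c}\|\tau) = \frac{1}{\alpha-1}\sum_{\bar c} q(\bar c)\left(\log\frac{q(\bar c)}{\operatorname{Tr}[\omega_{\bar c}]} - \log\frac{p(\bar c)}{\operatorname{Tr}[\omega_{\bar c}]}\right) = \frac{1}{\alpha-1}\sum_{\bar c} q(\bar c)\log\frac{q(\bar c)}{p(\bar c)},
\]
which is precisely $\frac{1}{\alpha-1}D(q\|p)$, as claimed.

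There is no genuine obstacle beyond bookkeeping; the only slightly delicate point, which I would record explicitly, is the treatment of degenerate values of $\bar c$. Terms with $q(\bar c)=0$ drop out of both sides under the usual $0\log 0 = 0$ convention. If some $\bar c$ has $q(\bar c)>0$ but $\operatorname{Tr}[\omega_{\bar c}]=0$, then $\omega_{\bar c}=0$, so $p(\bar c)=0$ and $D(q\|p)=+\infty$; on the right-hand side $\mathcal{M}(\sigma)_{\bar C}(\bar c)=0<q(\bar c)$ forces $D(q\|\mathcal{M}(\sigma)_{\bar C})=+\infty$ as well, so both sides equal $+\infty$ and the identity persists. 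This is the same convention used for the single-round term in the {\Renyi} EAT of the companion work, so the identity is consistent with how $h_\alpha$ is actually evaluated.
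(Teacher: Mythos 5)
Your proposal is correct and follows essentially the same route as the paper's proof: both hinge on the identity $\mathcal{M}(\sigma)_{\bar{C}}(\bar{c}) = \operatorname{Tr}\left[\mathcal{M}(\sigma)_{AXYTE \land \bar{c}}\right]$, so that the normalization in the Petz-{\Renyi} divergence cancels against the reference distribution in the classical relative-entropy term; the paper merely runs the telescoping of logarithms in the opposite direction (starting from $D(q\Vert p)$ and inserting $\mathcal{M}(\sigma)_{\land\bar{c}}/\mathcal{M}(\sigma)_{\land\bar{c}}$). Your explicit verification of the support condition $\omega_{\bar{c}} \ll \tau$ and the treatment of degenerate $\bar{c}$ values are welcome additions that the paper leaves implicit.
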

\begin{rem}
In this section, our goal will be to apply the techniques from Section \ref{Section: TightBoundsPetzRenyiVariational} to bound $p(\bar{c})$ and subsequently ``linearize" $p(\bar{c})$ using the NPA hiearchy from \cite{Pironio_2010}. The convexity property of the relative entropy~\cite[Corollary~4.3]{Tomamichel2015QuantumIP} ensures that the ensuing optimization problem remains convex.
\end{rem}
In order to provide lower bounds on 
   \begin{eqnarray} \label{Eq: OptimizationProblemSimplifiedExpression}
      \inf_{\{M_{a|x}\}, \{N_{b|y}\}}\inf_{q \in S_\Omega} \inf_{\sigma \in S_{=}(Q_AQ_BE)}  \frac{1}{\alpha-1}D\left(q \middle\Vert p\right) \; ,
\end{eqnarray}
using the NPA hierarchy, we upper bound each individual $p(\bar{c})$, using Theorem \ref{Theorem: App2}.\footnote{We could alternatively have used Theorem \ref{Theorem: App1Part2}.}

\begin{restatable}{lem}{LemmaFirstNonComOpt}
\label{Lemma:FirstNonComOpt}
For all $m \in \mathbb{N}$, there exists a choice of $t_1,\dots, t_{m} \in \left(0,1\right]$ and $w_1,\dots, w_{m} >0$ such that
 \begin{align}
  p(\bar{c}) \leq \sup_{Z_1,Z_2,\dots}  \operatorname{Tr}\left[\mathcal{M}(\sigma)_{AXYTE \land \bar{c}}P(Z_1,Z_2,\dots)\right]+\operatorname{Tr}\left[\mathds{1}_{A}\otimes\mathcal{M}(\sigma)_{XYTE}Q(Z_1,Z_2,\dots)\right]\; ,
\end{align}
where 
\begin{align}
    P(Z_1,Z_2,\dots) &:= 1 + \frac{\sin(\alpha \pi )}{\pi}\sum_{i=1}^{m} \frac{w_{i}}{t_i} \left( \mathds{1} + Z_{i}+Z_{i}^{\dagger} + \left(1-t_i\right)Z_{i}^{\dagger} Z_{i}\right)\\
    Q(Z_1,Z_2,\dots) &:= \frac{\sin(\alpha \pi )}{\pi}\sum_{i=1}^{m} w_i Z_iZ_{i}^{\dagger} \; .
\end{align}
\end{restatable}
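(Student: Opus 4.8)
The plan is to specialize Theorem~\ref{Theorem: App2} to the operators $\rho := \mathcal{M}(\sigma)_{AXYTE \land \bar{c}}$ and $\sigma := \mathds{1}_{A}\otimes \mathcal{M}(\sigma)_{XYTE}$, and then rewrite the resulting infimum as a single supremum over countably many matrix variables. First I would verify the hypotheses of Theorem~\ref{Theorem: App2}: both operators are positive semi-definite, and one needs $\rho \ll \sigma$. The latter holds because $\mathcal{M}(\sigma)_{AXYTE \land \bar{c}}$ is a sub-block (a partial state conditioned on a classical event) of $\mathcal{M}(\sigma)_{AXYTE}$, which in turn is obtained from $\mathds{1}_A \otimes \mathcal{M}(\sigma)_{XYTE}$ by applying the POVM elements $M_{a|x}$ on the $A$-side — so the support of $\rho$ is contained in the support of $\mathds{1}_A \otimes \mathcal{M}(\sigma)_{XYTE}$. (Strictly, one should note the kernel inclusion $\ker(\mathds{1}_A\otimes\mathcal{M}(\sigma)_{XYTE}) \subseteq \ker(\mathcal{M}(\sigma)_{AXYTE\land\bar c})$ follows since tracing out $A$ after inserting the sub-normalized POVM outcomes cannot enlarge the support beyond that of $\mathcal{M}(\sigma)_{XYTE}$ on the $XYTE$ factor.)

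Next I would invoke Theorem~\ref{Theorem: App2} in its ``$\alpha \in (1,2)$, inequality reversed'' form, which gives
\[
p(\bar c) = \operatorname{Tr}\left[\rho^{\alpha}\sigma^{1-\alpha}\right] \leq C_m + \frac{\sin(\alpha\pi)}{\pi}\sum_{i=1}^m \frac{w_i}{t_i}\inf_Z\left(\operatorname{Tr}\left[\rho\left(Z+Z^\dagger+(1-t_i)Z^\dagger Z\right)\right]+t_i\operatorname{Tr}\left[\sigma Z Z^\dagger\right]\right),
\]
with $C_m = \operatorname{Tr}[\rho]\left(1+\frac{\sin(\alpha\pi)}{\pi}\sum_i \frac{w_i}{t_i}\right)$. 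Since $\rho = \mathcal{M}(\sigma)_{AXYTE\land\bar c}$ need not have trace $1$, I would keep $\operatorname{Tr}[\rho]$ symbolic; writing $\operatorname{Tr}[\rho] = \operatorname{Tr}[\rho \cdot \mathds{1}]$ lets me fold the constant $C_m$ into the $\operatorname{Tr}[\rho P(\cdot)]$ term via the leading ``$1$'' in the definition of $P$, and similarly distribute the $\frac{\sin(\alpha\pi)}{\pi}\frac{w_i}{t_i}$ factors. The key bookkeeping step is then to interchange $\sum_{i=1}^m$ with $\inf_Z$: because the $m$ summands are independent optimizations over disjoint copies $Z_1,\dots,Z_m$ of the matrix variable, $\sum_i \frac{w_i}{t_i}\inf_{Z}(\cdots) = \inf_{Z_1,\dots,Z_m}\sum_i \frac{w_i}{t_i}(\cdots)$ — this is exactly the ``change the order of sum and infima'' move already used in the remark after Theorem~\ref{Theorem: App2}. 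Collecting the $\rho$-linear pieces into $P(Z_1,\dots,Z_m) = 1 + \frac{\sin(\alpha\pi)}{\pi}\sum_i \frac{w_i}{t_i}(\mathds{1}+Z_i+Z_i^\dagger+(1-t_i)Z_i^\dagger Z_i)$ and the $\sigma$-linear pieces into $Q(Z_1,\dots,Z_m) = \frac{\sin(\alpha\pi)}{\pi}\sum_i w_i Z_i Z_i^\dagger$ matches the claimed formulas; finally, since $\inf_Z(-g) = -\sup_Z g$ one rewrites the infimum of a linear functional as a supremum of its negative, which is absorbed by reparametrizing $Z_i \mapsto -Z_i$ or simply noting the objective is already in the advertised form with ``$\inf$'' replaced by ``$\sup$'' because the net sign of the variational part flips (as discussed in the remark after Theorem~\ref{Theorem: App1Part2}, absorbing the sine prefactor converts the optimization into a maximization).

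I expect the main obstacle to be purely notational/bookkeeping rather than conceptual: one must carefully track how $C_m$ and the scalar prefactors $\frac{\sin(\alpha\pi)}{\pi}$, $\frac{w_i}{t_i}$, $w_i$ get distributed so that the two polynomials come out exactly as stated, and one must double-check that the $\alpha\in(1,2)$ sign of $\sin(\alpha\pi)$ (which is negative) is consistent with the claimed ``$\sup$'' — i.e., that $Q$ being negative semi-definite and the direction of the inequality line up. A secondary subtlety worth a sentence is confirming $\rho \ll \sigma$ rigorously in the device-independent setting, since Theorem~\ref{Theorem: App2} genuinely requires it (unlike Approach~1's Theorem~\ref{Theorem: App1Part2} which the footnote offers as an alternative); here the structure of $\mathcal{M}$ guarantees it, as sketched above. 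No new analytic input beyond Theorem~\ref{Theorem: App2} is needed — this lemma is essentially its corollary after renaming operators and reshuffling the finite sum into the polynomial form required for the NPA-hierarchy linearization in the next step.
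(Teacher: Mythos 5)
Your proposal is correct and follows essentially the same route as the paper: specialize Theorem~\ref{Theorem: App2} to $\rho = \mathcal{M}(\sigma)_{AXYTE\land\bar{c}}$ and $\mathds{1}_A\otimes\mathcal{M}(\sigma)_{XYTE}$, use that $\sin(\alpha\pi)<0$ for $\alpha\in(1,2)$ to turn the weighted sum of infima into a single supremum over independent variables $Z_1,\dots,Z_m$, and absorb $C_m$ into $P$ via $\operatorname{Tr}[\rho]=\operatorname{Tr}[\rho\cdot\mathds{1}]$. Your explicit verification of $\rho\ll\sigma$ is a welcome addition the paper leaves implicit (indeed $\mathcal{M}(\sigma)_{AXYTE\land\bar{c}}\leq\mathds{1}_A\otimes\mathcal{M}(\sigma)_{XYTE}$); the only stray remark is the suggested reparametrization $Z_i\mapsto -Z_i$, which is unnecessary since the sign flip comes entirely from the negative prefactor, as your alternative phrasing correctly notes.
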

Note that the $Z_i$ operators, which appear in Lemma \ref{Lemma:FirstNonComOpt}, currently act on multiple Hilbert spaces, most of them classical registers. Similar to \cite{brown2023deviceindependent}, one can reduce the action of each $Z_i$ to Eve's quantum side-information, at the cost of generating more such non-commuting variables. 
\begin{restatable}{prop}{PropFirstNonComOpt}
\label{Prop:FirstNonComOpt}
For every value $\bar{c}$,
\begin{equation}\label{Eq: ExtendedPTerm}
    p(\bar{c}) = \sum_{\substack{x ,y,t \ \mathrm{s.t.} \\\operatorname{P}\left(\bar{c}|x,y,t\right)>0}
    } \operatorname{P}\left(X=x,Y=y,T=t\right)\operatorname{Tr}\left[\mathcal{M}(\sigma)_{AE \land \bar{c}|X=x,Y=y,T=t}^{\alpha} \cdot \mathds{1}_{A} \otimes \mathcal{M}(\sigma)_{E}^{1-\alpha} \right]
 \; .
\end{equation}
Moreover, for all $m \in \mathbb{N}$, there exists a choice of $t_{1},\dots, t_{m} \in \left(0,1\right]$ and $w_{1},\dots, w_{m} >0$ such that every
$\operatorname{Tr}\left[\mathcal{M}(\sigma)_{AE \land \bar{c}|X=x,Y=y,T=t}^{\alpha} \cdot \mathds{1}_{A} \otimes \mathcal{M}(\sigma)_{E}^{1-\alpha} \right]$ can be upper bounded by
 \begin{align} \label{Eq: XYTConditionalPUpperBounds}
\sup_{ Z_{a,i}^{x,y,t}}  \sum_{\substack{a \ \mathrm{s.t.} \\ \operatorname{P}\left(a|\bar{c},x,y,t\right)>0}} \operatorname{Tr}\left[\sigma_{Q_AQ_BE}\left( \sum_{\substack{b \ \mathrm{s.t.} \\ \bar{c} = f(a,b,x,y,t)}} M_{a|x} \otimes N_{b|y} \otimes P(Z_{a,1}^{x,y,t},Z_{a,2}^{x,y,t},\dots) + \mathds{1}_{Q_AQ_B} \otimes Q(Z_{a,1}^{x,y,t},Z_{a,2}^{x,y,t},\dots)\right)\right]\; .
\end{align}
For any $(x,y,t)$, the sums are over all $a$ and $b$ which generate the test data $\bar{c}$, and 
\begin{align}
    P(Z_{a,1}^{x,y,t},Z_{a,2}^{x,y,t},\dots) &:= 1 +\frac{\sin(\alpha \pi )}{\pi}\sum_{i=1}^{m} \frac{w_{i}}{t_{i}} \left(\mathds{1} +Z_{a,i}^{x,y,t}+Z_{a,i}^{x,y,t \dagger} + \left(1-t_i\right)Z_{a,i}^{x,y,t \dagger} Z_{a,i}^{x,y,t}\right)\\
    Q(Z_{a,1}^{x,y,t},Z_{a,2}^{x,y,t},\dots) &:= \frac{\sin(\alpha \pi )}{\pi}\sum_{i=1}^{m} w_i Z_{a,i}^{x,y,t}Z_{a,i}^{x,y,t \dagger} \; .
\end{align}
\end{restatable}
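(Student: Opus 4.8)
The plan is to establish Proposition~\ref{Prop:FirstNonComOpt} in two stages: first decompose $p(\bar c)$ into a sum over classical inputs $(x,y,t)$ to arrive at Eq.~\eqref{Eq: ExtendedPTerm}, and then apply Theorem~\ref{Theorem: App2} (via the intermediate Lemma~\ref{Lemma:FirstNonComOpt}) ``locally'' for each fixed $(x,y,t)$ to obtain Eq.~\eqref{Eq: XYTConditionalPUpperBounds}. For the first stage, I would start from the definition of $p(\bar c)$ in Eq.~\eqref{Eq: pbarcDefn} and unpack $\mathcal{M}(\sigma)_{A\bar CXYTE}$ using Eq.~\eqref{Eq: SingleRoundMapping}. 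Because the $XYT$ registers are classical and mutually orthogonal, the state $\mathcal{M}(\sigma)_{AXYTE \wedge \bar c}$ is block-diagonal across the $(x,y,t)$ labels, as is $\mathds{1}_A \otimes \mathcal{M}(\sigma)_{XYTE}$. The key elementary fact is that for block-diagonal operators $X = \bigoplus_\ell X_\ell$ and $Y = \bigoplus_\ell Y_\ell$, one has $\operatorname{Tr}[X^\alpha Y^{1-\alpha}] = \sum_\ell \operatorname{Tr}[X_\ell^\alpha Y_\ell^{1-\alpha}]$, since the functional calculus acts blockwise. Pulling out the scalar factor $\operatorname{P}(X=x,Y=y,T=t)$ that appears in each block (noting it cancels appropriately across the $\alpha$ and $1-\alpha$ powers, leaving a single power-$1$ factor) and discarding blocks where $\operatorname{P}(\bar c|x,y,t)=0$ (which contribute zero since the partial state vanishes there) gives Eq.~\eqref{Eq: ExtendedPTerm}; inside each block one further drops the outcomes $a$ with $\operatorname{P}(a|\bar c,x,y,t)=0$ for the same reason, and writes $\mathcal{M}(\sigma)_{AE\wedge\bar c|x,y,t}$ explicitly via Eq.~\eqref{Eq: OutputStateForFixedInputs}.

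For the second stage, I would apply Lemma~\ref{Lemma:FirstNonComOpt} with $\rho = \mathcal{M}(\sigma)_{AE\wedge\bar c|x,y,t}$ and (a suitably normalized version of) $\sigma = \mathds{1}_A \otimes \mathcal{M}(\sigma)_{E}$. This bounds each block-trace by $\sup_{Z_1,Z_2,\dots} \operatorname{Tr}[\rho P(Z_\bullet)] + \operatorname{Tr}[\sigma Q(Z_\bullet)]$ with the polynomials $P,Q$ as in Lemma~\ref{Lemma:FirstNonComOpt}. The remaining work is purely bookkeeping: the register $A$ is classical with outcomes $a$, so $\mathcal{M}(\sigma)_{AE\wedge\bar c|x,y,t} = \sum_a \ketbra{a}{a} \otimes \operatorname{Tr}_{Q_AQ_B}[(M_{a|x}\otimes N_{b|y}\otimes\mathds{1}_E)\sigma]$ summed over the $b$ consistent with $\bar c = f(a,b,x,y,t)$. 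Since $P(Z_\bullet)$ and $Q(Z_\bullet)$ can be chosen block-diagonal in $a$ (i.e.\ replace the single variable $Z_i$ by a family $Z_{a,i}^{x,y,t}$, one block per outcome $a$, which only enlarges the feasible set and hence preserves the upper bound), the trace against $\rho$ decomposes as a sum over $a$, and using $\operatorname{Tr}_E[\operatorname{Tr}_{Q_AQ_B}[\,\cdot\,] \, P] = \operatorname{Tr}_{Q_AQ_BE}[(\cdots \otimes P)\, \sigma]$ one rewrites everything directly in terms of $\sigma_{Q_AQ_BE}$ and the POVM elements, giving Eq.~\eqref{Eq: XYTConditionalPUpperBounds}. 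The $Q$-term $\operatorname{Tr}[(\mathds{1}_A\otimes\mathcal{M}(\sigma)_E)\, Q]$ becomes $\operatorname{Tr}[\mathcal{M}(\sigma)_E\, Q] = \operatorname{Tr}_{Q_AQ_BE}[(\mathds{1}_{Q_AQ_B}\otimes Q)\,\sigma]$, which matches the second summand; the factor-of-$\dim A$ type issues do not arise because $A$ is absorbed into the block structure rather than traced out with identity weight.

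The main obstacle I anticipate is handling the normalization and support conditions cleanly. Theorem~\ref{Theorem: App2} and Lemma~\ref{Lemma:FirstNonComOpt} are stated for (sub)normalized $\rho$ with $\rho \ll \sigma$, but the per-block operators $\mathcal{M}(\sigma)_{AE\wedge\bar c|x,y,t}$ are generally subnormalized and one must verify the kernel-inclusion $\mathcal{M}(\sigma)_{AE\wedge\bar c|x,y,t} \ll \mathds{1}_A \otimes \mathcal{M}(\sigma)_E$ holds (it does, since the $E$-marginal of any subnormalized block is dominated by $\mathcal{M}(\sigma)_E$, and the $A$ part is trivially covered by $\mathds{1}_A$), and track how the normalization constants propagate through the $\frac{1}{\alpha-1}\log(\cdot)$ in the eventual divergence bound. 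I would also need to be careful that restricting the $Z$-variables to be block-diagonal in $a$ genuinely yields a valid \emph{upper} bound (true for $\alpha\in(1,2)$ since we are taking a supremum over a subset — wait, restricting the domain of a supremum decreases it, so one must instead argue the other direction: the full unrestricted $\sup$ over $Z$ acting on all of $A\otimes E$ can be \emph{achieved} by block-diagonal $Z$ because the optimal $Z$ for a block-diagonal $\rho,\sigma$ pair is itself block-diagonal, so no generality is lost). Making that last point precise — essentially that the variational problems in Theorem~\ref{Theorem: App2} respect the direct-sum structure — is the one genuinely non-routine ingredient, and I would prove it by noting the objective is a sum of blockwise objectives whose cross terms vanish, so blockwise optimization is optimal.
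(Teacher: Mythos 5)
Your proposal is correct and follows essentially the same route as the paper's proof: a blockwise decomposition over the classical $(x,y,t)$ registers (with the $\alpha$ and $1-\alpha$ powers of $\operatorname{P}(x,y,t)$ recombining into a single factor) yields Eq.~\eqref{Eq: ExtendedPTerm}, and then Lemma~\ref{Lemma:FirstNonComOpt} is applied per block, with the $Z$-operators reduced to Eve's system by exploiting the classical-quantum (block-diagonal in $a$) structure — precisely the step the paper handles via Eq.~\eqref{Eq: ProofReducingActionofZ}, citing~\cite{brown2023deviceindependent}. The one detail to make explicit is that obtaining $\mathcal{M}(\sigma)_{E}^{1-\alpha}$ rather than $\mathcal{M}(\sigma)_{E|X=x,Y=y,T=t}^{1-\alpha}$ in Eq.~\eqref{Eq: ExtendedPTerm} uses Eve's marginal being independent of the inputs, $\mathcal{M}(\sigma)_{E|X=x,Y=y,T=t}=\mathcal{M}(\sigma)_{E}$, which is implicit but not stated in your decomposition.
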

\begin{rem}
    For every value $\bar{c}$, one generates a family of operators, $Z_{a,i}^{x,y,t}$, that now solely act on Eve's Hilbert space. If one instead considers $Eq.\ (\ref{Eq: InterimGREATStep})$, which also includes the classical $B$ register for DIRE, one will instead generate a slightly larger set of operators, $Z_{a,b,i}^{x,y,t}$, where $a,b$ go over all possible values such $\bar{c} = f(a,b,x,y,t)$.
\end{rem}
We mentioned at the beginning of Section \ref{Section: SingleRoundCalculation} that one may assume $\sigma_{Q_AQ_BE}$ is pure, without losing generality. As such, after replacing $p(\bar{c})$ with the polynomials generated by Proposition \ref{Prop:FirstNonComOpt}, Eq.\ (\ref{Eq: OptimizationProblemSimplifiedExpression}) can be lower bounded via the NPA hierarchy. In doing so, the NPA hierarchy will generate variables, $\{y_i\}_{i \in N}$, and express each $p(\bar{c})$ as a linear function of them \cite{Pironio_2010}. Moreover, the resulting optimization problem remains convex, due to the convexity of the relative entropy. The precision of the resulting optimization problem is, in part, dependent on the order of the NPA relaxation.
\begin{restatable}{theorem}{FirstConvexOptBound}
\label{FirstConvexOptBound}
    For all $\alpha \in (1,2)$ and $m \in \mathbb{N}$, let $\{y_i\}_{i \in N}$ be a family of variables of size $N$ that is generated by an NPA relaxation. 
     Then there exists a family of linear functions,
    \begin{eqnarray}
        f_{\bar{c}} \left( y_1,\dots, y_{N}\right) := \sum_{i=1}^{N} a_{i,\bar{c}}y_{i} \; ,
    \end{eqnarray}
    and a matrix, $M(y)$, with entries that are linear in $\{y_i\}_{i \in N}$,
    such that 
    \begin{align} \label{Eq: OptProbFirstConvexOptBound}
      \inf_{\{M_{a|x}\}, \{N_{b|y}\}}\inf_{q \in S_\Omega} \inf_{\sigma \in S_{=}(Q_AQ_BE)}  \frac{1}{\alpha-1}D\left(q \middle\Vert p\right) 
\end{align}
is lower bounded by the convex optimization problem
\begin{equation}
\begin{aligned}
\inf_{q , \{y_i\} } \quad &\frac{1}{\alpha-1}D\left(q \middle\Vert f\right) \\
\textrm{s.t.} \quad & y_1 = 1 \\
& M(y) \succeq 0 \\
& q \in S_\Omega \; .
\end{aligned}
\end{equation}
\end{restatable}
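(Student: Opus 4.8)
The plan is to combine the three reductions already prepared in Section~\ref{Section: SingleRoundCalculation} with a single application of the NPA hierarchy, working throughout with $\alpha\in(1,2)$ as in the hypothesis. First I would use Lemma~\ref{Lemma: SinglehAlphaExpressionviaRelEntropy} to rewrite the objective of Eq.~\eqref{Eq: OptProbFirstConvexOptBound} as $\inf_{\{M_{a|x}\},\{N_{b|y}\}}\inf_{q\in S_\Omega}\inf_{\sigma}\tfrac{1}{\alpha-1}D(q\Vert p)$, with $p(\bar c)$ given by Eq.~\eqref{Eq: pbarcDefn}. Then I would invoke Proposition~\ref{Prop:FirstNonComOpt} to express each $p(\bar c)$ as the nonnegative-weighted sum over $(x,y,t)$ in Eq.~\eqref{Eq: ExtendedPTerm}, and to upper bound each summand by a supremum over Eve-local operators $Z_{a,i}^{x,y,t}$ of an expression that is \emph{linear} in the moments $\operatorname{Tr}[\sigma_{Q_AQ_BE}\, w]$, where $w$ ranges over words of degree at most four (the $Z^\dagger Z$ terms being the worst case) in $\{M_{a|x}\}$, $\{N_{b|y}\}$, $\{Z_{a,i}^{x,y,t}\}$ and their adjoints. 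Because the protocol probabilities are nonnegative and the operator families attached to distinct $(x,y,t)$ (and distinct $\bar c$) are independent, the suprema can be moved outside the sums, so that $p(\bar c)\le\sup_{\{Z\}}\bar p(\bar c,\{Z\})$, where $\bar p(\bar c,\{Z\})$ is a fixed linear combination of those moments.

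Next I would orient the inequalities correctly. Since $\tfrac{1}{\alpha-1}>0$ for $\alpha\in(1,2)$ and $t\mapsto D(q\Vert t)$ is nonincreasing in each coordinate, replacing $p(\bar c)$ by the larger quantity $\sup_{\{Z\}}\bar p(\bar c,\{Z\})$ can only decrease $\tfrac{1}{\alpha-1}D(q\Vert p)$ and hence yields a valid lower bound; and because the $\bar c$-th term of $D(q\Vert\cdot)$ depends only on the $\bar c$-th coordinate and the $Z$-families are chosen independently for each $\bar c$, the inner supremum over $\{Z\}$ turns into an outer infimum. The result is that Eq.~\eqref{Eq: OptProbFirstConvexOptBound} is lower bounded by $\inf_{\{M_{a|x}\},\{N_{b|y}\}}\inf_{q\in S_\Omega}\inf_{\sigma}\inf_{\{Z\}}\tfrac{1}{\alpha-1}D(q\Vert\tilde p_{\{Z\}})$, where $\tilde p_{\{Z\}}(\bar c):=\bar p(\bar c,\{Z\})$ is linear in the moments $\operatorname{Tr}[\sigma\,w]$.

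The last step is the NPA relaxation, carried out exactly as in~\cite{brown2023deviceindependent}: one may assume (by Naimark's dilation theorem, and by purifying the state) that the POVMs are projective and $\sigma$ is pure, and one imposes the commutation relations between Alice's, Bob's and Eve's operators; then, fixing an NPA level high enough to contain all the words $w$ above, its moment matrix becomes an affine function $M(y)$ of the variables $\{y_i\}_{i\in N}$, which must satisfy $M(y)\succeq 0$ and $y_1=\operatorname{Tr}[\sigma]=1$, and each $\tilde p_{\{Z\}}(\bar c)$ is realized as a linear functional $f_{\bar c}(y)=\sum_i a_{i,\bar c}y_i$ by substituting each moment $\operatorname{Tr}[\sigma\,w]$ with the corresponding $y_i$ (reading off the coefficients $a_{i,\bar c}$ from the polynomials $P,Q$ of Proposition~\ref{Prop:FirstNonComOpt} and the protocol probabilities is routine bookkeeping). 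Every genuine model $(\{M_{a|x}\},\{N_{b|y}\},\sigma,\{Z\})$ induces a feasible $y$ with $\tfrac{1}{\alpha-1}D(q\Vert\tilde p_{\{Z\}})=\tfrac{1}{\alpha-1}D(q\Vert f(y))$, so enlarging the feasible set to all NPA-feasible $y$ can only lower the infimum, which gives the claimed lower bound. Convexity of the relaxed problem then follows because $y_1=1$, $M(y)\succeq 0$ and $q\in S_\Omega$ (a box, by Eq.~\eqref{eq:acceptbox}) are convex constraints and $(q,y)\mapsto\tfrac{1}{\alpha-1}D(q\Vert f(y))$ is convex, the classical relative entropy being jointly convex in both arguments and $f$ affine (with the convention $D(q\Vert f)=+\infty$ whenever some $f_{\bar c}(y)\le 0$).

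The main obstacle is not any single computation but keeping the inequality directions consistent through the two reductions: one needs simultaneously that $\tfrac{1}{\alpha-1}>0$ on $(1,2)$, that $D(q\Vert\cdot)$ is decreasing, and that relaxing the feasible set lowers an infimum, so that the upper bound on $p(\bar c)$, the extraction of the suprema, and the NPA relaxation all push in the same direction and produce a genuine lower bound on Eq.~\eqref{Eq: OptProbFirstConvexOptBound} rather than an approximation. A secondary point is checking that the chosen NPA level is high enough that every degree-$\le 4$ word produced by $P$ and $Q$ appears as an entry of the moment matrix, so that the linearization $\tilde p_{\{Z\}}(\bar c)\mapsto f_{\bar c}(y)$ is exact on genuine moment vectors.
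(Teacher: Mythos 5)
Your proposal is correct and follows essentially the same route as the paper's proof: apply Proposition~\ref{Prop:FirstNonComOpt} to upper bound each $p(\bar{c})$, use positivity of $\tfrac{1}{\alpha-1}$ and monotonicity of $D(q\Vert\cdot)$ to convert the inner suprema over the $Z$-operators into outer infima, relax the tensor-product structure to commutation relations, and pass to the NPA moment variables so that enlarging the feasible set can only lower the infimum. Your explicit justification of the inequality directions and of the convexity of the relaxed problem is slightly more detailed than the paper's write-up, but the argument is the same.
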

\subsubsection{Reducing the Number of Non-Commuting Variables} \label{Section: VariableReductionWorseBounds}
Using Proposition \ref{Prop:FirstNonComOpt}, every value $\bar{c}$ generates a new set $\{Z_{a,i}^{x,y,t} \}$ of non-commuting operators. Even for $\bar{c} \in \{\perp, 0,1\}$, the total number of operators can often be large and Theorem \ref{FirstConvexOptBound} then generates a convex optimization problem that has too many variables to be easily tractable. To deal with this issue, we introduce some simpler bounds that require fewer operators. Specifically, the following simplified bound on $h_{\alpha}$ was derived in~\cite[Lemma~5.1]{arx_AHT24}, 
for all $\alpha \in (1,\infty)$:
\begin{align}
      h_{\alpha} \geq \inf_{q \in S_\Omega} \inf_{\sigma \in S_{=}(Q_AQ_BE)} \left( \frac{1}{\alpha-1}D\left(q \middle\Vert \mathcal{M}(\sigma)_{\bar{C}}\right)-\sum_{\bar{c}}q(\bar{c})H_{\alpha}\left( A|\bar{C}= \bar{c},XYTE\right)_{\mathcal{M}(\sigma)} \right) \; . \label{Eq: secondtargetoptprob}
\end{align}

We further simplify the optimization problem, by bounding
\begin{eqnarray}
    H_{\alpha}\left( A|\bar{C}= \bar{c},XYTE\right)_{\mathcal{M}(\sigma)} \geq 0
\end{eqnarray}
for all $\bar{c} \neq \perp$.
To see why one shouldn't expect this approximation to greatly affect the final bound, we note that if $\bar{C}$ is just a copy of all inputs and outputs during test rounds, then we anyway have
\begin{eqnarray}
     H_{\alpha}\left( A|\bar{C}= \bar{c},XYTE\right)_{\mathcal{M}(\sigma)} = 0
\end{eqnarray}
for all $\bar{c} \neq \perp$. Moreover, to bound $h_{\alpha}$, this term would include the additional pre-factor $q(\bar{c})$, which scales as $\gamma$ for $\bar{c} \neq \perp$. For protocols with sufficiently small fractions of test rounds, removing these terms will therefore not greatly affect the final result. 

For the remaining term, $ H_{\alpha}\left( A|\bar{C}= \perp,XYTE\right)_{\mathcal{M}(\sigma)}$, we note that $\bar{C}= \perp$ is equivalent to setting the test bit to $T=0$. Moreover, one may remove the classical register $Y$ without changing the value of the entropic quantity. As such, one has that
\begin{eqnarray}
     H_{\alpha}\left( A|\bar{C}= \perp,XYTE\right)_{\mathcal{M}(\sigma)} = H_{\alpha}\left( A|T=0,XE\right)_{\mathcal{M}(\sigma)} \; ,
\end{eqnarray}
where $T=0$ ensures that $X$ will be sampled from the set of key-generating inputs according to some pre-set distribution.

After approximating the {\Renyi} divergences in Eq.\ (\ref{Eq: hhatSecion4.2}) with these bounds, one needs to ensure that the remaining optimization problem stays convex. One easily achieves this by bounding the pre-factor $q(\perp)$ in front of the {\Renyi} entropy with its lowest possible value within the acceptance set $S_\Omega$, which we shall denote as $q_{min}\left(\perp\right)$. The remaining optimization problem is then given by 
\begin{equation} \label{Eq: NonTightLowerBoundhAlpha}
    \inf_{\{M_{a|x}\}, \{N_{b|y}\}}\inf_{q \in S_\Omega} \inf_{\sigma \in S_{=}(Q_AQ_BE)}  \frac{1}{\alpha-1}D\left(q \middle\Vert \mathcal{M}(\sigma)_{\bar{C}}\right) + q_{min}(\perp)H_{\alpha}\left( A|T=0,XE\right)_{\mathcal{M}(\sigma)} \; ,
\end{equation}
and the {\Renyi} entropy will be bounded by a non-commuting polynomial optimization problem. 
\begin{restatable}{prop}{PropSecondNonComOpt}
\label{Prop: SecondNonComOpt}
For all $\alpha \in \left(1,2\right)$, $H_{\alpha}\left( A|T=0, XE\right)_{\mathcal{M}(\sigma)}$ is equal to 
\begin{equation} \label{Eq: EntropyCondOnClassicalStates}
    \frac{1}{1-\alpha}\log_{2}\left(\sum_{x \in \mathcal{X}_{K}} \operatorname{P}\left(X=x|T=0\right)\operatorname{Tr}\left[\mathcal{M}(\sigma)_{AE|X=x}^{\alpha} \cdot \mathds{1}_{A} \otimes \mathcal{M}(\sigma)_{E}^{1-\alpha} \right]
     \right) \; .
\end{equation}
Moreover, for all $m \in \mathbb{N}$, there exists a choice of $t_{1},\dots, t_{m} \in \left(0,1\right]$ and $w_{1},\dots, w_{m} >0$ such that $\operatorname{Tr}\left[\mathcal{M}(\sigma)_{AE|X=x}^{\alpha} \cdot \mathds{1}_{A} \otimes \mathcal{M}(\sigma)_{E}^{1-\alpha} \right]$ is less than or equal to 
 \begin{align} \label{Eq: XConditionalPUpperBounds2}
\sup_{ Z_{a,i}^{x}} \sum_{a} \operatorname{Tr}\left[\sigma_{Q_AQ_BE}\left(  M_{a|x} \otimes \mathds{1}_{Q_B} \otimes P(Z_{a,1}^{x},Z_{a,2}^{x},\dots) + \mathds{1}_{Q_AQ_B} \otimes Q(Z_{a,1}^{x},Z_{a,2}^{x},\dots)\right)\right]\; ,
\end{align}
for all $x \in \mathcal{X}_{K}$, and 
\begin{align}
    P(Z_{a,1}^{x},Z_{a,2}^{x},\dots) &:= 1 +\frac{\sin(\alpha \pi )}{\pi}\sum_{i=1}^{m} \frac{w_{i}}{t_{i}} \left(\mathds{1} +Z_{a,i}^{x}+Z_{a,i}^{x \dagger} + \left(1-t_i\right)Z_{a,i}^{x \dagger} Z_{a,i}^{x}\right)\\
    Q(Z_{a,1}^{x},Z_{a,2}^{x},\dots) &:= \frac{\sin(\alpha \pi )}{\pi}\sum_{i=1}^{m} w_i Z_{a,i}^{x}Z_{a,i}^{x \dagger} \; .
\end{align}
\end{restatable}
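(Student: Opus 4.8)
The plan is to establish the two assertions separately: the first by a direct computation from the definition, the second by invoking Theorem~\ref{Theorem: App2} and then reducing the matrix variables to Eve's system, in the spirit of Proposition~\ref{Prop:FirstNonComOpt}.

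For the first assertion I would unpack~\eqref{eq:cond_renyi}: $H_{\alpha}(A|T=0,XE)_{\mathcal{M}(\sigma)} = -D_{\alpha}(\mathcal{M}(\sigma)_{AXE|T=0} \,\|\, \mathds{1}_A\otimes\mathcal{M}(\sigma)_{XE|T=0})$. Both arguments are block-diagonal in the classical register $X$, with $\mathcal{M}(\sigma)_{AXE|T=0} = \sum_{x\in\mathcal{X}_K}\operatorname{P}(X=x|T=0)\,\mathcal{M}(\sigma)_{AE|X=x}\otimes\ketbra{x}{x}$; moreover Eve's reduced state does not depend on Alice's input (as $\sum_a M_{a|x}=\mathds{1}$), so $\mathcal{M}(\sigma)_{XE|T=0}$ carries the same $E$-marginal $\mathcal{M}(\sigma)_E$ in every $x$-sector. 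Consequently the powers $\alpha$ and $1-\alpha$ act blockwise, the scalar factors $\operatorname{P}(X=x|T=0)^\alpha$ and $\operatorname{P}(X=x|T=0)^{1-\alpha}$ recombine to $\operatorname{P}(X=x|T=0)$, and since $\operatorname{Tr}[\mathcal{M}(\sigma)_{AXE|T=0}]=\operatorname{Tr}[\sigma]=1$ the definition of $D_{\alpha}$ gives Eq.~\eqref{Eq: EntropyCondOnClassicalStates}. Finiteness of the divergence is automatic: each block obeys the standard relation $\rho_{AE}\ll\mathds{1}_A\otimes\rho_E$, hence $\mathcal{M}(\sigma)_{AXE|T=0}\ll\mathds{1}_A\otimes\mathcal{M}(\sigma)_{XE|T=0}$.

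For the second assertion, fix $x\in\mathcal{X}_K$ and set $\rho=\mathcal{M}(\sigma)_{AE|X=x}$, $\sigma'=\mathds{1}_A\otimes\mathcal{M}(\sigma)_E$; then $\rho\ll\sigma'$ by the same fact, so Theorem~\ref{Theorem: App2} applies and, for $\alpha\in(1,2)$, gives $\operatorname{Tr}[\rho^\alpha\sigma'^{1-\alpha}]\leq\sup_{Z_1,\dots,Z_m}\big(\operatorname{Tr}[\rho\,P(Z_1,\dots,Z_m)]+\operatorname{Tr}[\sigma'\,Q(Z_1,\dots,Z_m)]\big)$ with the same $P,Q$ as in Lemma~\ref{Lemma:FirstNonComOpt}. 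It then remains to reduce the matrix variables to act on Eve's space only. I would exploit that both $\rho=\sum_a\ketbra{a}{a}\otimes\tau_a$ and $\sigma'=\sum_a\ketbra{a}{a}\otimes\mathcal{M}(\sigma)_E$ are block-diagonal in the $A$-register, where $\tau_a=\operatorname{Tr}_{Q_AQ_B}[(M_{a|x}\otimes\mathds{1}_{Q_B}\otimes\mathds{1}_E)\sigma]$. Writing $Z_i=\sum_{a,b}\ketbra{a}{b}\otimes Z_i^{(ab)}$ and expanding through the \emph{equality} form of Proposition~\ref{Prop: Approach2VarOpt}, each off-diagonal block $Z_i^{(ab)}$ with $a\neq b$ enters only via the nonnegative terms $(1-t_i)\operatorname{Tr}[\tau_b\,(Z_i^{(ab)})^\dagger Z_i^{(ab)}]$ and $t_i\operatorname{Tr}[\mathcal{M}(\sigma)_E\,Z_i^{(ab)}(Z_i^{(ab)})^\dagger]$ (recall $t_i\in(0,1]$, so $1-t_i\geq0$); hence the relevant infima are attained with all off-diagonal blocks vanishing. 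The objective then decomposes as a sum over $a$ of terms depending only on $\tau_a$ and $\mathcal{M}(\sigma)_E$, each carrying its own family $\{Z_{a,i}^x\}_i$; substituting $\tau_a$ and $\mathcal{M}(\sigma)_E=\operatorname{Tr}_{Q_AQ_B}[\sigma]$, using cyclicity to move $M_{a|x}$ next to $\sigma$, and pulling the suprema outside the sum over $a$ yields Eq.~\eqref{Eq: XConditionalPUpperBounds2}.

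The main obstacle is this final reduction, chiefly the sign bookkeeping: for $\alpha\in(1,2)$ one has $\sin(\alpha\pi)<0$, so the outer supremum of $\operatorname{Tr}[\rho P]+\operatorname{Tr}[\sigma'Q]$ equals the constant $C_m$ plus a sum of node-wise \emph{infima}. The block-diagonal truncation must be carried out at the level of these infima, where the positivity argument is valid, and only afterwards recombined into a supremum with the $\sin(\alpha\pi)$- and $C_m$-prefactors re-absorbed into $P$ and $Q$; truncating directly on the outer supremum would not obviously preserve the $\leq$. Two routine points to keep track of: $\operatorname{Tr}[\mathcal{M}(\sigma)_{AE|X=x}]=\operatorname{Tr}[\sigma]=1$ (so the $C_m$ of Theorem~\ref{Theorem: App2} coincides with the scalar ``$1$'' in $P$), and the $Y$-register has already been dropped because it does not affect the entropy (as $\sum_b N_{b|y}=\mathds{1}$), so $\tau_a$ depends only on $M_{a|x}$.
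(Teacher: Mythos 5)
Your proposal is correct and follows essentially the same route as the paper: the decomposition in Eq.~\eqref{Eq: EntropyCondOnClassicalStates} is what the paper obtains by citing \cite[Proposition~5.1]{Tomamichel2015QuantumIP} together with $\mathcal{M}(\sigma)_{E|X=x,T=0}=\mathcal{M}(\sigma)_E$, and your block-diagonal truncation of the $Z_i$ (off-diagonal blocks entering only through nonnegative quadratic terms in the node-wise infima) is exactly the content of the variable-reduction step the paper imports from \cite[Lemma~2.3]{brown2023deviceindependent}, applied after Theorem~\ref{Theorem: App2} just as in the paper's extension of Lemma~\ref{Lemma:FirstNonComOpt}. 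Your explicit attention to the sign of $\sin(\alpha\pi)$ for $\alpha\in(1,2)$ and to absorbing $C_m$ via $\operatorname{Tr}[\mathcal{M}(\sigma)_{AE|X=x}]=1$ correctly handles the only delicate bookkeeping.
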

Compared to Proposition \ref{Prop:FirstNonComOpt}, Proposition \ref{Prop: SecondNonComOpt} generates fewer non-commuting variables. This effect is especially noticeable, when Alice has only one set of key-generating measurements. Analogous to Theorem \ref{FirstConvexOptBound}, one can bound Eq.\ (\ref{Eq: NonTightLowerBoundhAlpha}) using the NPA hierarchy.
\begin{restatable}{theorem}{SecondConvexOptBound}
\label{Theorem: SecondConvexOptBound}
    For all $\alpha \in (1,2)$ and $m \in \mathbb{N}$, let $\{y_i\}_{i \in N}$ be a family of variables of size $N$ that is generated by an NPA relaxation. Then there exists a set of linear functions,
    \begin{eqnarray}
        f_{\bar{c}} \left( y_1,\dots, y_{N}\right) &:=& \sum_{i=1}^{N} a_{i,\bar{c}}y_{i} \\
        g \left( y_1,\dots, y_{N}\right) &:=& \sum_{i=1}^{N} b_{i}y_{i} \; ,
    \end{eqnarray}
    and a matrix, $M(y)$, with entries that are linear in $\{y_i\}_{i \in N}$,
    such that 
    \begin{eqnarray} \label{Eq: OptProbSecondConvexOptBound}
      \inf_{\{M_{a|x}\}, \{N_{b|y}\}}\inf_{q \in S_\Omega} \inf_{\sigma \in S_{=}(Q_AQ_BE)}  \frac{1}{\alpha-1}D\left(q \middle\Vert \mathcal{M}(\sigma)_{\bar{C}}\right)+q_{min}(\perp)H_{\alpha}\left( A|T=0,XE\right)_{\mathcal{M}(\sigma)}
\end{eqnarray}
is lower bounded by the convex optimization problem
\begin{equation}
\begin{aligned}
\inf_{q , \{y_i \}} \quad &\frac{1}{\alpha-1}D\left(q \middle\Vert f\right) + \frac{q_{min}}{1-\alpha}\log_{2}\left(g\right)\\
\textrm{s.t.} \quad & y_1 = 1 \\
& M(y) \succeq 0 \\
& q \in S_\Omega \; .
\end{aligned}
\end{equation}
\end{restatable}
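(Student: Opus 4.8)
The plan is to obtain the displayed convex program from Eq.~\eqref{Eq: NonTightLowerBoundhAlpha} in three moves: reduce to pure input states and feed in the variational expression of Proposition~\ref{Prop: SecondNonComOpt}; then invoke the NPA hierarchy to ``linearise'' every operator-valued quantity into an affine function of moment-matrix entries; and finally read off that the resulting problem is convex. By the data-processing inequality (as noted at the start of Section~\ref{Section: SingleRoundCalculation}) the infimum over $\sigma\in S_=(Q_AQ_BE)$ in Eq.~\eqref{Eq: NonTightLowerBoundhAlpha} may be taken over pure states $\sigma=\ketbra{\psi}{\psi}$. For such $\sigma$, Proposition~\ref{Prop: SecondNonComOpt} writes $H_\alpha(A|T=0,XE)_{\mathcal M(\sigma)}=\tfrac{1}{1-\alpha}\log_2\!\big(\sum_{x\in\mathcal{X}_K}\operatorname{P}(X=x|T=0)\,W_x\big)$ with $W_x=\operatorname{Tr}[\mathcal M(\sigma)_{AE|X=x}^{\alpha}\,\mathds{1}_A\otimes\mathcal M(\sigma)_E^{1-\alpha}]$, and bounds each $W_x$ above by the supremum in Eq.~\eqref{Eq: XConditionalPUpperBounds2}; writing the objective of that supremum as $\widetilde V_x(Z^x)=\langle\psi|\sum_a\big(M_{a|x}\otimes\mathds{1}_{Q_B}\otimes P(Z^x_{a,\cdot})+\mathds{1}_{Q_AQ_B}\otimes Q(Z^x_{a,\cdot})\big)|\psi\rangle$ (with $Z^x_{a,\cdot}$ abbreviating the tuple $Z^x_{a,1},Z^x_{a,2},\dots$), it is a fixed polynomial in the non-commuting family $\mathcal O=\{M_{a|x}\}\cup\{N_{b|y}\}\cup\{Z^x_{a,i}\}$ evaluated on $\ket{\psi}$.

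Next, fix an NPA relaxation: it selects finitely many monomials in $\mathcal O$ and produces the moment matrix $\Gamma$ with $\Gamma_{u,v}=\langle\psi|u^\dagger v|\psi\rangle$, which is positive semidefinite, has normalisation entry $\langle\psi|\psi\rangle=1$, and whose entries obey the linear identities coming from the defining relations of $\mathcal O$: Hermiticity and completeness of the POVMs $\{M_{a|x}\},\{N_{b|y}\}$, and the commutations $[M_{a|x},N_{b|y}]=[M_{a|x},Z^{x'}_{a',i}]=[N_{b|y},Z^{x'}_{a',i}]=0$ (the $Z^x_{a,i}$ act on $E$ only). Collecting the free entries of $\Gamma$ into a vector $y$, with $y_1$ the normalisation entry and ``$M(y)\succeq 0$'' encoding positive semidefiniteness together with all these linear identities, the distribution $\mathcal M(\sigma)_{\bar C}$ becomes a fixed vector of affine functions $f_{\bar c}(y)=\sum_i a_{i,\bar c}y_i$ (its entries are just outcome probabilities $\sum_{x,y,t}\operatorname{P}(x,y,t)\sum_{a,b:f(a,b,x,y,t)=\bar c}\langle\psi|M_{a|x}\otimes N_{b|y}|\psi\rangle$), and $\sum_{x}\operatorname{P}(X=x|T=0)\,\widetilde V_x(Z^x)$ becomes an affine function $g(y)=\sum_i b_i y_i$.

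The crux is to check that the displayed program indeed lower-bounds Eq.~\eqref{Eq: OptProbSecondConvexOptBound}. Given any feasible tuple $(\{M_{a|x}\},\{N_{b|y}\},\sigma=\ketbra{\psi}{\psi},q)$ for Eq.~\eqref{Eq: NonTightLowerBoundhAlpha} and any $\delta>0$, pick for each $x$ operators $Z^x_{a,i}$ whose value $\widetilde V_x(Z^x)$ is within $\delta$ of the supremum in Proposition~\ref{Prop: SecondNonComOpt}, and assemble the moment matrix $\Gamma$ from $\ket{\psi}$ and all of $\{M_{a|x}\},\{N_{b|y}\},\{Z^x_{a,i}\}$; then $(\Gamma,q)$ is feasible for the program ($y_1=1$, $M(y)\succeq0$, $q\in S_\Omega$). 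For this point $f(\Gamma)=\mathcal M(\sigma)_{\bar C}$ \emph{exactly}, so $\tfrac{1}{\alpha-1}D(q\Vert f)$ equals the original $\tfrac{1}{\alpha-1}D(q\Vert\mathcal M(\sigma)_{\bar C})$; and $g(\Gamma)\ge\sum_x\operatorname{P}(X=x|T=0)W_x-\delta>0$, so since $1-\alpha<0$ and $\log_2$ is increasing, $\tfrac{q_{min}}{1-\alpha}\log_2 g(\Gamma)\le q_{min}(\perp)H_\alpha(A|T=0,XE)_{\mathcal M(\sigma)}+O(\delta)$. Hence the program's objective at $(\Gamma,q)$ is at most the Eq.~\eqref{Eq: NonTightLowerBoundhAlpha}-objective at the original tuple plus $O(\delta)$; letting $\delta\to0$ and taking the infimum over all original tuples gives the claimed bound. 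Convexity is then immediate: $D(q\Vert f)$ is jointly convex in $(q,f)$ and $f$ is affine with positive prefactor $\tfrac{1}{\alpha-1}$, $\log_2 g$ is concave in $y$ with non-positive prefactor $\tfrac{q_{min}}{1-\alpha}$, and the sets $\{y_1=1\}$, $\{M(y)\succeq0\}$, $S_\Omega$ are convex.

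I expect the main obstacle to be exactly the step just outlined: correctly handling the $\inf$--$\sup$ (min--max) structure introduced by Proposition~\ref{Prop: SecondNonComOpt} --- in particular the observation that a single NPA moment matrix can host near-optimal Eve operators \emph{together with} the optimising measurement--state data, so that exhibiting one feasible program point per original tuple suffices --- and the bookkeeping of the growing family $\{Z^x_{a,i}\}$ and the commutation relations that make $\widetilde V_x$ genuinely an affine function of the moments. A minor loose end is verifying $g(y)>0$ on the part of the feasible region reached by these constructions so that $\log_2 g$ is well defined, which holds because there $g$ agrees with $\sum_x\operatorname{P}(X=x|T=0)W_x>0$ up to the slack $\delta$.
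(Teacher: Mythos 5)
Your proposal is correct and follows essentially the same route as the paper's proof: reduce to pure states, feed in Proposition \ref{Prop: SecondNonComOpt}, relax the tensor-product structure to commutation relations so that an NPA moment matrix linearises both $\mathcal{M}(\sigma)_{\bar{C}}$ and the entropy's variational objective, and conclude by convexity of $D(q\Vert f)$ and of $\tfrac{q_{\min}}{1-\alpha}\log_2 g$. The only (equivalent) presentational difference is that the paper folds the inner supremum over the $Z^x_{a,i}$ into an infimum using the sign of $\tfrac{1}{1-\alpha}$ before relaxing, whereas you exhibit a $\delta$-near-optimal NPA-feasible point per original tuple; both resolve the min--max structure the same way.
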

Alternatively, another simplified bound was derived in~\cite[Lemma~5.2]{arx_AHT24}; namely, for any $\alpha,\alpha',\alpha''\in (1,\infty)$ satisfying $\frac{\alpha}{\alpha-1} = \frac{\alpha'}{\alpha'-1} + \frac{\alpha''}{\alpha''-1}$, we have
\begin{align} 
h_\alpha &\geq \inf_{\{M_{a|x}\}, \{N_{b|y}\}}\inf_{q \in S_\Omega} \inf_{\sigma \in S_{=}(Q_AQ_BE)}   \frac{\alpha''}{\alpha''-1}D\left(q \middle\Vert \mathcal{M}(\sigma)_{\bar{C}}\right) + H_{\alpha'}\left( A\bar{C}|XYTE\right)_{\mathcal{M}(\sigma)} \label{Eq: simplified3Renyi} \\
&\geq \inf_{\{M_{a|x}\}, \{N_{b|y}\}}\inf_{q \in S_\Omega} \inf_{\sigma \in S_{=}(Q_AQ_BE)}  \frac{\alpha''}{\alpha''-1}D\left(q \middle\Vert \mathcal{M}(\sigma)_{\bar{C}}\right) + H_{\alpha'}\left( A|XTE\right)_{\mathcal{M}(\sigma)}
\; , \label{Eq: lastsimplified3Renyi} 
\end{align}
where the relaxation in the second line is again not strictly necessary but does make the final SDPs smaller.
This bound can then be relaxed to a non-commuting polynomial optimization using the same techniques as above.
Note that as compared to Eq.~\eqref{Eq: NonTightLowerBoundhAlpha}, in this bound there is no prefactor on the entropy term, and furthermore we do not condition on $\bar{C}= \perp$ and/or $T=0$.\footnote{For the CHSH game in particular, we have the convenient property that $H_{\alpha'}\left( A|XTE\right)_{\mathcal{M}(\sigma)}$ can still be lower bounded using the same non-commuting polynomial as for $H_{\alpha'}\left( A|T=0,XE\right)_{\mathcal{M}(\sigma)}$, due to ``symmetries'' of the CHSH game that ensure the same lower bound holds conditioned on any choice of inputs $(X,Y)$ (this fact was implicitly used in past works such as~\cite{Arnon-Friedman:2018aa,Liu_2021}). Technically, it was noted in~\cite{SGP+21} that this bound can in fact be fairly loose for some protocols; however, for the protocol we consider here (where fixed inputs are used in generation rounds, and test rounds have fairly small probability) it does not make much difference.} This means we have essentially preserved the ``entropy contributions from test rounds'', unlike the above bound; however, this comes at the cost of worse {\Renyi} parameters (see~\cite[Sec.~5.2]{arx_AHT24} for more detailed analysis). Hence we compute results for both of these versions to see which performs better in practice. 

In the next section, we will be comparing the performance of these bounds against the previous EAT version from~\cite{Liu_2021}. However, since that bound was based on von Neumann entropy (using a closed-form bound from~\cite{pironio2009deviceindependent}), one may be interested in seeing what can be achieved with the {\Renyi} EAT performs if one only has a method to bound the von Neumann entropy. In the interests of achieving as fair a comparison as possible to~\cite{Liu_2021} under these circumstances, we do so via the following (not necessarily optimal) chain of inequalities. First, recalling from Eq.~\eqref{Eq: InterimGREATStep} that for DIRE it suffices to bound\footnote{For DIQKD, we would perform analogous arguments for $\widetilde{H}_{\alpha}^{\uparrow}\left({A}_{1}^{n}{\bar{B}}_{1}^{n}\bar{{C}}_{1}^{n}|
{X}_{1}^{n}{Y}_{1}^{n}{T}_{1}^{n}\mathbf{E}\right)_{\rho_{|\Omega}}$; note that this yields the same final values for the various dimension-dependent terms because $\dim(\bar{B}_i) = \dim(B_i)$ in our case.} $\widetilde{H}_{\alpha}^{\uparrow}\left({A}_{1}^{n}{{B}}_{1}^{n}\bar{{C}}_{1}^{n}|
{X}_{1}^{n}{Y}_{1}^{n}{T}_{1}^{n}\mathbf{E}\right)_{\rho_{|\Omega}}$, we apply~\cite[Lemma~5.2]{arx_AHT24} to obtain a bound similar to Eq.~\eqref{Eq: simplified3Renyi} except with $\widetilde{H}_{\alpha'}\left( A{B}\bar{C}|XYTE\right)_{\mathcal{M}(\sigma)}$ in place of $\widetilde{H}_{\alpha'}\left( A{B}|XYTE\right)_{\mathcal{M}(\sigma)}$ (this is to allow a slight sharpening in the dimension-dependent terms in the next step). We then relax this {\Renyi} entropy to a von Neumann entropy via
\begin{align}
\widetilde{H}_{\alpha'}\left( A{B}\bar{C}|XYTE\right)_{\mathcal{M}(\sigma)} &= \widetilde{H}_{\alpha'}\left( A{B}|XYTE\right)_{\mathcal{M}(\sigma)} \\
&\geq H\left( A{B}|XYTE\right)_{\mathcal{M}(\sigma)} - (\alpha'-1)\log^2\left(1+2 \dim(A{B}) \right)
 \; , \label{Eq: RenyitovN}
\end{align}
where the first line is~\cite[Lemma~B.7]{DFR20} and the second line is a continuity bound in the {\Renyi} parameter,~\cite[Lemma~B.9]{DFR20} (assuming $\alpha'\in (1,1+{1}/{\log\left(1+2\dim(A{B})\right)})$). 

Putting things together, we have the following bound (for $\alpha',\alpha''>1$ satisfying $\frac{\alpha}{\alpha-1} = \frac{\alpha'}{\alpha'-1} + \frac{\alpha''}{\alpha''-1}$ and $\alpha'\in (1,1+{1}/{\log\left(1+2\dim(A{B})\right)})$), though it is not entirely optimal as we shall shortly discuss:
\begin{align} \label{Eq: HminviavN}
H_{\operatorname{min}}^{\epsilon}\left({A}_{1}^{n}{{B}}_{1}^{n}|
{X}_{1}^{n}{Y}_{1}^{n}{T}_{1}^{n}\mathbf{E}\right)_{\rho_{|\Omega}} &\geq n h_{\mathrm{vN},\alpha''} - n(\alpha'-1)\log^2\left(1+2 \dim(A{B}) \right) \nonumber\\
&\qquad - \frac{\log (2/\epsilon^{2})}{\alpha - 1}
- \frac{\alpha}{\alpha-1} \log\frac{1}{p_\Omega} \; ,
\end{align}
where 
\begin{align}
h_{\mathrm{vN},\alpha''} &= \inf_{\{M_{a|x}\}, \{N_{b|y}\}}\inf_{q \in S_\Omega} \inf_{\sigma \in S_{=}(Q_AQ_BE)}  \frac{\alpha''}{\alpha''-1}D\left(q \middle\Vert \mathcal{M}(\sigma)_{\bar{C}}\right) + H\left( A{B}|XYTE\right)_{\mathcal{M}(\sigma)} \nonumber\\
&\geq \inf_{\{M_{a|x}\}, \{N_{b|y}\}}\inf_{q \in S_\Omega} \inf_{\sigma \in S_{=}(Q_AQ_BE)}  \frac{\alpha''}{\alpha''-1}D\left(q \middle\Vert \mathcal{M}(\sigma)_{\bar{C}}\right) + H\left( A|XTE\right)_{\mathcal{M}(\sigma)}
 \; ,
\end{align}
where again the second inequality is not strictly necessary but yields a fairer comparison to~\cite{Liu_2021}. In particular, for this work we then bound $H\left( A|XTE\right)_{\mathcal{M}(\sigma)}$ in terms of the CHSH value using the closed-form bound from~\cite{pironio2009deviceindependent}, which reduces the above expression to a $3$-parameter minimization problem (over two entries of $q$ and the CHSH value of the state $\mathcal{M}(\sigma)$; see~\cite[Sec.~7.3]{arx_AHT24} for details). We then evaluate its value using heuristic optimization algorithms. (Unlike our preceding calculations, the resulting value is a less rigorously certified lower bound, since we only used heuristic algorithms here rather than convex optimization algorithms. However, we consider this bound to be just a rough reference point for comparison to our preceding calculations, and hence having a rigorously certified bound on it is a lower priority. Also, as the optimization only involves a small number of parameters and is individually convex in each parameter, there is less risk of a heuristic algorithm failing to find the true minimum.)

There are various improvements that could be made to the above bound. For instance, observe that in this work as well as~\cite{Liu_2021}, in the end we only invoked a bound on $H\left( A|XTE\right)_{\mathcal{M}(\sigma)}$ rather than $H\left( A{B}|XYTE\right)_{\mathcal{M}(\sigma)}$ (since these computations were based on applying the~\cite{pironio2009deviceindependent} bound on the former). Given this, the dimension dependence in Eq.~\eqref{Eq: RenyitovN} can be slightly improved by dropping the ${B}$ register \emph{before} applying~\cite[Lemma~B.9]{DFR20}, so the latter would only depend on $\dim(A)$. 
However, we choose not to do this here for a fairer comparison to~\cite{Liu_2021}, since the EAT bound in that work involves $\dim(A{B})$ rather than $\dim(A)$.
Other potential improvements include using the more elaborate continuity bound from~\cite[Corollary~IV.2]{DF19} (which should be sharper than~\cite[Lemma~B.9]{DFR20} in this context, as discussed in~\cite[Sec.~5.2]{arx_AHT24}), or starting from the~\cite[Lemma~5.1]{arx_AHT24} simplification rather than the~\cite[Lemma~5.2]{arx_AHT24} simplification (our results in the next section suggest that the former is better). We leave this for future work, since in the next section we compute some numerical examples and show that even this suboptimal bound can already outperform the~\cite{Liu_2021} bound.

\subsection{DIQKD Bounds for the Extended CHSH Set-up} \label{Section: DIQKDCHSHBounds}
We consider the case where Alice has two measurement inputs, $\mathcal{X}=\{0,1\}$, with binary outputs and Bob has three, $\mathcal{Y}=\{0,1,2\}$. Moreover, the key-generating and test measurements are given by $\mathcal{X}_{K}=\{0\}$, $\mathcal{X}_{T}=\{0,1\}$, $\mathcal{Y}_{K}=\{2\}$, $\mathcal{Y}_{T}=\{0,1\}$.
\begin{algorithm}[H]
\floatname{algorithm}{}
\caption{Extended CHSH Set-up}
\begin{tabular}{lll} 
 Parties: &Alice &Bob\\
 Inputs: & $\mathcal{X} = \{0,1\}$  & $\mathcal{Y} = \{0,1,2\}$\\
 Key-Inputs: & $\mathcal{X}_{K} = \{0\}$  & $\mathcal{Y}_{K} = \{2\}$\\
 Test-Inputs: & $\mathcal{X}_{T} = \{0,1\}$  & $\mathcal{Y}_{T} = \{0,1\}$\\
 Outputs: & $\mathcal{A} = \{0,1\}$  &$\mathcal{B} = \{0,1\}$\\
\end{tabular}
\end{algorithm} 

We capture the CHSH game within our protocol as follows, the same way as in~\cite{Arnon-Friedman:2018aa}.
The win condition during test rounds is that the inputs $X,Y$ and respective outputs $A,B$ satisfy
\begin{equation}
    A \oplus B = X \cdot Y \; .
\end{equation}
Whenever this occurs in a test round, one sets $\bar{C}=1$, otherwise one sets $\bar{C}=0$. If it is a generation round, then $\bar{C}=\perp$ as discussed previously. In the example we present, the set of accepted distributions is as follows. We take the honest behavior to be IID and to achieve an expected winning probability of $\omega_\mathrm{hon} = 0.8$ during test rounds. For any given test-round fraction, $\gamma$, and number of rounds, $n$, the corresponding tolerance level we allow on the test data outputs, $\bar{C}$, is given by Eq.~(\ref{eq:dtolbox}). 
For any {\Renyi} parameter $\alpha>1$, Theorem \ref{Theorem: SecondConvexOptBound} generates a convex optimization problem that can be used to lower bound $H_{\operatorname{min}}^\epsilon \left(A_{1}^{n} B_{1}^{n}|
X_{1}^{n}Y_{1}^{n}
T_{1}^{n}\mathbf{E}\right)_{\rho_{|\Omega}}$. We then find the optimal {\Renyi} parameter numerically. Using $m=6$ nodes, we generated the variational optimization problem via Theorem \ref{Theorem: App2}. With regards to the NPA hierarchy, we used NPA Level 1, and included the Local Level 1 set~\cite[Remark 2.6]{brown2023deviceindependent}. We used the same number of nodes and relaxation level when calculating the bounds from Eq.\ (\ref{Eq: lastsimplified3Renyi}). For simplicity, we set $\alpha^\prime = \alpha^{\prime \prime}$ for the latter two approaches and subsequently optimized numerically over $\alpha$. Figure \ref{fig:EATRate} contains the generated data and compares it to~\cite{Liu_2021}.

\begin{figure}[h]\centering

\definecolor{ao}{rgb}{0.0,0.0,1.0}
\definecolor{airforceblue}{rgb}{0.36,0.54,0.66}
\definecolor{aliceblue}{rgb}{0.94,0.97,1.00}
\definecolor{aqua}{rgb}{0.7,0.97,1.00}

\colorlet{MyColorOne}{yellow!50}

\newcommand{\lightercolor}[3]{
    \colorlet{#3}{#1!#2!white}
}

\newcommand{\darkercolor}[3]{
    \colorlet{#3}{#1!#2!orange}
}

\lightercolor{MyColorOne}{50}{MyColorOneLight}
\darkercolor{MyColorOne}{50}{MyColorOneDark}
\begin{tikzpicture} [scale = 0.4]

\begin{axis}[%
width=6.028in,
height=4.754in,
at={(0in,0.642in)},
scale only axis,
xmode=log,
xmin=5000000,
xmax=10000000000,
xminorticks=true,
xlabel={Rounds of the protocol, n},
ymin=0,
ymax=0.4,
ylabel={Lower Bound on $H_{\operatorname{min}}^{\epsilon}\left(A_{1}^{n}B_{1}^{n}|
X_{1}^{n}Y_{1}^{n}T_{1}^{n}\mathbf{E}\right)_{\rho_{|\Omega}}/n$},
label style={font=\Large},
legend pos = north east,
axis background/.style={fill=white}
]

\addplot[name path=S1, domain = 5000000:10000000000, line width=1pt, black, dash pattern={on 12pt }, forget plot] {0.346112};
\addplot [color=red, mark=o, mark options={solid, red}, forget plot]
  table[row sep=crcr]{%
10000000	0.140755047\\ 
30000000	0.221706660495319\\ 
100000000	0.274809708464348\\ 
300000000	0.302987829289995\\ 
1000000000	0.320752327343884\\ 
3000000000	0.329997118521976\\ 
10000000000	0.335685027934794\\ 
};
\addplot [color=red, dashed, mark=o, mark options={solid, red}, forget plot]
  table[row sep=crcr]{%
100000000	0.142076956023595\\ 
300000000	0.223753698471279\\ 
1000000000	0.27732972553353\\ 
3000000000	0.305758876399186\\ 
10000000000	0.323680419260892\\ 
};
\addplot [color=teal, mark=o, mark options={solid, teal}, forget plot]
  table[row sep=crcr]{%
10000000	0.110985295\\
30000000	0.204817329\\
100000000	0.266639553\\
300000000	0.299578484\\
1000000000	0.320394338\\
3000000000	0.331196889\\
10000000000	0.337925187\\
};
\addplot [color=teal, dashed, mark=o, mark options={solid, teal}, forget plot]
  table[row sep=crcr]{%
100000000	0.111077426\\
300000000	0.204847111\\
1000000000	0.266644146\\
3000000000	0.299585462\\
10000000000 0.320397523\\
};
\addplot [color=magenta, mark=o, mark options={solid, magenta}, forget plot]
  table[row sep=crcr]{%
10000000	0.10053265397342\\
30000000	0.199993950057734\\
100000000	0.260788136814316\\
300000000	0.296405030665986\\
1000000000	0.318730153578763\\
3000000000	0.330256000758258\\
10000000000	0.337411311416491\\
};
\addplot [color=magenta, dashed, mark=o, mark options={solid, magenta}, forget plot]
  table[row sep=crcr]{%
100000000	0.101673427210379\\
300000000	0.200652597803476\\
1000000000	0.264809370640531\\
3000000000	0.296570531853093\\
10000000000	0.318820802698528\\
};

\addplot [color=blue, mark=o, mark options={solid, blue}, forget plot]
  table[row sep=crcr]{%
10000000	0.066617134\\
30000000	0.174604741\\
100000000	0.248162198\\
300000000	0.288390996\\
1000000000	0.314093159\\
3000000000	0.327492571\\
10000000000	0.33585066\\
};
\addplot [color=blue, dashed, mark=o, mark options={solid, blue}, forget plot]
  table[row sep=crcr]{%
100000000	0.071566138\\
300000000	0.177730766\\
1000000000	0.249694918\\
3000000000	0.289271602\\
10000000000	0.314586338\\
};
\end{axis};
\end{tikzpicture}%
\caption{
Comparison of various bounds on $H_{\operatorname{min}}^{\epsilon}\left(A_{1}^{n}B_{1}^{n}|
X_{1}^{n}Y_{1}^{n}T_{1}^{n}\mathbf{E}\right)_{\rho_{|\Omega}}/n$.
Solid lines indicate a test-round fraction of $\gamma = 10^{-2}$, and dashed lines use $\gamma = 10^{-3}$. The best bounds we obtain (shown in red) use Eq.~\eqref{Eq: FinalHminBound} with Theorem \ref{Theorem: SecondConvexOptBound}. The second-best bounds (shown in green) arise from using Eq.~\eqref{Eq: FinalHminBound} with Eq.\ (\ref{Eq: lastsimplified3Renyi}). The third-best bounds (shown in orange) can be achieved with only a bound on single-round von Neumann entropy, from Eq.~\eqref{Eq: HminviavN}. All three approaches improve upon the best bound from previous work, namely~\cite{Liu_2021}, which we evaluated for these scenarios and display as well (in blue). The horizontal, dashed line represents the asymptotic value our results should converge to. For these plots, we impose $\epsilon= 10^{-5}$ and $p_{\Omega}\geq 10^{-5}$ (the latter condition can be imposed from the security proof structure; see e.g.~\cite{Arnon-Friedman:2018aa,Liu_2021,TSB+22}).}
\label{fig:EATRate}
\end{figure}

In the above figure, we constrained all the values of $q(\bar{c})$ as described in Eq.~\eqref{eq:acceptbox} when computing the bounds described in this work (i.e.~the red, green and orange curves), using the completeness parameter $\eps^\mathrm{com}=10^{-3}$. However, a noteworthy point is that for the data points shown here in the red curves (i.e.~based on Theorem \ref{Theorem: SecondConvexOptBound}), we found that in fact the bounds are very slightly better if we instead impose a single-sided constraint on a single component $q(0)$ as described in Eq.~\eqref{eq:accept1sided} (with the appropriately reduced $\dtol$ value in Eq.~\eqref{eq:dtol1sided}), though the improvement is very small, on the order of $10^{-3}$ -- $10^{-4}$, depending on the value of $n$. This observation is somewhat surprising, in that the latter approach generally performed worse in previous {\Renyi} EAT security proofs for other protocols (see~\cite[Remark~7.1]{arx_AHT24}), and furthermore we found that this was also the case for the data points shown here in the green and orange curves (i.e.~based on Eq.\ (\ref{Eq: lastsimplified3Renyi}), or based on single-round von Neumann entropy via Eq.~\eqref{Eq: HminviavN}), though in some cases only to a small extent. Hence the question of which approach yields better results appears to be somewhat complicated, though we point out that since the single-sided constraint performed significantly worse in many cases and only gave a very slight improvement in the cases where we found it to be better, a loose guiding principle to simplify applications might be to just always use the version constraining all $q(\bar{c})$ values, where possible.

\section{Device-Independent Advantage Distillation} \label{Section: AdvantageDistill}
\subsection{Introduction}
For simplicity, we consider set-ups, where both parties, Alice and Bob, each have one key-generating measurement, $X=0$ and $Y=0$, and have measurement devices which always output a binary value. Moreover, we let Alice and Bob use all inputs during test rounds. This allows them to estimate the total input-output distribution after the parameter estimation step.
\begin{algorithm}[H]
\floatname{algorithm}{}
\caption{Set-up}
\begin{tabular}{lll} 
 Parties: &Alice &Bob\\
 Inputs: & $\mathcal{X} = \{0,\dots,M_{A}\}$  & $\mathcal{Y} = \{0,\dots,M_{B}\}$\\
 Key-Inputs: & $\mathcal{X}_{K} = \{0\}$  & $\mathcal{Y}_{K} = \{0\}$\\
 Test-Inputs: & $\mathcal{X}_{T} = \mathcal{X}$  & $\mathcal{Y}_{T} = \mathcal{Y}$\\
 Outputs: & $\mathcal{A}_{X} = \{0,1\}$  &$\mathcal{B}_{Y} = \{0,1\}$\\
\end{tabular}
\end{algorithm} 
We consider the repetition-code protocol, which is the most well-known version of advantage distillation~\cite{256484,20.500.11850/144113,20.500.11850/72791,bae2006key}. This means that, after all rounds of the protocol have been concluded, Alice and Bob divide their raw key into blocks of $k$ bits. For each block, we denote Alice's and Bob's output bits by $\mathbf{A}$ and $\mathbf{B}$, respectively. The goal will be to generate (up to) $1$ highly correlated bit from each block. To achieve this, for each block, Alice generates a uniformly random bit $C$. She then sends the message $\mathbf{M} = \mathbf{A} \oplus \left(C,\dots, C\right)$ to Bob via an authenticated channel.\footnote{This means that an adversary can read the message, but cannot alter it without the alerting the honest parties.} Bob decodes this message by adding his output bits to $\mathbf{M}$. If 
\begin{equation} \label{RepetitionCodeAcceptCon}
    \mathbf{B} \oplus \mathbf{M} = \left(C^\prime,\dots, C^\prime\right) \; ,
\end{equation}
for some $C^\prime \in \{0,1\}$, Bob will communicate to Alice that the block is accepted and the parties will replace their block bits, $\mathbf{A}$ and $\mathbf{B}$, with the single bits $C$ and $C^\prime$. If Eq.\ (\ref{RepetitionCodeAcceptCon}) does not hold, Bob will communicate that the block is rejected and both parties will throw away the bits from that block. This process generates highly correlated bits, as it only accepts the block if Bob's initial block bits are equal to Alice's or if every bit is flipped. Since the latter possibility is much less likely, $C$ and $C^\prime$ will almost always be the same. This intuition was formalized into a full security proof (accounting for the DI setting) in e.g.~\cite{Tan_2020,Hahn_2022,stasiuk2022quantum}, as we describe below.

\begin{algorithm}[H]
\floatname{algorithm}{}
\caption{Cryptographic Protocol including Repetition-Code Protocol }
\begin{algorithmic}[1] 
\State For all rounds, $i \in \left[n\right]$:
\begin{algsubstates}
\State Alice and Bob generate a common random bit $T_i$, such that $\operatorname{P}\left(T_i=0\right) = 1-\gamma$ and $\operatorname{P}\left(T_i=1\right) = \gamma$.
\State If $T_i=0$, both parties will choose 
key-generating inputs $\left(X_i,Y_i\right) = (0,0)$, obtaining outputs $A_i$ and $B_i$, respectively. They then set $ \bar{C}_i = \perp$. 
\State If $T_i=1$, both parties will choose test inputs $\left(X_i,Y_i\right) \in \mathcal{X}_{T} \times \mathcal{Y}_{T}$ according to some distribution, obtaining outputs $A_i$ and $B_i$, respectively. They set $\bar{C}_i = \left(A_i,B_i,X_i,Y_i\right)$.
\State (Symmetrization Step) Both parties generate a common uniformly random bit $F_i$, and change their outputs to $\tilde{A}_i = A_i \oplus F_i$ and $\tilde{B}_i = B_i \oplus F_i$, respectively.
\end{algsubstates}
\State Parameter estimation (a.k.a.~acceptance test): The protocol aborts if the observed frequency distribution $\freq_{\bar{c}_1^n}$ lies outside some predetermined set $S_\Omega$.
\State Advantage Distillation Step: Alice and Bob perform the repetition-code protocol on the raw key that is produced from the key-generating rounds.
\State Additional Classical Postprocessing: Alice and Bob perform some additional classical operations, including for instance error correction in the case of DIQKD, ending with a privacy amplification step to generate their final keys.
\end{algorithmic}
\label{alg:ProtocolWithAD}
\end{algorithm} 
In addition to applying advantage distillation, we also assume they include a symmetrization step during each round of the initial protocol. This ensures that the input-output distribution is symmetric, thus simplifying the theoretical analysis. Alice and Bob can achieve this by generating a common uniformly random bit $F$ during each round, and adding it bit-wise to their respective outputs. One way of ensuring that both parties know the value $F$ is for Alice to generate it, and then send it to Bob over an authenticated channel. In such cases, Eve also has knowledge of the value of $F$ as well, and as such will be included in her side-information.

We consider the security of this protocol against collective (IID) attacks, where an adversary, Eve, is restricted to using the same attack every round. In particular, this means that, in every round, Alice, Bob, and Eve share the same tripartite state, $\rho_{Q_AQ_BE}$.\footnote{Here, we again use $Q_A$ and $Q_B$ to denote Alice's and Bob's initial quantum systems. The registers that then contain their classical measurement outputs are denoted by $A$ and $B$, respectively.} Moreover, the measurement devices can be described by a family of POVMs, $\{M_{a|x}\}$, $\{N_{b|y}\}$, where $x \in \{0,\dots, M_{A} \}$ and $y \in \{0,\dots, M_{B} \}$ are Alice's and Bob's measurement inputs and $a, b \in \{0,1 \}$ their corresponding measurement outputs. As such, before the symmetrization step, the input-output distribution is given by
\begin{align}
    \operatorname{P}(ab|xy) = \operatorname{Tr}\left[\left(M_{a|x} \otimes N_{b|y} \otimes \mathds{1}_{E}\right) \rho_{Q_AQ_BE} \right] \; ,
\end{align}
and, for any inputs $x,y$, the post-measurement state is
\begin{equation}
    \sum_{a,b \in \{0,1\}} \operatorname{P}(ab|xy) \ketbra{ab}{ab}_{AB} \otimes \rho_{E|ab} \; , 
\end{equation}
where 
\begin{equation} \label{EveSideInfoAdDist}
    \rho_{E|ab} := \frac{1}{\operatorname{P}(ab|xy)} \operatorname{Tr}_{AB}\left[\left(M_{a|x} \otimes N_{b|y} \otimes \mathds{1}_{E}\right) \rho_{Q_AQ_BE} \right] \; .
\end{equation}
After symmetrization, Alice's and Bob's new outputs, $\tilde{a}$, $\tilde{b}$, will be distributed according to 
\begin{equation}
    \operatorname{P}(\tilde{a}\tilde{b}|xy) = \frac{1}{2} \left[\operatorname{Tr}\left[\left(M_{a|x} \otimes N_{b|y} \otimes \mathds{1}_{E}\right) \rho_{Q_AQ_BE} \right] + \operatorname{Tr}\left[\left(M_{\bar{a}|x} \otimes N_{\bar{b}|y} \otimes \mathds{1}_{E}\right) \rho_{Q_AQ_BE} \right]
     \right]\; ,
\end{equation}
where $\bar{a}=a \oplus 1$ and $\bar{b}=b \oplus 1$. For inputs $x$ and $y$, the new post-measurement state will also include the random bit $F$, and is given by
\begin{equation}
    \sum_{\tilde{a},\tilde{b} \in \{0,1\}} \operatorname{P}(\tilde{a}\tilde{b}|xy) \ketbra{\tilde{a}\tilde{b}}{\tilde{a}\tilde{b}}_{\tilde{A}\tilde{B}} \otimes \rho_{EF|\tilde{a}\tilde{b}} \; , 
\end{equation}
where 
\begin{equation}
    \rho_{EF|\tilde{a}\tilde{b}} := \frac{1}{2}\left[ \rho_{E|ab} \otimes \ketbra{0}{0}_{F} + \rho_{E|\bar{a}\bar{b}} \otimes \ketbra{1}{1}_{F}\right] \; .
\end{equation}
Conditioned on the repetition-code protocol accepting a block, it is very likely that Alice's and Bob's key-generating outputs were identical on the block the protocol was acting on. As such, for each of those rounds, Eve's quantum side-information will most likely either be $\rho_{EF|\tilde{0}\tilde{0}}$ or $\rho_{EF|\tilde{1}\tilde{1}}$. Intuitively, as long as these two states are hard to distinguish from Eve's perspective, she will not know what Alice's and Bob's measurement outcomes were, and advantage distillation will thus generate bits for Alice and Bob, which are heavily correlated, but look sufficiently random to an adversary. Based on this intuition, a sufficient security condition was proven in \cite{stasiuk2022quantum}, which is based on the Quantum Chernoff Coefficient $Q(\rho,\sigma) := \inf_{0<s<1} \operatorname{Tr}\left[ \rho^s \sigma^{1-s}\right]$.
\begin{theorem}[\cite{stasiuk2022quantum}, Theorem~1] \label{Theorem: CBSuffCond}
    A secret key can be generated using the repetition-code protocol if, for a key-generating round,
\begin{equation} \label{Eq: CBOptimizationProblem}
    Q(\rho_{EF|\tilde{0}\tilde{0}},\rho_{EF|\tilde{1}\tilde{1}}) > \frac{\epsilon}{1-\epsilon} \; ,
\end{equation}
where $\epsilon := \operatorname{P}(01|00)+\operatorname{P}(10|00)= \operatorname{P}(\tilde{0}\tilde{1}|00)+\operatorname{P}(\tilde{1}\tilde{0}|00)$ denotes the quantum bit error rate (QBER).
\end{theorem}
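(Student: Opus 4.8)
The plan is to lower bound the one-way (Devetak--Winter) key rate extractable from a single accepted block of the repetition-code protocol and show it is strictly positive precisely when the stated inequality holds; since the overall asymptotic rate equals this per-block rate times the positive constants $\operatorname{P}(\text{key-generating round})=1-\gamma$, the factor $1/k$, and $\operatorname{P}(\text{block accepted})$, positivity of the block rate suffices. Throughout I work under the collective-attack assumption, so the $k$ rounds inside a block are i.i.d.; general attacks would require an additional de Finetti/post-selection step, which I would not attempt here.

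First I would set up the conditional states. After the symmetrization step each round's outcome pair $(\tilde A_i,\tilde B_i)$ has uniform marginals, agrees with probability $1-\epsilon$, and disagrees with probability $\epsilon$, with the error bit $\tilde A_i\oplus\tilde B_i$ unaffected by symmetrization. A block is accepted iff the error string is constantly $0$ (every round agrees; Bob's decoded bit $C'$ equals Alice's $C$) or constantly $1$ (every round disagrees; $C'=C\oplus1$); by the i.i.d. structure these disjoint events carry unnormalized weights $(1-\epsilon)^k$ and $\epsilon^k$, so conditioned on acceptance $\operatorname{P}[C\neq C']=q_k:=\epsilon^k/((1-\epsilon)^k+\epsilon^k)\to 0$. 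Hence the reconciliation term is $H(C|C')=h(q_k)\to 0$, and for the block rate I would use $r_k\ge H(C|E'')_{\land\mathrm{acc}}-h(q_k)$, where $E''$ collects Eve's $k$-round quantum registers, the broadcast symmetrization bits $F_i$, the public message $\mathbf{M}=\tilde{\mathbf{A}}\oplus(C,\dots,C)$, and the accept flag.

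Next I would identify Eve's discrimination problem. Conditioned on acceptance and on $\mathbf{M}=\mathbf{m}$, the marginal of $\mathbf M$ is uniform and independent of $C$, and in the dominant "all-agree" branch round $i$ carries the state $\rho_{EF|\widetilde{(m_i\oplus C)}\widetilde{(m_i\oplus C)}}$; using the symmetrization identity $\rho_{EF|\tilde 1\tilde 1}=X_F\,\rho_{EF|\tilde 0\tilde 0}\,X_F$ and applying the $C$-independent unitary $\bigotimes_i X_F^{m_i}$ turns Eve's two hypotheses in this branch into exactly $\rho_{EF|\tilde 0\tilde 0}^{\otimes k}$ versus $\rho_{EF|\tilde 1\tilde 1}^{\otimes k}$, while the "all-disagree" branch has conditional weight only $q_k$ and, even if Eve learns $C$ perfectly there, shifts her guessing probability by at most $q_k$. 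Thus $p_\mathrm{guess}(C|E'')_{\land\mathrm{acc}}\le \tfrac12+\tfrac{1-q_k}{4}\bigl\|\rho_{EF|\tilde 0\tilde 0}^{\otimes k}-\rho_{EF|\tilde 1\tilde 1}^{\otimes k}\bigr\|_1+\tfrac{q_k}{2}$, hence the Helstrom error satisfies $P_\mathrm{err}^{(k)}\ge(1-q_k)\,P_\mathrm{err,good}^{(k)}$ with $P_\mathrm{err,good}^{(k)}$ the uniform-prior discrimination error for $\rho_{EF|\tilde 0\tilde 0}^{\otimes k}$ vs.\ $\rho_{EF|\tilde 1\tilde 1}^{\otimes k}$. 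Combining $H(C|E'')_{\land\mathrm{acc}}\ge H_{\min}(C|E'')_{\land\mathrm{acc}}=-\log p_\mathrm{guess}(C|E'')_{\land\mathrm{acc}}\ge \tfrac{1}{\ln 2}P_\mathrm{err}^{(k)}$ with the quantum Chernoff theorem, which gives $\lim_k -\tfrac1k\log P_\mathrm{err,good}^{(k)}=-\log Q(\rho_{EF|\tilde 0\tilde 0},\rho_{EF|\tilde 1\tilde 1})$, yields $r_k\ge \tfrac{1-q_k}{\ln 2}\,Q^{k(1+o(1))}-h(q_k)$.

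The conclusion is then the exponent comparison: $h(q_k)^{1/k}\to \epsilon/(1-\epsilon)$ (the $\log(1/q_k)$ factor being only polynomial in $k$), while $\bigl((1-q_k)Q^{k(1+o(1))}\bigr)^{1/k}\to Q$, so as soon as $Q>\epsilon/(1-\epsilon)$ one can pick the $o(1)$ slack small enough that the first term dominates for all large $k$, giving $r_k>0$ and hence a positive asymptotic key rate. The main obstacle is doing the third step rigorously: one must verify that absorbing $\mathbf{M}$, the $F_i$'s, and the accept flag into Eve's side information really does collapse her state to the two-branch mixture above (this is exactly where the symmetrization step is essential, since it makes $\mathbf M$ uniform and $\rho_{EF|\tilde 1\tilde 1}$ a unitary image of $\rho_{EF|\tilde 0\tilde 0}$), and that the $q_k$-weighted contamination from the "all-disagree" branch, which is of the same exponential order $(\epsilon/(1-\epsilon))^k$ as the reconciliation cost $h(q_k)$, is genuinely dominated by $P_\mathrm{err}^{(k)}$ rather than swamping it — i.e.\ that the residual Alice--Bob disagreement and Eve's confusion decay at competing exponential rates whose comparison is precisely $\epsilon/(1-\epsilon)$ versus $Q$.
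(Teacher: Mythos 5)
This theorem is imported verbatim as Theorem~1 of \cite{stasiuk2022quantum}; the paper gives no proof of it, so there is no in-paper argument to compare against. Your reconstruction follows exactly the route taken in that reference and its precursors (the Devetak--Winter rate per accepted block, the reduction of Eve's task in the dominant all-agree branch to discriminating $\rho_{EF|\tilde{0}\tilde{0}}^{\otimes k}$ from $\rho_{EF|\tilde{1}\tilde{1}}^{\otimes k}$ via the symmetrization unitary, the quantum Chernoff converse, and the exponent comparison of $Q$ against $\epsilon/(1-\epsilon)$ coming from $q_k=\epsilon^k/((1-\epsilon)^k+\epsilon^k)$), and the individual steps — including the $(1-q_k)$ handling of the all-disagree contamination and the chain $H \geq H_{\min} = -\log p_{\mathrm{guess}} \geq P_{\mathrm{err}}/\ln 2$ — are sound under the stated collective-attack assumption.
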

Given any observed distribution $\operatorname{P}(\tilde{a}\tilde{b}|xy)$ (in the IID asymptotic limit), one can use Theorem \ref{Theorem: CBSuffCond} to prove a secret key can be generated, in a relatively straightforward way. One simply verifies that there exist no states, $\rho_{ABE}$, and POVMs, 
$\{M_{a|x}\}$, $\{N_{b|y}\}$, which can simultaneously generate $\operatorname{P}(\tilde{a}\tilde{b}|xy)$, and violate Eq.\ (\ref{Eq: CBOptimizationProblem}).
\subsection{Device-Independent Lower Bounds for Advantage Distillation}
Here, our first main result is that the security condition from Theorem \ref{Theorem: CBSuffCond} can be rephrased in terms of the pretty good fidelity, $F_{pg}(\rho,\sigma) := \operatorname{Tr}\left[ \rho^{1/2} \sigma^{1/2}\right]$.
\begin{restatable}{thm}{TheoremPGFCB}
\label{Theorem: PGFCB}
    A secret key can be generated using the repetition-code protocol if, for a key-generating round,
\begin{equation} \label{Eq: PGFOptimizationProblem}
    F_{pg}(\rho_{E|00},\rho_{E|11}) > \frac{\epsilon}{1-\epsilon} \; ,
\end{equation}
where $\epsilon := \operatorname{P}(01|00)+\operatorname{P}(10|00)$ denotes the quantum bit error rate (QBER).
\end{restatable}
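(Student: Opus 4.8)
The plan is to prove Theorem~\ref{Theorem: PGFCB} by showing that the Quantum Chernoff Coefficient appearing in the hypothesis of Theorem~\ref{Theorem: CBSuffCond} is in fact \emph{equal} to the pretty good fidelity of the unsymmetrized conditional states, i.e.\ $Q(\rho_{EF|\tilde{0}\tilde{0}},\rho_{EF|\tilde{1}\tilde{1}}) = F_{pg}(\rho_{E|00},\rho_{E|11})$; once this is established the claim is immediate from Theorem~\ref{Theorem: CBSuffCond}, since the quantum bit error rate $\epsilon$ is defined in exactly the same way in both statements. First I would use the block structure of the symmetrized states on the $F$ register. From $\rho_{EF|\tilde{a}\tilde{b}} = \frac{1}{2}\bigl(\rho_{E|ab}\otimes\ketbra{0}{0}_F + \rho_{E|\bar{a}\bar{b}}\otimes\ketbra{1}{1}_F\bigr)$ we get that $\rho_{EF|\tilde{0}\tilde{0}}$ and $\rho_{EF|\tilde{1}\tilde{1}}$ are simultaneously block-diagonal with respect to the two orthogonal $F$-blocks, with the roles of $\rho_{E|00}$ and $\rho_{E|11}$ swapped between them. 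Hence any power acts blockwise, and in the product $\rho_{EF|\tilde{0}\tilde{0}}^{s}\rho_{EF|\tilde{1}\tilde{1}}^{1-s}$ only the ``matched'' cross terms survive, giving for every $s\in(0,1)$, with $g(t):=\operatorname{Tr}[\rho_{E|00}^{t}\rho_{E|11}^{1-t}]$,
\begin{align}
\operatorname{Tr}\left[\rho_{EF|\tilde{0}\tilde{0}}^{s}\,\rho_{EF|\tilde{1}\tilde{1}}^{1-s}\right] = \frac{1}{2}\left(\operatorname{Tr}\left[\rho_{E|00}^{s}\rho_{E|11}^{1-s}\right] + \operatorname{Tr}\left[\rho_{E|00}^{1-s}\rho_{E|11}^{s}\right]\right) = \frac{1}{2}\bigl(g(s)+g(1-s)\bigr),
\end{align}
where the prefactors $2^{-s}$ and $2^{-(1-s)}$ multiply to $\tfrac12$.

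It then remains to evaluate $\inf_{0<s<1}\tfrac12(g(s)+g(1-s))$. Setting $s=\tfrac12$ shows this infimum is at most $g(\tfrac12) = F_{pg}(\rho_{E|00},\rho_{E|11})$, so I only need the reverse bound $\tfrac12(g(s)+g(1-s))\geq g(\tfrac12)$ for all $s\in(0,1)$. Diagonalizing $\rho_{E|00}=\sum_j p_j\ketbra{e_j}{e_j}$ and $\rho_{E|11}=\sum_k q_k\ketbra{f_k}{f_k}$, one has $g(t)=\sum_{j,k}p_j^{t}q_k^{1-t}\left| \braket{e_j | f_k}\right|^2$, so it suffices to verify the scalar inequality $p^{s}q^{1-s}+p^{1-s}q^{s}\geq 2\sqrt{pq}$ for all $p,q\geq 0$; this follows from AM--GM together with $(p^{s}q^{1-s})(p^{1-s}q^{s})=pq$. (If $\rho_{E|00}\perp\rho_{E|11}$ both sides of the claimed identity vanish, so the degenerate case is covered as well.) Combining the two displays yields $Q(\rho_{EF|\tilde{0}\tilde{0}},\rho_{EF|\tilde{1}\tilde{1}}) = F_{pg}(\rho_{E|00},\rho_{E|11})$, and substituting into Eq.~(\ref{Eq: CBOptimizationProblem}) gives Eq.~(\ref{Eq: PGFOptimizationProblem}).

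All of the manipulations above are elementary; the one place that warrants a little care is the first step, namely checking that the non-commuting operators $\rho_{E|00}$ and $\rho_{E|11}$ really do decouple in $\rho_{EF|\tilde{0}\tilde{0}}^{s}\rho_{EF|\tilde{1}\tilde{1}}^{1-s}$ --- that is, that no term of the form $\operatorname{Tr}[\rho_{E|00}^{s}\rho_{E|00}^{1-s}]$ can appear --- and this is handled simply by writing both operators out blockwise in $F$ before multiplying and tracing. I expect this to be the main (and essentially only) obstacle. It is worth emphasizing that the collapse of the optimization over $s$ to the symmetric point $s=\tfrac12$ is a consequence of this specific ``block-swapped'' structure forced by the symmetrization step of the protocol, rather than a general feature of the Chernoff coefficient.
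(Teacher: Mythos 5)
Your proof is correct and follows the same overall route as the paper's: both reduce the claim to Theorem~\ref{Theorem: CBSuffCond} via the identity $\operatorname{Tr}\left[\rho_{EF|\tilde{0}\tilde{0}}^{s}\rho_{EF|\tilde{1}\tilde{1}}^{1-s}\right] = \tfrac{1}{2}\left(\operatorname{Tr}\left[\rho_{E|00}^{s}\rho_{E|11}^{1-s}\right]+\operatorname{Tr}\left[\rho_{E|00}^{1-s}\rho_{E|11}^{s}\right]\right)$ arising from the block structure on the $F$ register, followed by the bound $\tfrac{1}{2}\left(g(s)+g(1-s)\right)\geq g(\tfrac{1}{2})$. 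The one place you genuinely diverge is in justifying that last inequality: the paper invokes the convexity of $s\mapsto\operatorname{Tr}\left[\rho^{s}\sigma^{1-s}\right]$, citing Audenaert et al., whereas you prove exactly the instance needed by diagonalizing both states and applying AM--GM eigenvalue-by-eigenvalue, $p^{s}q^{1-s}+p^{1-s}q^{s}\geq 2\sqrt{pq}$. Your version is self-contained and elementary (it is in effect a direct verification of the midpoint-convexity statement the paper imports), at the cost of a slightly longer argument. Two further minor differences: you write out explicitly why no ``mismatched'' terms such as $\operatorname{Tr}\left[\rho_{E|00}^{s}\rho_{E|00}^{1-s}\right]$ survive in the block product, which the paper asserts without detail, and you establish the full equality $Q(\rho_{EF|\tilde{0}\tilde{0}},\rho_{EF|\tilde{1}\tilde{1}})=F_{pg}(\rho_{E|00},\rho_{E|11})$ by also checking the $s=\tfrac{1}{2}$ upper bound, whereas the appendix proof only needs and only proves the direction $Q\geq F_{pg}$ (the equality is discussed separately in the main text). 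Neither difference affects correctness.
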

This simplifies the problem in two key ways. The states, which appear in Eq.\ (\ref{Eq: PGFOptimizationProblem}), do not contain the symmetrization bit, $F$, and describe Eve's side-information before the symmetrization step. As such, apart from assuming that the input-output distribution is symmetric, the symmetrization step will not further affect the theoretical analysis. Moreover, the definition of $Q(\rho,\sigma)$ contains an extra optimization over a parameter, $s$. By instead considering $F_{pg}(\rho,\sigma)$, this optimization will not be necessary. It's clear from 
\begin{align}
    Q(\rho_{EF|\tilde{0}\tilde{0}},\rho_{EF|\tilde{1}\tilde{1}}) &:= \inf_s \operatorname{Tr}\left[ \rho_{EF|\tilde{0}\tilde{0}}^s \rho_{EF|\tilde{1}\tilde{1}}^{1-s}\right] \\
    &\leq \operatorname{Tr}\left[ \rho_{EF|\tilde{0}\tilde{0}}^{1/2} \rho_{EF|\tilde{1}\tilde{1}}^{1/2}\right] \\
    &= \frac{1}{2}\left(  \operatorname{Tr}\left[ \rho_{E|00}^{1/2} \rho_{E|11}^{1/2}\right] + \operatorname{Tr}\left[ \rho_{E|11}^{1/2} \rho_{E|00}^{1/2}\right] \right)\\
    &= F_{pg}(\rho_{E|00},\rho_{E|11}) \; ,
\end{align}
that Eq.\ (\ref{Eq: PGFOptimizationProblem}) holds whenever the security condition from Theorem \ref{Theorem: CBSuffCond} is satisfied. Conversely, one can use the fact that $\operatorname{Tr}\left[ \rho^s \sigma^{1-s}\right]$ is convex in $s$ \cite{Audenaert_2007} to show that
\begin{align}
     Q(\rho_{EF|\tilde{0}\tilde{0}},\rho_{EF|\tilde{1}\tilde{1}}) &:= \inf_s \operatorname{Tr}\left[ \rho_{EF|\tilde{0}\tilde{0}}^s \rho_{EF|\tilde{1}\tilde{1}}^{1-s}\right] \\
     &= \inf_s \frac{1}{2}\left(  \operatorname{Tr}\left[ \rho_{E|00}^{s} \rho_{E|11}^{1-s}\right] + \operatorname{Tr}\left[ \rho_{E|00}^{1-s} \rho_{E|11}^{s}\right] \right) \\
     &\geq  \inf_s \operatorname{Tr}\left[ \rho_{E|00}^{1/2} \rho_{E|11}^{1/2}\right] \\
     &=  \operatorname{Tr}\left[ \rho_{E|00}^{1/2} \rho_{E|11}^{1/2}\right] \\
     &=: F_{pg}(\rho_{E|00},\rho_{E|11}) \; .
\end{align}
In order for both inequalities to hold, 
\begin{align}
    Q(\rho_{EF|\tilde{0}\tilde{0}},\rho_{EF|\tilde{1}\tilde{1}}) =  F_{pg}(\rho_{E|00},\rho_{E|11}) \; ,
\end{align}
and Theorem \ref{Theorem: PGFCB} is therefore equivalent to Theorem \ref{Theorem: CBSuffCond}.

For any observed (symmetric) distribution, $\operatorname{P}(ab|xy)$, Eve's attack can never achieve a lower value than
\begin{equation} \label{Eq: SimplifiedPGFOptProb}
\begin{aligned}
\inf_{\rho_{ABE},\{M_{a|x}\},\{N_{b|y}\}}  \quad & F_{pg} (\rho_{E|00},\rho_{E|00})\\
\textrm{s.t.} \quad & \operatorname{Tr}\left[\left(M_{a|x} \otimes N_{b|y} \otimes \mathds{1}_{E}\right) \rho_{ABE} \right]  = \operatorname{P}(ab|xy) \; .
\end{aligned}
\end{equation}
If this value satisfies the condition from Theorem \ref{Theorem: PGFCB}, then Alice and Bob will be able to generate a secret key.
\begin{rem}
    We recall that the way the protocol is described, the constraint on the system would a priori be on the post-symmetrization distribution, $\operatorname{P}(\tilde{a}\tilde{b}|xy)$, i.e.\
    \begin{equation}
     \frac{1}{2} \left[\operatorname{Tr}\left[\left(M_{a|x} \otimes N_{b|y} \otimes \mathds{1}_{E}\right) \rho_{Q_AQ_BE} \right] + \operatorname{Tr}\left[\left(M_{\bar{a}|x} \otimes N_{\bar{b}|y} \otimes \mathds{1}_{E}\right) \rho_{Q_AQ_BE} \right]
     \right] = \operatorname{P}(\tilde{a}\tilde{b}|xy) \; . \label{Eq: PostSymConstraint}
\end{equation}
At an informal level, the constraint from Eq.\ (\ref{Eq: SimplifiedPGFOptProb}) is equivalent to this, as one can assume without loss of generality that Eve has already applied the symmetrization step before sending the tripartite state. More precisely, for any feasible state $\rho_{Q_AQ_BE}$ and POVMs $\{M_{a|x}\}$, $\{N_{b|y}\}$, there exists the state 
\begin{eqnarray}
    \rho^{\prime}_{Q_A Q_{\hat{A}} Q_B Q_{\hat{B}} E\hat{E} } := \rho_{Q_AQ_BE} \otimes \ketbra{\psi}{\psi}_{Q_{\hat{A}}Q_{\hat{B}}\hat{E}}
\end{eqnarray}
with $\ket{\psi}_{Q_{\hat{A}}Q_{\hat{B}}\hat{E}}:= \frac{1}{\sqrt{2}} \left[\ket{000}_{Q_{\hat{A}}Q_{\hat{B}}\hat{E}} + \ket{111}_{Q_{\hat{A}}Q_{\hat{B}}\hat{E}} \right]$ and POVMS
\begin{eqnarray}
    M_{a|x}^\prime &:=& \frac{1}{2} \left[ M_{a|x} \otimes \ketbra{0}{0}_{Q_{\hat{A}}} + M_{\bar{a}|x} \otimes \ketbra{1}{1}_{Q_{\hat{A}}} \right] \\
    N_{b|y}^\prime &:=& \frac{1}{2} \left[ N_{b|y} \otimes \ketbra{0}{0}_{Q_{\hat{B}}} + N_{\bar{b}|y} \otimes \ketbra{1}{1}_{Q_{\hat{B}}} \right] \; ,
\end{eqnarray}
which leave the objective function from Eq.\ (\ref{Eq: SimplifiedPGFOptProb}) invariant and satisfy 
\begin{equation}
    \operatorname{Tr}\left[\left(M_{a|x}^\prime \otimes N_{b|y}^\prime \otimes \mathds{1}_{E\hat{E}}\right) \rho^{\prime}_{Q_A Q_{\hat{A}} Q_B Q_{\hat{B}} E\hat{E} } \right]  = \operatorname{P}(\tilde{a}\tilde{b}|xy) \; .
\end{equation}
Conversely, any state $\rho_{Q_AQ_BE}$ and POVMs $\{M_{a|x}\}$, $\{N_{b|y}\}$ that satisfy the constraint from $Eq.\ (\ref{Eq: SimplifiedPGFOptProb})$ (and produce a symmetric output distribution) must also satisfy $Eq.\ (\ref{Eq: PostSymConstraint})$. We may thus, without loss of generality, assume that the initial distribution is symmetric and ignore the symmetrization step.
\end{rem}
Theorem \ref{Theorem: App1Part1} can be used to bound the pretty good fidelity using a variational expression, i.e.\ there exist polynomials $P_m$ and $Q_m$ such that $F_{pg} (\rho_{E|00},\rho_{E|00})$ is greater than or equal to
\begin{align}
      \inf_{Z_{1},Z_2,\dots} \operatorname{Tr}\left[\rho_{Q_AQ_BE} \left[  \left(M_{0|0} \otimes N_{0|0} \otimes P_m(Z_{1},Z_2,\dots) \right)  + \left(M_{1|0} \otimes N_{1|0} \otimes  Q_m(Z_{1},Z_2,\dots)\right)\right] \right] \; .
\end{align}
\begin{rem}
    Each of the polynomials include an additional constant factor $\frac{1}{\operatorname{P}(00|00)}$. This is due to the fact that 
    \begin{align}
        \rho_{E|ii} &= \frac{\operatorname{Tr}_{Q_AQ_B} \left[\rho_{Q_AQ_BE}   \left(M_{i|0} \otimes N_{i|0} \otimes \mathds{1_E}\right)\right]}{P(ii|00)} \\
        &= \frac{\operatorname{Tr}_{Q_AQ_B}\left[\rho_{Q_AQ_BE}   \left(M_{i|0} \otimes N_{i|0} \otimes \mathds{1_E}\right)\right]}{P(00|00)} \; ,
    \end{align}
    where the last line holds due to $P(ab|xy)$ being symmetric.
\end{rem}
By assuming, without loss of generality, that Eve holds a purification, one can again invoke the NPA hierarchy to lower bound Eq.\ (\ref{Eq: SimplifiedPGFOptProb}) via an SDP.

\subsection{Improved Tolerance Bounds} \label{Subsection: ADExplicitSetUps}
We consider 
two DIQKD set-ups, and assume that the measurement outputs are affected by depolarizing noise. This means that, for any target distribution $P_{\text{target}}$ Alice and Bob aim for, the observed distribution will be
\begin{align}
    P(ab|xy) = \left(1-2q\right) P_{\text{target}}(ab|xy) + q/2 \; ,
\end{align}
where $q$ is a depolarization parameter. The 
two target distributions, $P_{\text{target}}$, are described by these set-ups.
\begin{enumerate}
    \item (CHSH) Alice and Bob each have two possible measurement settings. \label{SetUP: 1}
    \begin{itemize}
        \item Target State: $\ket{\psi} = \frac{1}{\sqrt{2}}\left( \ket{00}+ \ket{11}\right)$
        \item Alice's Measurements: $A_0 = Z$, $A_1 = X$
        \item Bob's Measurements: $B_0 = \frac{1}{\sqrt{2}}\left(X+Z\right)$, $B_1 = \frac{1}{\sqrt{2}}\left(X-Z\right)$
    \end{itemize}
    \item (Extended CHSH) Alice has two possible measurement settings, and Bob has three possible measurement settings. \label{SetUP: 2}
    \begin{itemize}
        \item Target State: $\ket{\psi} = \frac{1}{\sqrt{2}}\left( \ket{00}+ \ket{11}\right)$
        \item Alice's Measurements: $A_0 = Z$, $A_1 = X$
        \item Bob's Measurements: $B_0 = Z$, $B_1 = \frac{1}{\sqrt{2}}\left(X+Z\right)$, $B_2 = \frac{1}{\sqrt{2}}\left(X-Z\right)$
    \end{itemize}
\end{enumerate}
Set-up \ref{SetUP: 1} represents the smallest possible set-up which can self-test the target state (via the CHSH game). Here, the target key-generating measurements $A_0$ and $B_0$ do not produce perfectly correlated outcomes for the ideal state. This issue can be resolved by giving one party an extra measurement setting. In Set-up \ref{SetUP: 2}, we provide Bob with this extra measurement setting, which he will use during key-generating rounds.  
In the following, we present our improved bounds and compare them to those achieved by \cite{stasiuk2022quantum}.

   \begin{algorithm}[H] \label{Figure: ADPGFBounds}
\floatname{algorithm}{} 
\caption{Bounds on Noise Tolerance}
\label{Results}
\begin{tabular}{lll} 
 Set-up &Tolerance Bounds From \cite{stasiuk2022quantum} &New Tolerance Bounds\\
 \ref{SetUP: 1} & $q \approx 7.71\%$  & $q \approx 8.78 \%$\\
\ref{SetUP: 2} & $q \approx 8.18\%$  & $q \approx 8.38 \%$\\
\end{tabular}
\end{algorithm} 

We used $m=8$ nodes, and generated the variational optimization problem via Theorem \ref{Theorem: App1Part2}. Moreover, we optimized over each node individually~(See \cite[Remark 2.6]{brown2023deviceindependent}), using NPA Level $5$ for Set-up \ref{SetUP: 1}, and NPA Level $4$ for Set-up \ref{SetUP: 2}.
\section{Discussion} 
In Section \ref{Section: TightBoundsPetzRenyiVariational}, we discussed two variational approaches towards bounding Petz-{\Renyi} divergences. The bounds found in Section \ref{SubSection: App2} have the benefit of being structurally similar to the main results from~\cite{brown2023deviceindependent}, where they provide bounds on the relative divergence. There, in addition to considering finite-dimensional Hilbert spaces, they prove that their results extend to the infinite-dimensional setting~\cite[Section 3]{brown2023deviceindependent}. As such, the results from Section \ref{SubSection: App2} can  straightforwardly be extended to hold for this case as well.


We show how Theorem \ref{Theorem: App2} from Section \ref{SubSection: App2} can be 
applied to provide numerically tractable lower bounds on the smooth min-entropy bounds derived by the {\Renyi} EAT~\cite{arx_AHT24}. However, this approach is not directly applicable when considering DI advantage distillation, as the constraint $\rho <\!\!<\sigma$ no longer generally holds. In principle, one can circumvent this issue as follows. For $\alpha \in (0,1)$, one can consider $\operatorname{Tr}\left[ \rho^\alpha \tau^{1-\alpha}\right]$ instead of $\operatorname{Tr}\left[ \rho^\alpha \sigma^{1-\alpha}\right]$, where $\tau:= \left(1-\delta\right)\sigma + \delta \rho$. By construction, $\rho <\!\!<\tau$ holds and one can then apply Theorem \ref{Theorem: App2}. One practical issue here relates to the convergence of the resulting bound. As an example, if $\rho$ and $\sigma$ are orthogonal, then as $\delta \to 0$, the end node's contribution will also start to diverge. To counteract this, the smaller one chooses $\delta$ the more nodes should to be included, such that one attains a better approximation of $\operatorname{Tr}\left[ \rho^\alpha \tau^{1-\alpha}\right]$ (in particular, additional nodes will lead to a decrease in the end node's weight, thus reducing its effect).  The error that arises from considering $\operatorname{Tr}\left[ \rho^\alpha \tau^{1-\alpha}\right]$ instead of the original quantity of interest can be bounded in a relatively straightforward way. By the H{\"o}lder inequality (See e.g.~\cite[Lemma 3.1]{Tomamichel2015QuantumIP} or~\cite[Corollary IV.2.6]{BhatiaMatrixAnalysis}), it follows that
\begin{align}
  |\operatorname{Tr}\left[ \rho^\alpha \tau^{1-\alpha}\right] - \operatorname{Tr}\left[ \rho^\alpha \sigma^{1-\alpha}\right]|\leq \operatorname{Tr}\left[ \rho\right]^{\alpha} 
  \lVert \tau^{1-\alpha} -\sigma^{1-\alpha} \rVert_{\frac{1}{1-\alpha}} \; ,
\end{align}
Moreover, the right-hand-side can be upper bounded using~\cite[Theorem X.1.3]{BhatiaMatrixAnalysis} to get
\begin{align}
    |\operatorname{Tr}\left[ \rho^\alpha \tau^{1-\alpha}\right] - \operatorname{Tr}\left[ \rho^\alpha \sigma^{1-\alpha}\right]|\leq \operatorname{Tr}\left[ \rho\right]^{\alpha} \lVert \tau -\sigma\rVert_{1}^{1-\alpha} \; .
\end{align}
In Section \ref{Section: AdvantageDistill}, we only consider normalized states, which further simplifies this inequality. After plugging in the definition for $\tau$ and using the triangle inequality, one finds that in this case
\begin{align}
    |\operatorname{Tr}\left[ \rho^\alpha \tau^{1-\alpha}\right] - \operatorname{Tr}\left[ \rho^\alpha \sigma^{1-\alpha}\right]|\leq \left(2\delta\right)^{1-\alpha}  \; .
\end{align}
It should be noted that whenever $\alpha > 1/2$, tighter bounds on the resulting error can by achieved by instead replacing $\rho$ with $\tau := \left(1-\delta\right)\rho + \delta \sigma $, and then applying analogous arguments to above.

The variational bound found in Section \ref{SubSection: App1} seems to be more general. Here, the bounds achieved by Theorem \ref{Theorem: App1Part1} provide tight variational bounds even when $\rho <\!\!<\sigma$ does not hold. As such, one can directly apply it in the context of DI advantage distillation, which is discussed in Section \ref{Section: AdvantageDistill}. Moreover, both Theorem \ref{Theorem: App1Part1} and Theorem \ref{Theorem: App1Part2} from Section \ref{SubSection: App1} generate polynomials which can be made to be positive semi-definite.\footnote{For Theorem \ref{Theorem: App1Part2} the polynomials are either positive or negative semi-definite, depending on whether one includes the additional factor $\sin(\alpha \pi )$ into their definition.} As we discuss later in this section, this property may be helpful when considering other information theoretic problems. Similar to Theorem \ref{Theorem: App2}, the bounds from Theorem \ref{Theorem: App1Part1} and Theorem \ref{Theorem: App1Part2} should also straightforwardly be extendable to the infinite-dimensional case.

In Section \ref{Section: GREATBounds}, we discuss how the variational bounds from Theorem \ref{Theorem: App2} can be applied to the {\Renyi} EAT.\footnote{Using Theorem \ref{Theorem: App1Part2} would also have been equally possible.} The tightest bounds we derive are found in Section \ref{Section: SingleRoundCalculation}, but applying the resulting optimization problem does not seem to be feasible due to the large number of variables that arise from applying the NPA hierarchy. The simpler bounds derived in Section \ref{Section: VariableReductionWorseBounds} are much more tractable. As can be seen in Section \ref{Section: DIQKDCHSHBounds}, there are multiple ways to use the {\Renyi} EAT and attain tighter bounds on the smooth min-entropy than were previously known. Unsurprisingly, Theorem \ref{Theorem: SecondConvexOptBound} seems to provide the best results. 
For the Extended CHSH Set-up considered in Section \ref{Section: DIQKDCHSHBounds}, it should be possible to provide analytical bounds on the relevant {Renyi} entropy terms that appear, similar to~\cite{pironio2009deviceindependent}. It would be of interest to verify how close the achieved numerical results are to the corresponding analytical bounds. We leave this question for future work.

We consider DI advantage distillation in Section \ref{Section: AdvantageDistill}. Here, we derive a security condition based on the pretty good fidelity that is equivalent to the previously known (Quantum Chernoff Condition-based) condition from~\cite{stasiuk2022quantum}. As the pretty good fidelity can be expressed via Petz-{\Renyi} divergences, one can use Theorem \ref{Theorem: App1Part1} to bound achievable noise tolerances via a variational optimization problem. When implementing the numerical simulations, we simplified the complexity of optimization problem by optimizing over each node individually (See \cite[Remark 2.6]{brown2023deviceindependent}).
Even with this reduction, we managed to improve upon the best previously known noise tolerances found in~\cite{stasiuk2022quantum}. In addition, it should be noted that in order to achieve satisfactory bounds on the pretty-good fidelity, we used an NPA relaxation level of at least four for the set-ups that were considered. For the simplest case described in Set-up \ref{SetUP: 1}, we managed to increase the NPA level to five and noticed a slight, but non-negligible, improvement on the resulting noise tolerances. It would be interesting to know if there exist smaller local sets which can be added so that it is possible to reduce the required NPA level without significantly decreasing (or even increasing) the achieved bounds. 


We conclude with the open question of whether or not the tools from this work can be extended to provide variational bounds on a different class of closely related {\Renyi} entropies~\cite[Definition~5.2]{Tomamichel2015QuantumIP}
\begin{align}
    H_{\alpha}^{\uparrow} (A|E)_{\rho}:= \sup_{\sigma_{E} \in S_{=}\left(E\right)} -D_{\alpha} \left(\rho_{AE} || \mathds{1}_{A}\otimes \sigma_{E} \right) \; .
\end{align}
These entropies are known to have a closed expression~\cite{Tomamichel_2014,Sharma_2013}, i.e.\
\begin{align}
    H_{\alpha}^{\uparrow} (A|E)_{\rho} = \frac{\alpha}{1-\alpha}\log \left(\operatorname{Tr}\left[ \left(\operatorname{Tr}_{A} \left[ \rho_{AE}^{\alpha}\right] \right)^{\frac{1}{\alpha}} \right]\right) \; .
\end{align}
For $\alpha \in (1/2,1) \cup (1,2)$, one potential approach towards bounding this is by using the results from Section \ref{Section: TightBoundsPetzRenyiVariational}. To see where the main difficulties lie in this approach, let us consider $\alpha \in (1/2,1)$. By Eq.\ (\ref{Eq: App1Part2LowerBounds}), lower bounds on 
\begin{align}
    \operatorname{Tr}\left[ \left(\operatorname{Tr}_{A} \left[ \rho_{AE}^{\alpha}\right] \right)^{\frac{1}{\alpha}} \right] = \operatorname{Tr}\left[ \left(\operatorname{Tr}_{A} \left[ \rho_{AE}^{\alpha}\right] \right)^{\frac{1}{\alpha}} \cdot \mathds{1}_{E}^{1-\frac{1}{\alpha}}\right]
\end{align}
are achievable, and of the form 
\begin{align} \label{Eq: UpArrowVarProblem}
    \sup_{Z_1,  \dots} C_m^\prime \operatorname{Tr}\left[ \rho_{AE}^{\alpha}\right] + \frac{\sin(\pi/\alpha  )}{\pi}  \operatorname{Tr}\left[ \rho_{AE}^{\alpha} \cdot \mathds{1}_{A} \otimes P_{m}(Z_1,\dots)\right]+\operatorname{Tr}\left[Q_{m}(Z_1,\dots)\right] \; ,
\end{align}
where $P_m,Q_m$ are positive semi-definite and $C_m^\prime$ is a positive constant. Let us first focus on the second term, which needs to be further simplified. Section \ref{Section: TightBoundsPetzRenyiVariational} contains no results that can directly be applied to it. However, here, one should be able to leverage the fact that $P_m$ is positive semi-definite. The resulting approach towards (upper) bounding terms of the form $\operatorname{Tr}\left[ \rho^{\alpha}\cdot \sigma \right]$,
where $\alpha \in (1/2,1)$ and $\rho, \sigma$ are positive semi-definite, should in principle be similar to the methods used in Section \ref{Section: TightBoundsPetzRenyiVariational}. 

The last term in Eq.\ (\ref{Eq: UpArrowVarProblem}) is slightly problematic, as it does not contain any dependence on the initial state $\rho_{AE}$. As such, it is not possible to directly apply the NPA hierarchy from~\cite{Pironio_2010}. The last difficulty we wish to discuss is that, in DIQKD, one would ideally like to minimize $H_{\alpha}^{\uparrow} (A|E)_{\rho}$ over all states that satisfy some distribution constraints. The resulting optimization problem contains both a minimization (over all states and measurements) and a maximization (over the operators $Z_i$). As such, once again, one cannot directly apply the NPA hierarchy. These issues are further highlighted by the fact that, after applying e.g.\  Theorem \ref{Theorem: App1Part1},
\begin{align}
    \operatorname{Tr}\left[ \rho_{AE}^{\alpha}\right]  \geq 
    \frac{\sin(\alpha \pi )}{\pi} \inf_{Z_1^\prime,  \dots}    \operatorname{Tr}\left[ \rho_{AE} \cdot P_{m}^\prime(Z_1^\prime,\dots)\right]+\operatorname{Tr}\left[Q_{m}^\prime(Z_1^\prime,\dots)\right] \; ,
\end{align}
which introduces further \textit{minimizations} over the operators $Z_i^\prime$ and an additional term that does not depend on the initial state.

\section*{Acknowledgements}
E.T.\ conducted research at the Institute for Quantum Computing, at the University of Waterloo, which is supported by Innovation, Science, and Economic Development Canada. Support was also provided by NSERC under the Discovery Grants Program, Grant No. 341495. 
T. H. acknowledges support from the Marshall and Arlene Bennett Family Research Program, the Minerva foundation with funding from the Federal German Ministry for Education and Research and the Israel Science Foundation (ISF), and the Directorate for Defense Research and Development (DDR\&D), grant No. 3426/21.3.
P.B. acknowledges funding from the European Union's Horizon Europe research and innovation programme under the project ``Quantum Secure Networks Partnership'' (QSNP, grant agreement No. 101114043).

The Python code in this work generated the semi-definite programs and convex optimization problems by using the package NCPOL2SDPA~\cite{Wittek_2015}. The resulting optimization problems were solved using MOSEK~\cite{mosek} and CVXPY~\cite{diamond2016cvxpy,agrawal2018rewriting}.
\section*{Code Availability}
The Python code used to generate the data can be found at:
\url{https://github.com/Thomas0501/Petz-Renyi-Entropies}
\section{Appendix} \label{Section: Appendix}
\subsection{Proofs}
\ApponeLemone*
\begin{proof}
    For $x \geq 0$ and $\alpha \in \left(0,1\right)$, it is known that~\cite[Eq.~(V.4)]{BhatiaMatrixAnalysis}
\begin{align} 
    x^{\alpha} = \frac{\sin (\alpha \pi)}{\pi} \int_{0}^{\infty} \frac{x}{x+t}t^{\alpha-1}dt \; .
\end{align}
As such
\begin{align}
    x^{\alpha} &= \frac{\sin (\alpha \pi)}{\pi}\left[ \int_{0}^{1} \frac{x}{x+t}t^{\alpha-1}dt + \int_{1}^{\infty} \frac{x}{x+t}t^{\alpha-1}dt  \right] \\
    &= \frac{\sin (\alpha \pi)}{\pi}\left[ \int_{0}^{1} \frac{x}{x+t}t^{\alpha-1}dt + \int_{0}^{1} \frac{x }{x+z^{-1}}z^{ -(\alpha+1)}dz  \right] \\
    &= \frac{\sin (\alpha \pi)}{\pi}\left[ \int_{0}^{1} \frac{x}{x+t}t^{\alpha-1}dt + \int_{0}^{1} \frac{x }{xz+1} z^{ -\alpha}dz  \right] \; ,
\end{align}
where we used the substitution $z = 1/t$ in the second line. The result then follows after relabeling the variable of integration, $z$, with $t$.
\end{proof}
\BoundingIntviaQuad* 
\begin{proof}
    For any $m$-point quadrature with nodes $t_1,\dots,t_m$ and weights $w_1,\dots, w_m$, one has that 
    \begin{align}
    \int_{0}^{1}f(t)t^{\beta}dt = \sum_{i=1}^{m} w_{i} f(t_i) +E_{m} \; , \label{Eq: ProofIntQuadErrorEq}
\end{align}
where $E_m$ denotes the error or difference between the integral and the corresponding quadrature. Moreover, it follows from~\cite[Eq.\ (8.9.8)]{Hildebrand56} that there exists a $\xi \in (0,1)$ such that
\begin{align} \label{Eq: ProofErrorTerm}
    E_{m} = \frac{\Gamma(m +1)\Gamma(m+\beta +1)^2}{(2m+\beta+1)\Gamma(2m+\beta+1)^2}\frac{m!}{(2m)!} f^{(2m)}\left(\xi\right) \geq 0 \; ,
\end{align}
where the inequality holds due the fact that $f^{(2m)}\left(\xi\right) \geq 0$. As such, any $m$-point GJ quadrature must provide a lower bound on the integral. 

For GRJ quadratures with endpoint $t_1=0$, it follows from~\cite[Eq.\ (8.10.22)]{Hildebrand56} that there exists a $\xi \in (0,1)$ such that
\begin{align} \label{Eq: ProofErrorTerm}
    E_{m} = \frac{\gamma_{m-1}}{A_{m-1}^{2}}\frac{f^{(2m-1)}\left(\xi\right)}{(2m-1)!}  \leq 0\; ,
\end{align}
where $\gamma_{m-1}, A_{m-1}$ are real constants and $\gamma_{m-1}\geq 0$. The inequality then holds due to the fact that $f^{(2m-1)}\left(\xi\right) \leq 0$. Any $m$-point GRJ quadrature (with endpoint $t_1=0$) must thus provide an upper bound on the integral.
\end{proof}
\begin{rem}
    For GRJ quadratures which instead use the endpoint $t_1=1$, $\gamma_{m-1} \leq 0$ (see~\cite[Eq.\ (8.10.15)]{Hildebrand56}). As such, these would also provide lower bounds for the integrals we are considering.
\end{rem}
\SinglehAlphaExpressionviaRelEntropy*
\begin{proof}
    For any normalized distribution, $q$, one has
    \begin{align}
        \frac{1}{\alpha-1}D\left(q \middle\Vert p\right) &= \frac{1}{\alpha-1} \sum_{\bar{c}}q(\bar{c}) \log \left( \frac{q(\bar{c})}{p(\bar{c})}\right) \\
        & = \frac{1}{\alpha-1} \sum_{\bar{c}}q(\bar{c}) \log \left( \frac{q(\bar{c})}{\mathcal{M}(\sigma)_{\land \bar{c}}} \cdot \frac{\mathcal{M}(\sigma)_{\land \bar{c}}}{p(\bar{c})}\right) \\
        & = \frac{1}{\alpha-1} \sum_{\bar{c}}q(\bar{c}) \log \left( \frac{q(\bar{c})}{\mathcal{M}(\sigma)_{\land \bar{c}}} \cdot \frac{\mathcal{M}(\sigma)_{\land \bar{c}}}{p(\bar{c})}\right) \\
        & = \frac{1}{\alpha-1} \sum_{\bar{c}}q(\bar{c}) \log \left( \frac{q(\bar{c})}{\mathcal{M}(\sigma)_{\land \bar{c}}} \right) -  \frac{1}{\alpha-1} \sum_{\bar{c}}q(\bar{c}) \log \left( \frac{p(\bar{c})}{\mathcal{M}(\sigma)_{\land \bar{c}}}\right) \\
        & = \frac{1}{\alpha-1}D\left(q \middle\Vert \mathcal{M}(\sigma)_{\bar{C}}\right)-\sum_{\bar{c}}q(\bar{c})D_{\alpha}\left(\mathcal{M}(\sigma)_{AXYTE \land \bar{c}} \middle\Vert \mathds{1}_{A}\otimes \mathcal{M}(\sigma)_{XYTE} \right)  \; ,
        \end{align}
        where we used Eq.\ (\ref{Eq: pbarcDefn}) and the fact that $\mathcal{M}(\sigma)_{\land \bar{c}}= \operatorname{Tr}\left[\mathcal{M}(\sigma)_{AXYTE \land \bar{c}}\right]$ to prove the last equality. 
\end{proof}
\LemmaFirstNonComOpt*
\begin{proof}
We consider any $m\in \mathbb{N}$ and $\alpha \in (1,2)$. By Theorem \ref{Theorem: App2}, one has that there exist nodes, $t_1,\dots, t_{m} \in \left(0,1\right]$, and weights, $w_1,\dots, w_{m} >0$,  such that $p(\bar{c}) := \operatorname{Tr}\left[\mathcal{M}(\sigma)_{AXYTE \land \bar{c}}^{\alpha} \cdot \mathds{1}_{A} \otimes \mathcal{M}(\sigma)_{XYTE}^{1-\alpha} \right]$ is upper bounded by
\begin{align}
\begin{split}
    p(\bar{c})
    &\leq C_m + \frac{\sin(\alpha \pi )}{\pi} \sum_{i} \inf_{Z_i} \left( \frac{w_{i}}{t_i}\operatorname{Tr}\left[\mathcal{M}(\sigma)_{AXYTE \land \bar{c}}\left( Z_{i}+Z_{i}^{\dagger} + \left(1-t_i\right)Z_{i}^{\dagger} Z_{i}\right)\right] \right. \\
&+\left. w_{i}\operatorname{Tr}\left[\mathds{1}_{A}\otimes\mathcal{M}(\sigma)_{XYTE}Z_iZ_{i}^{\dagger}\right] \right)\; ,
\end{split}
\end{align}
where $C_m = \operatorname{Tr}\left[\mathcal{M}(\sigma)_{AXYTE \land \bar{c}}\right] \left(1 + \frac{\sin(\alpha \pi )}{\pi}\sum_{i=1}^{m} \frac{w_{i}}{t_i} \right)$. Using the fact that $C_m$ is independent of $Z_i$ and $\alpha \in (1,2)$, this can then be expressed as 
\begin{align}
\begin{split}
    p(\bar{c})
    &\leq \sup_{Z_1, Z_2, \dots} \left( C_m + \frac{\sin(\alpha \pi )}{\pi} \sum_{i} \left(  \frac{w_{i}}{t_i}\operatorname{Tr}\left[\mathcal{M}(\sigma)_{AXYTE \land \bar{c}}\left( Z_{i}+Z_{i}^{\dagger} + \left(1-t_i\right)Z_{i}^{\dagger} Z_{i}\right)\right] \right. \right. \\
&+\left. \left. w_{i}\operatorname{Tr}\left[\mathds{1}_{A}\otimes\mathcal{M}(\sigma)_{XYTE}Z_iZ_{i}^{\dagger}\right] \right)\right) \; .
\end{split}
\end{align}
We then achieve the desired result by noting that 
\begin{align}
    & C_m + \frac{\sin(\alpha \pi )}{\pi} \sum_{i}   \frac{w_{i}}{t_i}\operatorname{Tr}\left[\mathcal{M}(\sigma)_{AXYTE \land \bar{c}}\left( Z_{i}+Z_{i}^{\dagger} + \left(1-t_i\right)Z_{i}^{\dagger} Z_{i}\right)\right] \\
    = \ &  \operatorname{Tr}\left[\mathcal{M}(\sigma)_{AXYTE \land \bar{c}}\right] + \frac{\sin(\alpha \pi )}{\pi} \sum_{i}   \frac{w_{i}}{t_i}\operatorname{Tr}\left[\mathcal{M}(\sigma)_{AXYTE \land \bar{c}}\left( \mathds{1} + Z_{i}+Z_{i}^{\dagger} + \left(1-t_i\right)Z_{i}^{\dagger} Z_{i}\right)\right] \\
     = \ &  \operatorname{Tr}\left[\mathcal{M}(\sigma)_{AXYTE \land \bar{c}} P(Z_1,Z_2,\dots)\right] \; ,
\end{align}
where 
\begin{align}
    P(Z_1,Z_2,\dots) :=  \mathds{1}+\frac{\sin(\alpha \pi )}{\pi}\sum_{i=1}^{m} \frac{w_{i}}{t_i} \left( \mathds{1} + Z_{i}+Z_{i}^{\dagger} + \left(1-t_i\right)Z_{i}^{\dagger} Z_{i}\right) \; ,
\end{align}
as well as 
\begin{align}
    & \frac{\sin(\alpha \pi )}{\pi} \sum_{i}   w_{i}\operatorname{Tr}\left[\mathcal{M}(\sigma)_{AXYTE \land \bar{c}} Z_{i}Z_{i}^{\dagger} \right] = \operatorname{Tr}\left[\mathcal{M}(\sigma)_{AXYTE \land \bar{c}} Q(Z_1,Z_2,\dots)\right] \; ,
\end{align}
where
\begin{align}
    Q(Z_1,Z_2,\dots) := \frac{\sin(\alpha \pi )}{\pi}\sum_{i=1}^{m} w_i Z_iZ_{i}^{\dagger} \; .
\end{align}
\end{proof}
\TheoremPGFCB*
\begin{proof}
    From Theorem \ref{Theorem: CBSuffCond}, it is known that a secret key can be produced whenever 
\begin{align} \label{Eq: ProofCBSecCond}
Q(\rho_{EF|\tilde{0}\tilde{0}},\rho_{EF|\tilde{1}\tilde{1}}) > \frac{\epsilon}{1-\epsilon}
\end{align}    
holds. Using the fact that $\operatorname{Tr}\left[ \rho^s \sigma^{1-s}\right]$ is convex in $s$ \cite{Audenaert_2007},
\begin{align}
    Q(\rho_{EF|\tilde{0}\tilde{0}},\rho_{EF|\tilde{1}\tilde{1}}) &:= \inf_s \operatorname{Tr}\left[ \rho_{EF|\tilde{0}\tilde{0}}^s \rho_{EF|\tilde{1}\tilde{1}}^{1-s}\right] \\
     &= \inf_s \frac{1}{2}\left(  \operatorname{Tr}\left[ \rho_{E|00}^{s} \rho_{E|11}^{1-s}\right] + \operatorname{Tr}\left[ \rho_{E|00}^{1-s} \rho_{E|11}^{s}\right] \right) \\
     &\geq  \inf_s \operatorname{Tr}\left[ \rho_{E|00}^{1/2} \rho_{E|11}^{1/2}\right] \\
     &=  \operatorname{Tr}\left[ \rho_{E|00}^{1/2} \rho_{E|11}^{1/2}\right] \\
     &=: F_{pg}(\rho_{E|00},\rho_{E|11}) \; .
\end{align}
As such, whenever 
\begin{align}
F_{pg}(\rho_{E|00},\rho_{E|11}) > \frac{\epsilon}{1-\epsilon} 
\end{align}  
is satisfied, Eq.\ (\ref{Eq: ProofCBSecCond}) is true as well. It then follows from Theorem \ref{Theorem: CBSuffCond} that a secret key can be produced.
\end{proof}
\subsubsection{Proofs of Proposition \ref{Prop: Approach1Part1VarOpt} and Proposition \ref{Prop: Approach1Part2VarOpt}}
We will follow the proof method used in \cite{brown2023deviceindependent}. Let $\rho = \sum q_{k}\ketbra{\phi_{k}}{\phi_{k}}$ and $\sigma = \sum p_{j}\ketbra{\psi_{j}}{\psi_{j}}$ be positive semi-definite operators on a finite-dimensional Hilbert space, $\mathcal{H}$, where $p_{j},q_{k} \geq 0$ for all $j,k$, and both $\{\ket{\psi_{j}}\}$ as well as $\{\ket{\phi_{k}}\}$ form an orthonormal basis of $\mathcal{H}$.
For any operator, $a$, on the Hilbert space, we will consider the two commuting mappings 
\begin{align}
    \mathcal{L}_{\sigma}(a) &=  \sigma a  \\
    \mathcal{R}_{\rho}(a) 
    &=  a \rho \; .
\end{align}
Moreover, we consider any F-quasi-relative entropy\footnote{See \cite[Section 3]{brown2023deviceindependent} or \cite{Pusz1978WYDL} for further background on such terms.}
\begin{align}
    D_{F} \left( \rho || \sigma\right) := \sum_{j,k} F(q_k,p_j) \left| \braket{\psi_{j} | \phi_{k}}\right|^2 \; ,
\end{align}
such that $F: \mathcal{R}^{2}_{\geq 0} \to \mathcal{R}$
\begin{enumerate}
  \item is (Borel) measurable.
  \item is positive homogeneous.
  \item is locally bounded from below.
\end{enumerate}
\begin{rem} \label{Remark: ExpandingSum}
    In Section \ref{Section: TightBoundsPetzRenyiVariational}, this is not exactly the sum we initially consider, as one only summed over all $j,k$ such that $p_j,q_k >0$. For Proposition \ref{Prop: Approach1Part1VarOpt}, we use $F_{1}(q_k,p_j) := \frac{q_k \cdot p_j}{q_k+p_j t}$ and ${F_{2}(q_k,p_j) := \frac{q_k \cdot p_j}{p_j+q_k t}}$. Whenever $p_j=0$ or $q_k=0$ (or both), these functions are well-defined and output $0$ (or may be continuously extended to return $0$). As such, we may expand to summation, to include all eigenvalues. For Proposition \ref{Prop: Approach1Part2VarOpt}, we consider $F_{3}(q_k,p_j) := \frac{q_k^2}{q_k+ p_j t}$, and $F_{4}(q_k,p_j) := \frac{q_k^2}{p_j + q_k t}$. These are non-zero whenever $q_k > 0$. As such, it is not a priori clear that the sum can similarly be expanded. Any issues that arise here can be avoided by additionally requiring that $\rho <\!\!<\sigma$.\footnote{This is not completely unexpected, as Proposition \ref{Prop: Approach1Part2VarOpt} relates to the case when $\alpha >1$. For {\Renyi} divergences, this is always required.} To see why this resolves the problem, note that whenever $p_j=0$, one of the following two properties must now hold for every value $k$. Either $q_k=0$ or $\left| \braket{\psi_{j} | \phi_{k}}\right| = 0$. As such, the sum can be expanded again without changing its value. 
    These functions are clearly Borel measurable, as they are composed of sums, products, and quotients of linear functions.\footnote{Moreover, there is no issue when $q_k,p_j \to 0$, as the quotient is still well-defined and the function is continuous.} Moreover, they satisfy
\begin{align}
    F_{i}(\lambda q_k,\lambda  p_j) = \lambda F_{i}(q_k,p_j)
\end{align}
    and are thus positive homogeneous. The output of these two functions is always non-negative and they are therefore locally bounded from below. 
\end{rem}
Such quasi-relative entropies can be expressed using the Hilbert-Schmidt inner product \cite[Remark 3.4]{brown2023deviceindependent}, i.e.\
\begin{align}
    D_{F} \left( \rho || \sigma\right) = \braket{\mathds{1},F\left(\mathcal{R}_{\rho},\mathcal{L}_{\sigma}\right)\mathds{1}}_{HS} \; ,
\end{align}
where $\braket{a,b}_{HS} = \operatorname{Tr}\left[a^\dagger b\right]$ and we replaced $q_k,p_j$ with commuting mappings. We are now ready to prove the individual Propositions. 
\ApproachOnePartOneVarOpt*
\begin{proof}
   We use \cite[Lemma]{PUSZ1975159} (or see \cite[Proof of Proposition 3.5]{brown2023deviceindependent}), which restates these inner products as optimization problems, i.e.\
\begin{align}
\braket{\mathds{1},F_{1}\left(\mathcal{R}_{\rho},\mathcal{L}_{\sigma}\right)\mathds{1}}_{HS} &= \inf_{Z} \frac{1}{t}\left( \braket{Z,\mathcal{R}_{\rho}Z }_{HS} + t \braket{\mathds{1}-Z,\mathcal{L}_{\sigma}\left(\mathds{1}-Z\right) }_{HS}\right)
\end{align} 
Explicitly writing out the Hilbert-Schmidt inner product, we get
\begin{align}   \braket{\mathds{1},F_{1}\left(\mathcal{R}_{\rho},\mathcal{L}_{\sigma}\right)\mathds{1}}_{HS} &= \frac{1}{t} \inf_{Z} \left( \operatorname{Tr}\left[Z^\dagger \mathcal{R}_{\rho} \left(Z\right)\right] + t \operatorname{Tr}\left[\left(\mathds{1}-Z\right)^\dagger \mathcal{L}_{\sigma} \left(\mathds{1}-Z\right)\right]\right) \\
&= \frac{1}{t}\inf_{Z} \left( \operatorname{Tr}\left[Z^\dagger Z\rho\right] + t \operatorname{Tr}\left[\left(\mathds{1}-Z^\dagger\right) \sigma \left(\mathds{1}-Z\right)\right]\right) \\
&=\frac{1}{t} \inf_{Z}\left( \operatorname{Tr}\left[\rho Z^\dagger Z\right] + t \operatorname{Tr}\left[ \sigma \left(\mathds{1}-Z\right) \left(\mathds{1}-Z^\dagger\right)\right]\right) \\
&= \frac{1}{t}\inf_{Z} \left( \operatorname{Tr}\left[\rho Z^\dagger Z\right] + t \operatorname{Tr}\left[ \sigma \left(\mathds{1}+Z\right) \left(\mathds{1}+Z^\dagger \right)\right]\right) \; ,
\end{align}
where we applied the cyclic property of the trace in the penultimate line. In the last line, we used the fact that replacing $Z$ with $-Z$ does not affect the overall optimization problem.

Eq.\ (\ref{Eq: Approach1Part1VarOptEq2}) can be proven analogously. Alternatively, one may use the fact that Eq.\ (\ref{Eq: Approach1Part1VarOptEq2}) is equivalent to Eq.\ (\ref{Eq: Approach1Part1VarOptEq1}), after switching $\rho$ and $\sigma$. It then follows immediately that
\begin{align}
\braket{\mathds{1},F_{2}\left(\mathcal{R}_{\rho},\mathcal{L}_{\sigma}\right)\mathds{1}}_{HS} &= \frac{1}{t} \inf_{Z} \left( \operatorname{Tr}\left[\sigma Z^\dagger Z\right] + t \operatorname{Tr}\left[ \rho \left(\mathds{1}+Z\right) \left(\mathds{1}+Z^\dagger\right)\right]\right) \\
&= \frac{1}{t} \inf_{Z}\left( \operatorname{Tr}\left[\sigma Z Z^\dagger\right] + t \operatorname{Tr}\left[ \rho \left(\mathds{1}+Z^\dagger\right) \left(\mathds{1}+Z\right)\right]\right) \; ,
\end{align} 
where we use the fact that substituting $Z$ with $Z^\dagger$ does not affect the overall optimization problem.
\end{proof}
\ApproachOnePartTwoVarOpt*
\begin{proof}
Let us first consider Eq.\ (\ref{Eq: Approach2VarOptEq1}). We will use the fact that
   \begin{align}
       \frac{q_k^2}{q_k+ p_j t} &= q_{k} \frac{q_k+ p_j t}{q_k+ p_j t} - q_{k} \frac{ p_j t}{q_k+ p_j t} \\
       &=q_{k} - \frac{ q_k \cdot p_j t}{q_k+ p_j t} \; .
   \end{align} 
We consider the two terms separately. The first contribution can be simplified via
\begin{align}
     \sum_{j,k} q_k \left| \braket{\psi_{j} | \phi_{k}}\right|^2  = \operatorname{Tr}\left[\rho\right] \; .
\end{align}
The remaining contribution has been discussed in Proposition \ref{Prop: Approach1Part1VarOpt}, i.e.\
\begin{align}
    \sum_{j,k} \frac{ q_k \cdot p_j t}{q_k+ p_j t} \left| \braket{\psi_{j} | \phi_{k}}\right|^2  = \frac{1}{t} \inf_{Z} \left( \operatorname{Tr}\left[\rho Z^\dagger Z\right] + t \operatorname{Tr}\left[ \sigma \left(\mathds{1}+Z\right) \left(\mathds{1}+Z^\dagger \right)\right]\right) \; .
\end{align}
Combining these two expressions, one finds that
\begin{align}
\sum_{j,k}  \frac{q_k^2}{q_k+p_j t} \left| \braket{\psi_{j} | \phi_{k}}\right|^2 &= \operatorname{Tr}\left[\rho\right] - 
   \inf_{Z} \left(\operatorname{Tr}\left[\rho Z^\dagger Z\right]+t \operatorname{Tr}\left[\sigma \left(\mathds{1} + Z \right) \left(\mathds{1} + Z^\dagger \right)
       \right]\right) \; .
\end{align} 
We prove Eq.\ (\ref{Eq: Approach2VarOptEq2}) similarly. Using
   \begin{align}
       \frac{q_k^2}{p_j+ q_k t} &= \frac{q_{k}}{t} \frac{p_j+ q_k t}{p_j+ q_k t} - \frac{q_{k}}{t} \frac{ p_j}{p_j + q_k t} \\
       &=\frac{1}{t}\left[q_{k} - \frac{ q_k \cdot p_j }{p_j+ q_k t}\right] \; .
   \end{align} 
   and Proposition \ref{Prop: Approach1Part1VarOpt}, one gets that
   \begin{align}
       \sum_{j,k} \frac{ q_k^2}{p_j+ q_k t} \left| \braket{\psi_{j} | \phi_{k}}\right|^2  &=\frac{1}{t} \left[ \operatorname{Tr}\left[\rho\right] - \frac{1}{t}\inf_{Z} \left(  t \operatorname{Tr}\left[ \rho \left(\mathds{1}+Z^\dagger\right) \left(\mathds{1}+Z \right)\right]+ \operatorname{Tr}\left[\sigma Z Z^\dagger \right]\right) \right] \\
        &=\frac{1}{t} \left[ \operatorname{Tr}\left[\rho\right] - \inf_{Z} \left( \operatorname{Tr}\left[ \rho \left(\mathds{1}+Z^\dagger\right) \left(\mathds{1}+Z \right)\right]+ \frac{1}{t}\operatorname{Tr}\left[\sigma Z Z^\dagger \right]\right) \right]\; .
   \end{align}
\end{proof}
\subsubsection{Proofs of Theorem \ref{Theorem: App1Part1}, Theorem \ref{Theorem: App1Part2}, and Theorem \ref{Theorem: App2}}
For any $\beta \in (0,1)$, an $m$-point GJ/GRJ quadrature of an integral 
\begin{align}
    \int_{0}^{1}f(t)t^{\beta}dt  \;  ,
\end{align}
where $f(t)$ is some smooth function, generates nodes $t_1, \dots, t_{m}$ and weights $w_1,\dots, w_m$ such that~\cite[Eq.\ (8.9.6)]{Hildebrand56}
\begin{align}
    \int_{0}^{1}f(t)t^{\beta}dt = \sum_{i=1}^{m} w_{i} f(t_i) +E_{m} \; ,
\end{align}
where $E_{m}$ denotes the error term. For proving convergence of the quadratures, we use the approach from~\cite[Appendix A]{brown2023deviceindependent}
\TheoremApponePartone*

\begin{proof}
    By applying Lemma \ref{Lem: BoundingIntviaQuad} to Eq.\ (\ref{Eq: Approach1DoubleIntegral}) one directly finds that the corresponding m-point GJ quadratures satisfy
    \begin{align}
\operatorname{Tr}\left[\rho^{\alpha}\sigma^{1-\alpha}\right] &\geq  \frac{\sin (\alpha \pi)}{\pi}   \sum_{i=1}^{m} \left[ \sum_{\substack{j,k \ \mathrm{s.t.} \\ q_k,p_j >0}}w_i\frac{q_k \cdot p_j}{q_k+p_j t_i} \left| \braket{\psi_{j} | \phi_{k}}\right|^2  +  \sum_{\substack{j,k \ \mathrm{s.t.} \\ q_k,p_j >0}}w_i^\prime\frac{q_k \cdot p_j}{p_j+q_k t_i^\prime} \left| \braket{\psi_{j} | \phi_{k}}\right|^2 \right] \; .
    \end{align}
    The desired lower bounds then follow from Eq.\ (\ref{Eq: IncludingAllSumsApproach1Part1}) and Proposition \ref{Prop: Approach1Part1VarOpt}.
    To prove convergence, one has to show that for every $q_k,p_j>0$
    \begin{align}
        \lim_{m \to \infty }E_m &=  \lim_{m \to \infty } \int_{0}^{1}\frac{q_k \cdot p_j}{q_k + p_j t}t^{\alpha-1}dt - \sum_{i=1}^{m} w_i \frac{q_k \cdot p_j}{q_k + p_j t_i} = 0 \\
        \lim_{m \to \infty }E_m^{\prime} &=  \lim_{m \to \infty } \int_{0}^{1}\frac{q_k \cdot p_j}{q_k + p_j t}t^{-\alpha}dt - \sum_{i=1}^{m} w_i^{\prime} \frac{q_k \cdot p_j}{ p_j+ q_k t_i^{\prime}} = 0 \; .
    \end{align}
Since both $F_1 (t) :=\frac{q_k \cdot p_j}{q_k + p_j t}$ and $F_2 (t) :=\frac{q_k \cdot p_j}{p_j + q_k t}$ are continuous (for fixed $q_k,p_j >0$) on the bounded interval $t \in [0,1]$, these functions can be approximated by polynomials arbitrarily well.\footnote{This follows from the Weierstrass approximation theorem.} As such, for all $\epsilon > 0 $, there exist polynomials $P_{1}(t)$ and $P_{2}(t)$ of degree $n_1$ and $n_2$, respectively, such that
\begin{align} \label{Eq: Proof1PolApprox}
    \max_{t \in [0,1]} \left|F_{l}(t)-P_{l}(t) \right| < \epsilon \; ,
\end{align}
where $l \in \{1,2\}$. Then, for all $m \geq \max \frac{1}{2}\{n_1+1,n_2 +1\}$ and 
    \begin{align}
       \left| \int_{0}^{1}F_{1}(t)t^{\alpha-1}dt - \sum_{i=1}^{m} w_i F_{1}(t_i) \right| &= \left| \int_{0}^{1}\left(F_{1}(t)-P_{1}(t)\right)t^{\alpha-1}dt + \sum_{i=1}^{m} w_i \left(P_{1}(t_i)-F_{1}(t_i)\right) \right| \\
       &\leq \left| \int_{0}^{1}\left(F_{1}(t)-P_{1}(t)\right)t^{\alpha-1}dt  \right| + \left| \sum_{i=1}^{m} w_i \left(P_{1}(t_i)-F_{1}(t_i)\right) \right| \\
       &\leq  \epsilon \left| \int_{0}^{1}t^{\alpha-1}dt  \right| + \epsilon \left| \sum_{i=1}^{m} w_i  \right| \\
        &= \frac{2\epsilon}{\alpha} \; .
    \end{align}
The first line follows from the fact that m-point GJ quadratures are exact for polynomials of degree up to $2m-1$. The second line follows from the triangle inequality. The third line is due to Eq.~(\ref{Eq: Proof1PolApprox}), and the last line follows from 
\begin{align} \label{Eq: EqualityweightsApproximation}
    \int_{0}^{1}t^{\alpha-1}dt = \sum_{i=1}^{m} w_i = \frac{1}{\alpha}
\end{align}
    Similarly,
     \begin{align}
       \left| \int_{0}^{1}F_{2}(t)t^{-\alpha}dt - \sum_{i=1}^{m} w_i^\prime F_{2}(t_i^\prime) \right| \leq \frac{2\epsilon}{1-\alpha} \; .
    \end{align}
    For any fixed $\alpha \in (0,1)$, the quadrature thus converges as $m \to \infty$.
\end{proof}
\TheoremApponeParttwo*

\begin{proof}
    Applying Lemma \ref{Lem: BoundingIntviaQuad} to Eq.\ (\ref{Eq: Approach1Part2Integral}), one directly finds that the corresponding m-point GRJ quadratures satisfy
    \begin{align}
\operatorname{Tr}\left[\rho^{\alpha}\sigma^{1-\alpha}\right] &\leq  \frac{\sin (\left(\alpha -1\right)\pi)}{\pi}   \sum_{i=1}^{m} \left[ \sum_{\substack{j,k \ \mathrm{s.t.} \\ q_k,p_j >0}}w_i\frac{q_k^2}{q_k+p_j t_i} \left| \braket{\psi_{j} | \phi_{k}}\right|^2  +  \sum_{\substack{j,k \ \mathrm{s.t.} \\ q_k,p_j >0}}w_i^\prime\frac{q_k^2}{p_j+q_k t_i^\prime} \left| \braket{\psi_{j} | \phi_{k}}\right|^2 \right] \; .
    \end{align}
    From Remark \ref{Rem: CompletingSumApp1Part2} and Proposition \ref{Prop: Approach1Part2VarOpt}, it follows that for every $t_i,t_i^\prime >0$
    \begin{align}
         \sum_{\substack{j,k \ \mathrm{s.t.} \\ q_k,p_j >0}}w_i\frac{q_k^2}{q_k+p_j t_i} \left| \braket{\psi_{j} | \phi_{k}}\right|^2 &= w_i \left[ \operatorname{Tr}\left[\rho\right] - 
   \inf_{Z} \left(\operatorname{Tr}\left[\rho Z^\dagger Z\right]+t_i \operatorname{Tr}\left[\sigma \left(\mathds{1} + Z \right) \left(\mathds{1} + Z^\dagger \right)
       \right]\right) \right]\\
         \sum_{\substack{j,k \ \mathrm{s.t.} \\ q_k,p_j >0}}w_i^\prime\frac{q_k^2}{p_j+q_k t_i^\prime} \left| \braket{\psi_{j} | \phi_{k}}\right|^2 &= \frac{w_i^\prime}{t_i^\prime} \left[ \operatorname{Tr}\left[\rho\right] - \inf_{Z} \left(\operatorname{Tr}\left[\rho 
   \left(\mathds{1} + Z^\dagger \right) \left(\mathds{1} + Z \right)
       \right] + \frac{1}{t_i^\prime}\operatorname{Tr}\left[\sigma Z Z^\dagger \right]\right) \right] \; .
    \end{align}
    Moreover, by Remark \ref{Rem: EndNodeApp1Part2}, the contributions for $t_1,t_1^\prime =0$ are
    \begin{align}
         \sum_{\substack{j,k \ \mathrm{s.t.} \\ q_k,p_j >0}}w_1 q_k \left| \braket{\psi_{j} | \phi_{k}}\right|^2 &= w_1  \operatorname{Tr}\left[\rho\right] \\
         \sum_{\substack{j,k \ \mathrm{s.t.} \\ q_k,p_j >0}}w_1^\prime\frac{q_k^2}{p_j} \left| \braket{\psi_{j} | \phi_{k}}\right|^2 &= - w_1^\prime\inf_{Z} \left(\operatorname{Tr}\left[\rho\left( Z^\dagger  + Z \right)\right] + \operatorname{Tr}\left[\sigma Z Z^\dagger\right]\right) \; .
    \end{align}
    The desired bound then follows from the fact that\footnote{This equality holds, for the same reason as Eq.\ (\ref{Eq: EqualityweightsApproximation}).}
    \begin{align}
        \sum_{i=1}^{m}w_i \operatorname{Tr}\left[\rho\right] = \frac{1}{\alpha-1} \operatorname{Tr}\left[\rho\right]
    \end{align}
    and $\sin (\alpha\pi) = -\sin (\left(\alpha -1\right)\pi)$.
  To prove convergence, one has to show that, for every $q_k,p_j>0$,
    \begin{align}
        \lim_{m \to \infty }E_m &=  \lim_{m \to \infty } \int_{0}^{1}\frac{q_k^2}{q_k + p_j t}t^{\alpha-2}dt - \sum_{i=1}^{m} w_i \frac{q_k^2}{q_k + p_j t_i} = 0 \\
        \lim_{m \to \infty }E_m^{\prime} &=  \lim_{m \to \infty } \int_{0}^{1}\frac{q_k^2}{q_k + p_j t}t^{1-\alpha}dt - \sum_{i=1}^{m} w_i^{\prime} \frac{q_k^2}{ p_j+ q_k t_i^{\prime}} = 0 \; .
    \end{align}
Since both $F_3 (t) :=\frac{q_k^2}{q_k + p_j t}$ and $F_4 (t) :=\frac{q_k^2}{p_j + q_k t}$ are continuous (for fixed $q_k,p_j >0$) on the bounded interval $t \in [0,1]$, these functions can again be approximated by polynomials arbitrarily well. As such, for all $\epsilon > 0 $, there exist polynomials $P_{3}(t)$ and $P_{4}(t)$ of degree $n_3$ and $n_4$, respectively, such that
\begin{align} \label{Eq: Proof2PolApprox}
    \max_{t \in [0,1]} \left|F_{l}(t)-P_{l}(t) \right| < \epsilon \; ,
\end{align}
where $l \in \{3,4\}$. Then, for all $m \geq \max \frac{1}{2}\{n_3+2,n_4+2 \}$ and 
    \begin{align}
       \left| \int_{0}^{1}F_{3}(t)t^{\alpha-2}dt - \sum_{i=1}^{m} w_i F_{3}(t_i) \right| &= \left| \int_{0}^{1}\left(F_{3}(t)-P_{3}(t)\right)t^{\alpha-2}dt + \sum_{i=1}^{m} w_i \left(P_{3}(t_i)-F_{3}(t_i)\right) \right| \\
       &\leq \left| \int_{0}^{1}\left(F_{3}(t)-P_{3}(t)\right)t^{\alpha-2}dt  \right| + \left| \sum_{i=1}^{m} w_i \left(P_{3}(t_i)-F_{3}(t_i)\right) \right| \\
       &\leq  \epsilon \left| \int_{0}^{1}t^{\alpha-2}dt  \right| + \epsilon \left| \sum_{i=1}^{m} w_i  \right| \\
        &= \frac{2\epsilon}{\alpha-1} \; .
    \end{align}
The first line follows from the fact that m-point GRJ quadratures are exact for polynomials of degree up to $2m-2$. The second line follows from the triangle inequality. The third line is due to Eq.~(\ref{Eq: Proof2PolApprox}), and the last line follows from 
\begin{align}
    \int_{0}^{1}t^{\alpha-2}dt = \sum_{i=1}^{m} w_i = \frac{1}{\alpha-1}
\end{align}
    Similarly,
     \begin{align}
       \left| \int_{0}^{1}f_{2}(t)t^{1-\alpha}dt - \sum_{i=1}^{m} w_i^\prime f_{2}(t_i^\prime) \right| \leq \frac{2\epsilon}{2-\alpha} \; .
    \end{align}
    For any fixed $\alpha \in (1,2)$, the quadrature thus converges as $m \to \infty$.
\end{proof}
\TheoremApptwo*
\begin{proof}
    Applying Proposition \ref{Prop: App2QuadBounds} to Eq.\ (\ref{Eq: App2Integral}), one directly finds that the  m-point GRJ quadrature (with end node $t_m=1$)  satisfies
    \begin{align}
\operatorname{Tr}\left[\rho^{\alpha}\sigma^{1-\alpha}\right] &\geq \operatorname{Tr}\left[\rho\right] + \sum_{i=1}^{m} w_{i}\sum_{\substack{j,k \ \mathrm{s.t.} \\ q_k,p_j >0}}  q_k \left[ \frac{\sin \left(\alpha\pi\right)}{\pi }\frac{p_j-q_k}{t_{i}\left(p_j-q_k\right) +q_k} \right] \left| \braket{\psi_{j} | \phi_{k}}\right|^2 
    \end{align}
    for $\alpha \in (0,1)$. The reverse inequality holds for $\alpha \in (1,2)$. The desired bounds then follow from  Eq.~(\ref{Eq: App2CompletionSum}) and Proposition \ref{Prop: Approach2VarOpt}.  To prove convergence, one has to show that, for every $q_k,p_j>0$,
    \begin{align}
        \lim_{m \to \infty }E_m &=  \lim_{m \to \infty } \int_{0}^{1} q_k  \frac{p_j-q_k}{t\left(p_j-q_k\right) +q_k}t^{\alpha-1}\left(1-t\right)^{1-\alpha}dt - \sum_{i=1}^{m} w_i q_k  \frac{p_j-q_k}{t_{i}\left(p_j-q_k\right) +q_k} = 0 \; . \label{Eq: ProofApp2QuadConvergence}
    \end{align}
It follows from~\cite[Theorem 1]{faust2023rational} that, for any $x>0$,
\begin{align}
    \lim_{m \to \infty } \int_{0}^{1}   \frac{x-1}{t\left(x-1\right) +1}t^{\alpha-1}\left(1-t\right)^{1-\alpha}dt - \sum_{i=1}^{m} w_i \frac{x-1}{t_{i}\left(x-1\right) +q_k} = 0 \;  . \label{Eq: ProofFaustQuadConvergence}
\end{align}
Setting $x = \frac{p_j}{q_k}$ and multiplying Eq.\ (\ref{Eq: ProofFaustQuadConvergence}) with $q_k$ yields the desired result.
\end{proof}

\subsubsection{Proofs of Proposition \ref{Prop:FirstNonComOpt} and Proposition \ref{Prop: SecondNonComOpt}}
These proofs will closely follow the proof-structure from~\cite[Lemma~2.3]{brown2023deviceindependent}. In particular they use the fact that for any two (potentially non-normalized) classical-quantum states ${\rho_{AE}=\sum_{a}\ketbra{a}{a} \otimes \rho_{E|A=a}}$ and ${\rho_{AE}^{\prime}=\sum_{a}\ketbra{a}{a} \otimes \rho_{E|A=a}^{\prime}}$, the optimization problem 
\begin{align}
    \inf_{Z}  \frac{1}{t}\operatorname{Tr}\left[\rho_{AE} \left( \mathds{1} + Z+Z^\dagger + \left(1-t\right)Z^{ \dagger} Z\right)\right]+\operatorname{Tr}\left[\rho_{AE}^{\prime} Z Z^\dagger\right]
\end{align}
is equivalent to 
\begin{align} \label{Eq: ProofReducingActionofZ}
    \inf_{Z_{a}}\sum_{a}  \frac{1}{t}\operatorname{Tr}\left[\rho_{E|A=a} \left( \mathds{1} + Z_{a}+Z_{a}^\dagger + \left(1-t\right)Z_{a}^{ \dagger} Z_{a}\right)\right]+\operatorname{Tr}\left[\rho_{E|A=a}^{\prime} Z_{a} Z_{a}^\dagger\right]
\end{align}
for all $t\in \left(0,1\right]$.
\PropFirstNonComOpt*
\begin{proof}
    By definition,
    \begin{align}
        \mathcal{M}(\sigma)_{AXYTE \land \bar{c}}& = \sum_{x ,y,t} \operatorname{P}\left(X=x,Y=y,T=t\right) \ketbra{x,y,t}{x,y,t} \otimes \mathcal{M}(\sigma)_{AE \land \bar{c}|X=x,Y=y,T=t} \\
        \mathcal{M}(\sigma)_{XYTE}& = \sum_{x ,y,t} \operatorname{P}\left(X=x,Y=y,T=t\right) \ketbra{x,y,t}{x,y,t} \otimes \mathcal{M}(\sigma)_{E |X=x,Y=y,T=t} \; .
    \end{align}
   For any $x,y,t$ such that $\operatorname{P}\left(\bar{C} =\bar{c}|X=x,Y=y,T=t\right)=0$, $\mathcal{M}(\sigma)_{AE\land \bar{c} |X=x,Y=y,T=t}=0$. As such, those terms can be naturally be neglected. Therefore,
    \begin{align}
        \mathcal{M}(\sigma)_{AXYTE \land \bar{c}}^{\alpha}& = \sum_{\substack{x ,y,t \ \mathrm{s.t.} \\\operatorname{P}\left(\bar{c}|x,y,t\right)>0}} \operatorname{P}\left(X=x,Y=y,T=t\right)^{\alpha} \ketbra{x,y,t}{x,y,t} \otimes \mathcal{M}(\sigma)_{AE \land \bar{c}|X=x,Y=y,T=t}^{\alpha} \\
        \mathcal{M}(\sigma)_{XYTE}^{1-\alpha}& = \sum_{x ,y,t } \operatorname{P}\left(X=x,Y=y,T=t\right)^{1-\alpha} \ketbra{x,y,t}{x,y,t} \otimes \mathcal{M}(\sigma)_{E |X=x,Y=y,T=t}^{1-\alpha} \; .
    \end{align}
    Eq.\ (\ref{Eq: ExtendedPTerm}) then directly follows, by using the fact that Eve's inital quantum side information is independent of the classical registers $X,Y,T$, i.e.
    \begin{align}
        \mathcal{M}(\sigma)_{E |X=x,Y=y,T=t} = \mathcal{M}(\sigma)_{E} \; ,
    \end{align}
    and that $\braket{x,y,t|x^\prime,y^\prime, t^\prime}=\delta_{x,x^\prime} \delta_{y,y^\prime} \delta_{t,t^\prime}$, where $\delta_{i,j}$ refers to the Kronecker delta function.
Lemma \ref{Lemma:FirstNonComOpt} can naturally be extended to bound
    \begin{align} \label{Eq: ConditionalTraceDefn}
    \operatorname{Tr}\left[\mathcal{M}(\sigma)_{AE \land \bar{c}|X=x,Y=y,T=t}^{\alpha} \cdot \mathds{1}_{A} \otimes \mathcal{M}(\sigma)_{E}^{1-\alpha} \right] \; .
    \end{align}
As such, there exist nodes, $t_1,\dots, t_{m} \in \left(0,1\right]$, and weights, $w_1,\dots, w_{m} >0$, such that Eq.\ (\ref{Eq: ConditionalTraceDefn}) is upper bounded by
    \begin{align} \label{Eq: ProofIntermediateVarOptProb}
\sup_{Z_{i}^{x,y,t}}  \operatorname{Tr}\left[\mathcal{M}(\sigma)_{AE \land \bar{c}|X=x,Y=y,T=t}P(Z_{1}^{x,y,t},\dots)\right]+\operatorname{Tr}\left[\mathds{1}_{A}\otimes \mathcal{M}(\sigma)_{E} Q(Z_{1}^{x,y,t},\dots)\right]\; ,
    \end{align}
where 
\begin{align}
    P(Z_{1}^{x,y,t},Z_{2}^{x,y,t},\dots) &:= 1 + \frac{\sin(\alpha \pi )}{\pi}\sum_{i=1}^{m} \frac{w_{i}}{t_i} \left( \mathds{1} + Z_{i}^{x,y,t}+Z_{i}^{x,y,t \dagger} + \left(1-t_i\right)Z_{i}^{x,y,t \dagger} Z_{i}^{x,y,t}\right)\\
    Q(Z_{1}^{x,y,t},Z_{2}^{x,y,t},\dots) &:= \frac{\sin(\alpha \pi )}{\pi}\sum_{i=1}^{m} w_i Z_{i}^{x,y,t}Z_{i}^{x,y,t \dagger} \; .
\end{align}
Currently, each $Z_{i}^{x,y,t}$ acts on the joint Hilbert space $\mathcal{H}_{AE}$. However, analogous to~\cite[Lemma~2.3]{brown2023deviceindependent}, we may reduce their actions to Eve's quantum register, using Eq.\ (\ref{Eq: ProofReducingActionofZ}). That is, there exists a family, $\{Z_{a,i}^{x,y,t}\}$, such that Eq.\ (\ref{Eq: ProofIntermediateVarOptProb}) is equivalent to 
\begin{align} 
\sup_{Z_{a,i}^{x,y,t}} \sum_{a} \operatorname{Tr}\left[\mathcal{M}(\sigma)_{E \land \bar{c}|A=a,X=x,Y=y,T=t}P(Z_{a,1}^{x,y,t},\dots)\right]+\operatorname{Tr}\left[\mathcal{M}(\sigma)_{E} Q(Z_{a,1}^{x,y,t},\dots)\right]\; .
    \end{align}
Eq.\ (\ref{Eq: XYTConditionalPUpperBounds}) then directly follows after using both
    \begin{align}
       \mathcal{M}(\sigma)_{E \land \bar{c}|A=a, X=x,Y=y,T=t} &=  \sum_{\substack{b \ \mathrm{s.t.} \\ \bar{c} = f(a,b,x,y,t)}} \operatorname{Tr}_{Q_AQ_B}\left[\left(M_{a|x} \otimes N_{b|y} \otimes \mathds{1}_{E} \right) \sigma_{Q_AQ_BE}\right] \\
       \mathcal{M}(\sigma)_{E} &=  \operatorname{Tr}_{Q_AQ_B}\left[\left(\mathds{1}_{Q_AQ_B} \otimes \mathds{1}_{E} \right) \sigma_{Q_AQ_BE}\right] \; ,
    \end{align}
    and the identity~\cite[Eq.~3.2.19]{khatri2024principles}
    \begin{align}       \operatorname{Tr}_{Q_A Q_B}\left[ X_{Q_AQ_BE} \left(\mathds{1}_{Q_AQ_B} \otimes Z\right) \right] = \operatorname{Tr}_{Q_A Q_B}\left[ X_{Q_AQ_BE} \right] Z \; .
    \end{align}
\end{proof}
\PropSecondNonComOpt*
\begin{proof}
    Eq.\ (\ref{Eq: EntropyCondOnClassicalStates}) is a direct consequence of the relation
    \begin{align}
        \mathcal{M}(\sigma)_{E |X=x,T=0} = \mathcal{M}(\sigma)_{E} 
    \end{align}
    and~\cite[Proposition~5.1]{Tomamichel2015QuantumIP}, which decomposes the conditional entropy into the desired sum. Analogous to the upper bounds on Eq.\ (\ref{Eq: ConditionalTraceDefn}), one may again extend Lemma~\ref{Lemma:FirstNonComOpt} to generate nodes, $t_1,\dots, t_{m} \in \left(0,1\right]$, and weights, $w_1,\dots, w_{m} >0$, such that $\operatorname{Tr}\left[\mathcal{M}(\sigma)_{AE|X=x}^{\alpha} \cdot \mathds{1}_{A} \otimes \mathcal{M}(\sigma)_{E}^{1-\alpha} \right]$ is upper bounded by
    \begin{align} \label{Eq: ProofIntermediateVarOptProb2}
\sup_{Z_{i}^{x}}  \operatorname{Tr}\left[\mathcal{M}(\sigma)_{AE |X=x}P(Z_{1}^{x},Z_{2}^{x},\dots)\right]+\operatorname{Tr}\left[\mathds{1}_{A}\otimes \mathcal{M}(\sigma)_{E} Q(Z_{1}^{x},Z_{2}^{x},\dots)\right]\; ,
    \end{align}
    where 
\begin{align}
    P(Z_{1}^{x},Z_{2}^{x},\dots) &:= 1 + \frac{\sin(\alpha \pi )}{\pi}\sum_{i=1}^{m} \frac{w_{i}}{t_i} \left( \mathds{1} + Z_{i}^{x}+Z_{i}^{x \dagger} + \left(1-t_i\right)Z_{i}^{x \dagger} Z_{i}^{x}\right)\\
    Q(Z_{1}^{x},Z_{2}^{x},\dots) &:= \frac{\sin(\alpha \pi )}{\pi}\sum_{i=1}^{m} w_i Z_{i}^{x}Z_{i}^{x \dagger} \; .
\end{align}
To reduce the action of $Z_{i}^{x}$ which currently acts on the joint Hilbert space $\mathcal{H}_{AE}$, we again use Eq.\ (\ref{Eq: ProofReducingActionofZ}). This generates a family of operators, $\{Z_{a,i}^{x,y,t}\}$, such that Eq.\ (\ref{Eq: ProofIntermediateVarOptProb2}) is equivalent to 
\begin{align} 
\sup_{Z_{a,i}^{x}} \sum_{a} \operatorname{Tr}\left[\mathcal{M}(\sigma)_{E |A=a,X=x}P(Z_{a,1}^{x},Z_{a,2}^{x},\dots)\right]+\operatorname{Tr}\left[\mathcal{M}(\sigma)_{E} Q(Z_{a,1}^{x},Z_{a,2}^{x},\dots)\right]\; .
    \end{align}
Eq.\ (\ref{Eq: XConditionalPUpperBounds2}) then follows, using both
    \begin{align}
       \mathcal{M}(\sigma)_{E|A=a, X=x} &= \operatorname{Tr}_{Q_AQ_B}\left[\left(M_{a|x} \otimes \mathds{1}_{Q_B} \otimes \mathds{1}_{E} \right) \sigma_{Q_AQ_BE}\right] \\
       \mathcal{M}(\sigma)_{E} &=  \operatorname{Tr}_{Q_AQ_B}\left[\left(\mathds{1}_{Q_AQ_B} \otimes \mathds{1}_{E} \right) \sigma_{Q_AQ_BE}\right] \; ,
    \end{align}
    and the identity~\cite[Eq.~3.2.19]{khatri2024principles}
    \begin{align}       \operatorname{Tr}_{Q_A Q_B}\left[ X_{Q_AQ_BE} \left(\mathds{1}_{Q_AQ_B} \otimes Z\right) \right] = \operatorname{Tr}_{Q_A Q_B}\left[ X_{Q_AQ_BE} \right] Z \; .
    \end{align}
\end{proof}

\subsubsection{Proofs of Theorem \ref{FirstConvexOptBound} and Theorem \ref{Theorem: SecondConvexOptBound}}
Both Theorem \ref{FirstConvexOptBound} and Theorem \ref{Theorem: SecondConvexOptBound} convert an optimization problem into a \textit{convex} problem, using the NPA hierarchy from~\cite{Pironio_2010}.
We will consider a family of expressions of the form 
\begin{align}
    \bra{\psi}p_{j}\left( X_1,X_2,\dots \right)\ket{\psi} \; ,
\end{align}
where $\ket{\psi}$ is a normalized, pure state on some Hilbert space, $H$, and $p_j$ are polynomials of operators, $X_1,X_2,\dots$, where one requires these operators to satisfy multiple constraints of the form
\begin{align}
    q_{k}\left( X_1,X_2,\dots \right) \succeq 0\; ,
\end{align}
where each $q_{k}$ is again a polynomial.
\begin{rem}
    If one requires a constraint of the form 
    \begin{align}
    q\left( X_1,X_2,\dots \right) = 0\; ,
\end{align}
this is not an issue. This can be trivially enforced, using the constraints
\begin{align}
    q\left( X_1,X_2,\dots \right) &\succeq 0 \\
    -q\left( X_1,X_2,\dots \right) &\succeq 0 \; .
\end{align}
For more details regarding such constraints, see~\cite[Section~3.5]{Pironio_2010}.
\end{rem}
Any given NPA relaxation will generate a family of variables $\{ y_i\}_{i \in N}$ (for some value $N$) and a matrix $M(y)$, with entries that are linear in $y_i$. A key feature of such a relaxation is that for any state $\ket{\psi}$ and operators $X_1,X_2,\dots$, which satisfy the relevant constraints, there exists a family of linear functions
\begin{align} \label{Eq: NPA1}
    f_j(y_1,y_2,\dots ):= \sum_{i=1}^{N}a_{ij}y_i \; ,
\end{align}
such that there are values of $\{ y_i\}_{i \in N}$ that both satisfy
\begin{align} \label{Eq: NPA2}
    \bra{\psi}p_{j}\left( X_1,X_2,\dots \right)\ket{\psi} = f_j(y_1,y_2,\dots ) \; ,
\end{align}
for all $j$, as well as the constraints 
\begin{align}
    y_1 &= 1 \label{Eq: NPA3}\\
    M(y) &\succeq 0 \; . \label{Eq: NPA4}
\end{align}
\FirstConvexOptBound*
\begin{proof}
    For any pure state $\sigma = \ketbra{\psi}{\psi}$, Proposition \ref{Prop:FirstNonComOpt} upper bounds each $p(\bar{c})$ by
\begin{align}
    p(\bar{c}) \leq \sup_{ Z_{\bar{c},a,i}^{x,y,t}}\bra{\psi}\sum_{\substack{a,x,y,t \ \mathrm{s.t.} \\ \operatorname{P}\left(a,\bar{c},x,y,t\right)>0}} \operatorname{P}\left(X=x,Y=y,T=t\right)p_{\bar{c},a,x,y,t}\left( Z_{\bar{c},a,1}^{x,y,t},Z_{\bar{c},a,2}^{x,y,t},\dots \right)\ket{\psi} \; ,
\end{align}
where 
\begin{align} \label{Eq: ProofIntermediateTensorproducts}
    p_{\bar{c},a,x,y,t}\left( Z_{\bar{c},a,1}^{x,y,t},Z_{\bar{c},a,2}^{x,y,t},\dots \right) :=  \sum_{\substack{b \ \mathrm{s.t.} \\ \bar{c} = f(a,b,x,y,t)}} M_{a|x} \otimes N_{b|y} \otimes P(Z_{\bar{c},a,1}^{x,y,t},\dots) + \mathds{1}_{Q_AQ_B} \otimes Q(Z_{\bar{c},a,1}^{x,y,t},\dots) \; ,
\end{align}
and the extra index $\bar{c}$ on the operators $Z_{\bar{c},a,i}^{x,y,t}$, indicates the fact that for each different value of $\bar{c}$, one is generating a new family of such operators. 

One way to generate lower bounds on Eq.\ (\ref{Eq: OptProbFirstConvexOptBound}), is by relaxing the tensor product constraints seen in Eq.\ (\ref{Eq: ProofIntermediateTensorproducts}) to (more general) commutation relations, i.e.\ one considers instead
   \begin{align} 
    p_{\bar{c},a,x,y,t}^{\prime}\left( Z_{\bar{c},a,1}^{x,y,t},Z_{\bar{c},a,2}^{x,y,t},\dots \right) :=  \sum_{\substack{b \ \mathrm{s.t.} \\ \bar{c} = f(a,b,x,y,t)}} M_{a|x} \cdot N_{b|y} \cdot P(Z_{\bar{c},a,1}^{x,y,t},\dots) + Q(Z_{\bar{c},a,1}^{x,y,t},\dots) \; ,
\end{align} 
and includes the constraints
\begin{align}
    \left[M_{a|x}, N_{b|y}\right] = \left[M_{a^\prime|x^\prime}, Z_{\bar{c},a,i}^{x,y,t (\dagger)}\right] = \left[N_{b^\prime|y^\prime}, Z_{\bar{c},a,i}^{x,y,t (\dagger)}\right] = 0
\end{align}
for all relevant $a,a^\prime, b,b^\prime, x,x^\prime, y,y^\prime, t$, and $i$. If one now includes to the remaining additional constraints that $\{M_{a|x}\}$ and $\{N_{b|y}\}$ must both be POVM's, one finds that 
    \begin{equation}
\begin{aligned} \label{Eq: ProofIntermediateOptimizationProblemCommCond}
 \inf_{\{M_{a|x}\}, \{N_{b|y}\}}\inf_{q} \inf_{\ket{\psi}} \inf_{ Z_{\bar{c},a,i}^{x,y,t}} \quad &\frac{1}{\alpha-1}\sum_{\bar{c}}q\left(\bar{c} \right) \log \left( \frac{q\left(\bar{c} \right)}{\bra{\psi} \sum_{\substack{a,x,y,t \ \mathrm{s.t.} \\ \operatorname{P}\left(a,\bar{c},x,y,t\right)>0}} \operatorname{P}\left(x,y,t\right)p_{\bar{c},a,x,y,t}^{\prime}\left( Z_{\bar{c},a,1}^{x,y,t},\dots \right)\ket{\psi}} \right)  \\
\textrm{s.t.} \quad & \sum_{a} M_{a|x} -  1 = \sum_{b} N_{b|y} -  1 = 0 \\
&  M_{a|x} \succeq 0 \\
&  N_{b|y} \succeq 0 \\
&  \left[M_{a|x}, N_{b|y}\right] = \left[M_{a^\prime|x^\prime}, Z_{\bar{c},a,i}^{x,y,t (\dagger)}\right] = \left[N_{b^\prime|y^\prime}, Z_{\bar{c},a,i}^{x,y,t (\dagger)}\right] = 0 \\
& q \in S_\Omega 
\end{aligned}
\end{equation}
    is a lower bound on Eq.\ (\ref{Eq: OptProbFirstConvexOptBound}).
\begin{rem}
    In principle, one may w.l.o.g.\ assume that $\{M_{a|x}\}$ and $\{N_{b|y}\}$ are projectors. This constraint can be added as follows:
    \begin{align}
        M_{a|x}^2 - M_{a|x} & = 0\\
        N_{b|y}^2 - N_{b|y} & = 0 \; .
    \end{align}
\end{rem}
Let $\{ y_i\}_{i \in N}$, $M(y)$ be the variables and matrix generated by a given NPA relaxation. Then, by Eqs.\ (\ref{Eq: NPA1})--(\ref{Eq: NPA4}), for every value of $\bar{c}$, one generates a linear function 
   \begin{align} 
    f_{\bar{c}}(y_1,y_2,\dots ):= \sum_{i=1}^{N}a_{i,\bar{c}}y_i \; .
\end{align} 
Moreover, for every feasible point in the optimization problem described in Eq.\ (\ref{Eq: ProofIntermediateOptimizationProblemCommCond}), there exist specific values of $\{ y_i\}_{i \in N}$ that satisfy 
\begin{align}
    f_{\bar{c}}(y_1,y_2,\dots ) = \bra{\psi} \sum_{\substack{a,x,y,t \ \mathrm{s.t.} \\ \operatorname{P}\left(a,\bar{c},x,y,t\right)>0}} \operatorname{P}\left(X=x,Y=y,T=t\right)p_{\bar{c},a,x,y,t}^{\prime}\left( Z_{\bar{c},a,1}^{x,y,t}, Z_{\bar{c},a,2}^{x,y,t},\dots \right)\ket{\psi} 
\end{align}
for all $\bar{c}$, as well as
\begin{align} 
y_1 &= 1 \label{Eq: OurNPARel1}\\
 M(y) &\succeq 0 \; . \label{Eq: OurNPARel2}
\end{align}
As such, replacing $\bra{\psi} \sum_{\substack{a,x,y,t \ \mathrm{s.t.} \\ \operatorname{P}\left(a,\bar{c},x,y,t\right)>0}} \operatorname{P}\left(x,y,t\right)p_{\bar{c},a,x,y,t}^{\prime}\left( Z_{\bar{c},a,1}^{x,y,t}, Z_{\bar{c},a,2}^{x,y,t},\dots \right)\ket{\psi} $ with $f_{\bar{c}}(y_1,y_2,\dots )$, and subsequently minimizing over all $\{ y_i\}_{i \in N}$, which satisfy both Eq.\ (\ref{Eq: OurNPARel1}) and Eq.\ (\ref{Eq: OurNPARel2}), will yield a lower bound on Eq.\ (\ref{Eq: ProofIntermediateOptimizationProblemCommCond}). The resulting bound is described by the optimization problem
        \begin{equation}
\begin{aligned}
\inf_{q , \{y_i\} } \quad &\frac{1}{\alpha-1}D\left(q \middle\Vert f\right) \\
\textrm{s.t.} \quad & y_1 = 1 \\
& M(y) \succeq 0 \\
& q \in S_\Omega \; .
\end{aligned}
\end{equation}
\end{proof}
\SecondConvexOptBound*
\begin{proof}
    For any value $\bar{C} = \bar{c}$ and pure state $\sigma = \ketbra{\psi}{\psi}$, one has that
    \begin{align} \label{Eq: ProofTestOutputDistribution}
     \mathcal{M}(\sigma)_{\bar{C}=\bar{c}} = \bra{\psi} \sum_{\substack{a,b,x,y,t \ \mathrm{s.t.} \\ \bar{c} = f(a,b,x,y,t)}}  \operatorname{P}\left(X=x,Y=y,T=t\right) M_{a|x} \otimes N_{b|y} \otimes \mathds{1}_{E}\ket{\psi} \; .
    \end{align}
    Moreover, by Proposition \ref{Prop: SecondNonComOpt},
    \begin{align} \label{Eq: ProofCondEntBound}
        H_{\alpha}\left( A|T=0,XE\right)_{\mathcal{M}(\sigma)} \geq  \inf_{ Z_{a,i}^{x}} \frac{1}{1-\alpha}\log_{2}\left(\bra{\psi} \sum_{x \in \mathcal{X}_{K}} \sum_{a \in A_{x}} \operatorname{P}\left(X=x|T=0\right) p_{a,x} \left( Z_{a,1}^{x},Z_{a,2}^{x},\dots\right) \ket{\psi}
     \right) \; ,
    \end{align}
    where 
 \begin{align} 
p_{a,x} \left( Z_{a,1}^{x},Z_{a,2}^{x},\dots\right)=   M_{a|x} \otimes \mathds{1}_{Q_B} \otimes P(Z_{a,1}^{x},Z_{a,2}^{x},\dots) + \mathds{1}_{Q_AQ_B} \otimes Q(Z_{a,1}^{x},Z_{a,2}^{x},\dots)\; ,
\end{align}
and 
\begin{align}
    P(Z_{a,1}^{x},Z_{a,2}^{x},\dots) &:= 1 +\frac{\sin(\alpha \pi )}{\pi}\sum_{i=1}^{m} \frac{w_{i}}{t_{i}} \left(\mathds{1} +Z_{a,i}^{x}+Z_{a,i}^{x \dagger} + \left(1-t_i\right)Z_{a,i}^{x \dagger} Z_{a,i}^{x}\right)\\
    Q(Z_{a,1}^{x},Z_{a,2}^{x},\dots) &:= \frac{\sin(\alpha \pi )}{\pi}\sum_{i=1}^{m} w_i Z_{a,i}^{x}Z_{a,i}^{x \dagger} \; .
\end{align}
One potential lower bound on Eq.\ (\ref{Eq: OptProbSecondConvexOptBound}) is achieved by relaxing the implicit tensor product constraints seen in both Eq.\ (\ref{Eq: ProofTestOutputDistribution}) and Eq.\ (\ref{Eq: ProofCondEntBound}) to instead requiring the commutation relations
    \begin{align}
    \left[M_{a|x}, N_{b|y}\right] = \left[M_{a^\prime|x^\prime}, Z_{a,i}^{x(\dagger)}\right] = \left[N_{b^\prime|y^\prime}, Z_{a,i}^{x (\dagger)}\right] = 0
\end{align}
for all relevant $a,a^\prime, b,b^\prime, x,x^\prime$, and $i$, which are more general constraints. Let $\{ y_i\}_{i \in N}$, $M(y)$ be the variables and matrix generated by a given NPA relaxation. Then, by Eqs.\ (\ref{Eq: NPA1})--(\ref{Eq: NPA4}), one can generate a linear function 
 \begin{align} 
    g(y_1,y_2,\dots ):= \sum_{i=1}^{N}a_{i,\bar{c}}y_i \; ,
\end{align} 
as well as an extra linear function 
   \begin{align} 
    f_{\bar{c}}(y_1,y_2,\dots ):= \sum_{i=1}^{N}a_{i,\bar{c}}y_i 
\end{align} 
for every value of $\bar{c}$.
Moreover, for every feasible point in the relaxed optimization problem, there exist specific values of $\{ y_i\}_{i \in N}$ that satisfy 
\begin{align}
   g(y_1,y_2,\dots ) = \bra{\psi} \sum_{x \in \mathcal{X}_{K}} \sum_{a \in A_{x}} \operatorname{P}\left(X=x|T=0\right) p^{\prime}_{a,x} \left( Z_{a,1}^{x},Z_{a,2}^{x},\dots\right) \ket{\psi} \; ,
\end{align}
where 
\begin{align} 
p^{\prime}_{a,x} \left( Z_{a,1}^{x},Z_{a,2}^{x},\dots\right)=   M_{a|x} \cdot P(Z_{a,1}^{x},Z_{a,2}^{x},\dots)  + Q(Z_{a,1}^{x},Z_{a,2}^{x},\dots)\; ,
\end{align}
and
\begin{align}
    f_{\bar{c}}(y_1,y_2,\dots ) = \bra{\psi} \sum_{\substack{a,b,x,y,t \ \mathrm{s.t.} \\ \bar{c} = f(a,b,x,y,t)}}  \operatorname{P}\left(X=x,Y=y,T=t\right) M_{a|x} \cdot N_{b|y}\ket{\psi} 
\end{align}
for all $\bar{c}$, as well as
\begin{align} 
y_1 &= 1 \\
 M(y) &\succeq 0 \; . 
\end{align}
As such, a lower bound on Eq.\ (\ref{Eq: OptProbSecondConvexOptBound}) is given by the optimization problem
        \begin{equation}
\begin{aligned}
\inf_{q , \{y_i\} } \quad &\frac{1}{\alpha-1}D\left(q \middle\Vert f\right) + \frac{q_{min}}{1-\alpha}\log_{2}\left(g\right) \\
\textrm{s.t.} \quad & y_1 = 1 \\
& M(y) \succeq 0 \\
& q \in S_\Omega \; .
\end{aligned}
\end{equation}

\end{proof}

\pagebreak
\printbibliography


\end{document}